%KD: 31st August
\documentclass{llncs}

\usepackage{microtype}
\usepackage{enumerate}
\usepackage{boxedminipage}
\usepackage{tikz,tkz-graph}
\usepackage{hyperref}
\usepackage{amssymb,amsmath}
\pagestyle{plain}

\spnewtheorem{oproblem}{Open Problem}{\bfseries}{\itshape}
\spnewtheorem{observation}{Observation}{\bfseries}{\itshape}

\newcounter{ctrclaim}[theorem]
\newcounter{ctrcase}[theorem]
\newcounter{ctrsubcase}[ctrcase]

\newcommand\displaycase[1]{{\bfseries#1}}

\newcommand{\clm}[1]{\medskip\phantomsection\refstepcounter{ctrclaim}\noindent\displaycase{Claim \thectrclaim. }{\itshape #1}\\}
\newcommand{\thmcase}[1]{\medskip\phantomsection\refstepcounter{ctrcase}\noindent\displaycase{Case \thectrcase. }{\itshape #1}\\}

\newcommand{\GI}{{\sf GI}}
\newcommand{\NP}{{\sf NP}}
\newcommand{\FPT}{{\sf FPT}}
\newcommand{\XP}{{\sf XP}}
\newcommand{\dia}{\hfill{$\diamond$}}
\DeclareMathOperator{\claw}{claw}
\DeclareMathOperator{\diamondgraph}{diamond}
\DeclareMathOperator{\gem}{gem}
\DeclareMathOperator{\paw}{paw}
\DeclareMathOperator{\crossedhouse}{crossed\ house}
\DeclareMathOperator{\cw}{cw}

\newcommand{\problemdef}[3]{
        \begin{center}
                \begin{boxedminipage}{.99\textwidth}
                        \textsc{{#1}}\\[2pt]
                        \begin{tabular}{ r p{0.8\textwidth}}
                                \textit{~~~~Instance:} & {#2}\\
                                \textit{Question:} & {#3}
                        \end{tabular}
                \end{boxedminipage}
        \end{center}
}

\newcommand{\ssi}{\subseteq_i}
\newcommand{\si}{\supseteq_i}

\bibliographystyle{abbrv}

\title{Graph Isomorphism for $(H_1,H_2)$-Free Graphs: An~Almost Complete Dichotomy\thanks{Research supported by the London Mathematical Society (SC7-1718-04), ANR project HOSIGRA (ANR-17-CE40-0022), EPSRC (EP/K025090/1) and the Leverhulme Trust (RPG-2016-258).
An extended abstract of this paper appeared in the proceedings of WADS 2019~\cite{BDJP18}.
}}

\author{Marthe~Bonamy\inst{1}
\and
Nicolas~Bousquet\inst{2}
\and
Konrad~K.~Dabrowski\inst{3}
\and
Matthew~Johnson\inst{3}
\and
Dani\"el~Paulusma\inst{3}
\and
Th\'eo~Pierron\inst{4}
}

\institute{CNRS, LaBRI, Université de Bordeaux, France
\email{marthe.bonamy@u-bordeaux.fr}\and
Univ. Grenoble Alpes, CNRS, Grenoble INP, G-SCOP, France
\email{nicolas.bousquet@grenoble-inp.fr}\and
Department of Computer Science, Durham University, UK
\email{\{konrad.dabrowski,matthew.johnson2,daniel.paulusma\}@durham.ac.uk}\and
Université de Bordeaux, Bordeaux INP, CNRS, LaBRI, UMR5800, F-33400 Talence, France
\email{tpierron@labri.fr}}

\addtolength{\oddsidemargin}{-2em}
\addtolength{\evensidemargin}{-2em}
\addtolength{\textwidth}{4em}

\begin{document}
\maketitle

\begin{abstract}
We resolve the computational complexity of {\sc Graph Isomorphism} for classes of graphs characterized by two forbidden induced subgraphs~$H_1$ and~$H_2$ for all but six pairs $(H_1,H_2)$.
Schweitzer had previously shown that the number of open cases was finite, but without specifying the open cases. 
Grohe and Schweitzer proved that {\sc Graph Isomorphism} is polynomial-time solvable on graph classes of bounded clique-width. 
Our work combines known results such as these with new results. 
By exploiting a relationship between {\sc Graph Isomorphism} and clique-width, we simultaneously reduce the number of open cases for boundedness of clique-width for $(H_1,H_2)$-free graphs to five.

\keywords{Hereditary graph class \and Induced subgraph \and Clique-width \and Graph isomorphism}
\end{abstract}

\section{Introduction}

The {\sc Graph Isomorphism} problem, which is that of deciding whether two given graphs are isomorphic, is a central problem in Computer Science.
It is not known whether {\sc Graph Isomorphism} is polynomial-time solvable.
However, it is not \NP-complete unless the polynomial hierarchy collapses~\cite{Sc88}.
Analogously to the use of the notion of \NP-completeness, we can say that a problem is {\sc Graph Isomorphism-}complete (abbreviated to \GI-complete).
Babai~\cite{Ba16} proved that {\sc Graph Isomorphism} can be solved in quasi-polynomial time.

In order to increase understanding of the computational complexity of {\sc Graph Isomorphism}, it is natural to place restrictions on the input.
This approach has established that on many graph classes {\sc Graph Isomorphism} is polynomial-time solvable, but that on many others the problem remains \GI-complete.
We refer to~\cite{isgci} for a survey, but some recent examples include a polynomial-time algorithm for unit square graphs~\cite{Ne16}, a complexity dichotomy for $H$-induced-minor-free graphs~\cite{BOS15} and a polynomial-time algorithm for graphs of bounded maximum degree~\cite{GNS18} (improving on the runtime of previous polynomial-time algorithms on graphs of bounded maximum degree~\cite{BKL83,Lu82}).

In this paper we consider the {\sc Graph Isomorphism} problem for hereditary graph classes, which are the classes of graphs that are closed under vertex deletion.
It is readily seen that a graph class~${\cal G}$ is hereditary if and only if there exists a family of graphs~${\cal F}_{\cal G}$, such that the following holds: a graph~$G$ belongs to~${\cal G}$ if and only if~$G$ does not contain any graph from~${\cal F}_{\cal G}$ as an induced subgraph.
We implicitly assume that~${\cal F}_{\cal G}$ is a family of minimal forbidden induced subgraphs, in which case~${\cal F}_{\cal G}$ is unique.
We note that~${\cal F}_{\cal G}$ may have infinite size.
For instance, if~${\cal G}$ is the class of bipartite graphs, then~${\cal F}_{\cal G}$ consists of all odd cycles.

A natural direction for a {\it systematic} study of the computational complexity of {\sc Graph Isomorphism} is to consider graph classes~${\cal G}$, for which~${\cal F}_{\cal G}$ is small, starting with the case where~${\cal F}_{\cal G}$ has size~$1$.
A graph is {\em $H$-free} if it does not contain~$H$ as induced subgraph; conversely, we write $H \ssi G$ to denote that~$H$ is an induced subgraph of~$G$.
The classification for $H$-free graphs can be found in a technical report of Booth and Colbourn~\cite{BC79}, who credit the result to an unpublished manuscript of Colbourn and Colbourn; another proof of it appears in a paper of Kratsch and Schweitzer~\cite{KS12}.

\begin{theorem}[see~\cite{BC79,KS12}]\label{thm:gi-one-graph}
Let~$H$ be a graph. Then {\sc Graph Isomorphism} on $H$-free graphs is polynomial-time solvable if $H \subseteq_i P_4$ and \GI-complete otherwise.
\end{theorem}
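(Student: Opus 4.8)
The plan splits into the two directions of the dichotomy. First I would prove the polynomial-time side: if $H \ssi P_4$, then $H$-free graphs form a well-understood class on which {\sc Graph Isomorphism} can be solved efficiently. The proper induced subgraphs of $P_4$ are $P_1,P_2,P_3,\overline{P_3}=K_2+K_1$, and $P_4$ itself, together with the edgeless and one-edge cases. The key observation is that $P_4$-free graphs are exactly the \emph{cographs}, which admit a canonical \emph{cotree} decomposition obtained by recursively decomposing via disjoint union and join operations. Since each $H \ssi P_4$ forces the class to be a subclass of cographs, I would compute the cotree of each input graph in polynomial time and then test isomorphism of the resulting labelled trees (unordered, with union/join node labels) using a standard bottom-up canonical form / tree-isomorphism procedure. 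This yields a polynomial-time algorithm for every $H \ssi P_4$.

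For the hardness side, I must show that for every graph $H$ with $H \not\ssi P_4$, {\sc Graph Isomorphism} on $H$-free graphs is \GI-complete. The standard strategy is a gadget reduction from {\sc Graph Isomorphism} on general graphs: given an arbitrary instance, I would transform each input graph $G$ into a graph $G'$ that (i) is $H$-free and (ii) satisfies $G_1 \cong G_2 \iff G'_1 \cong G'_2$. The crucial structural input is a characterization of which $H$ fail $H \ssi P_4$. If $H$ is connected and not an induced subgraph of $P_4$, then $H$ contains an induced claw, $C_4$, $\overline{P_3}$ appearing the wrong way, or a longer path; more usefully, any $H$ with $H \not\ssi P_4$ has the property that \emph{both} $H$ and $\overline{H}$ avoid being tiny, so at least one of several universal constructions applies. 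I would therefore split into cases according to the shape of $H$ and, in each case, exhibit a reduction whose output graphs provably exclude $H$ as an induced subgraph.

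The main obstacle is handling the hardness direction uniformly across all $H$ with $H \not\ssi P_4$, since a single gadget construction need not remain $H$-free for every such $H$. The cleanest route is to reduce to a small number of canonical hard cases: one shows that if $H \not\ssi P_4$, then the $H$-free graphs contain, as a subclass, one of a few graph families already known to be \GI-complete, or one reuses the classical fact that {\sc Graph Isomorphism} is \GI-complete even on graphs of bounded degree, on bipartite graphs, or on line graphs, and then verifies that the chosen hard family avoids $H$. Concretely, I expect to argue that when $H$ has an independent set of size three or a triangle plus a pendant structure, the subdivision-based reduction (which produces triangle-free, high-girth graphs that are $H$-free for the relevant $H$) applies; and the complementary cases are dispatched by taking complements, using that {\sc Graph Isomorphism} is self-complementary. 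Verifying $H$-freeness of each gadget output is the delicate bookkeeping step, and organizing the case analysis so that every $H \not\ssi P_4$ is covered by at least one reduction is where the real work lies.
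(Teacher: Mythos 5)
First, note that the paper does not prove Theorem~\ref{thm:gi-one-graph} itself: it is imported from~\cite{BC79,KS12}, so your attempt can only be measured against the standard argument. Your polynomial-time direction is correct and is the standard one: $H \ssi P_4$ forces $H$-free graphs to be a subclass of cographs, and canonising the cotree settles isomorphism in polynomial time.

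The hardness direction, however, has a genuine gap and one concretely wrong step. The wrong step: you claim that when $H$ has an independent set of size three, the subdivision-based high-girth reduction applies. It does not --- if $3P_1 \ssi H$, then graphs of large girth (which have large independent sets) are emphatically not $H$-free in general; for such $H$ you must work in \emph{complements} of triangle-free graphs, i.e.\ in dense graphs of independence number at most two. Your earlier structural claim (``if $H$ is connected and not an induced subgraph of $P_4$ then $H$ contains a claw, $C_4$, $\overline{P_3}$, or a longer path'') is also false as stated: $H=K_3$ is a counterexample. The missing idea that makes the case analysis finite and checkable is the observation that the minimal graphs that are \emph{not} induced subgraphs of $P_4$ are exactly $3P_1$, $K_3$, $C_4$, $2P_2$ and $C_5$ (Ramsey's theorem with $R(3,3)=6$ plus a short enumeration on at most five vertices shows any $(3P_1,K_3,C_4,2P_2,C_5)$-free graph is an induced subgraph of $P_4$). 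Since $F \ssi H$ implies that $F$-free graphs form a subclass of $H$-free graphs, it then suffices to exhibit, for each of these five graphs $F$, a \GI-complete class of $F$-free graphs: bipartite graphs for $K_3$ and $C_5$, graphs of girth at least five for $C_4$, and the complements of these classes for $3P_1$ and $2P_2$, using that {\sc Graph Isomorphism} is self-complementary. Your proposal names all of these tools but never assembles them into a covering argument, and you yourself defer ``the real work''; as written, the hardness half is not a proof.
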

Later, Colbourn~\cite{Co81} proved that {\sc Graph Isomorphism} is polynomial-time solvable even for the class of permutation graphs, which form a superclass of the class of $P_4$-free graphs.
Classifying the case where~${\cal F}_{\cal G}$ has size~$2$ is much more difficult than the size-$1$ case.
Kratsch and Schweitzer~\cite{KS12} initiated this classification.
Schweitzer~\cite{Sc17} later extended the results of~\cite{KS12} and proved that only a finite number of cases remain open.
This leads to our research question:

\begin{quote}
{\em Is it possible to determine the computational complexity of {\sc Graph Isomorphism} for $(H_1,H_2)$-free graphs\footnote{A graph is {\em $(H_1,H_2)$-free} if it has no induced subgraph isomorphic to~$H_1$ or~$H_2$.} for all pairs $H_1,H_2$?}
\end{quote}

\noindent
The analogous research question for $H$-induced-minor-free graphs was fully answered by Belmonte, Otachi and Schweitzer~\cite{BOS15}, who also determined all graphs~$H$ for which the class of $H$-induced-minor-free graphs has bounded clique-width.
Similar classifications for {\sc Graph Isomorphism}~\cite{Po88} and boundedness of clique-width~\cite{DP16} are also known for $H$-minor-free graphs.

Lokshtanov et al.~\cite{LPPS17} recently gave an \FPT\ algorithm for {\sc Graph Isomorphism} with parameter~$k$ on graph classes of treewidth at most~$k$, and this has since been improved by Grohe et al.~\cite{GNSW18}.
Whether an \FPT\ algorithm exists when parameterized by clique-width is still open.
Grohe and Schweitzer~\cite{GS15} proved membership of \XP.

\begin{theorem}[\cite{GS15}]\label{thm:gi-poly-bdd-cw}
For every constant~$c$, {\sc Graph Isomorphism} is polynomial-time solvable on graphs of clique-width at most~$c$.
\end{theorem}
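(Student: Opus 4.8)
The plan is to reduce the statement to isomorphism testing for graphs of bounded rank-width, and then to process a tree-shaped decomposition of the two inputs with permutation-group machinery that sidesteps the absence of a canonical decomposition. First I would replace clique-width by rank-width: since $\mathrm{rw}(G) \le \cw(G) \le 2^{\mathrm{rw}(G)+1}-1$, having clique-width at most $c$ is equivalent, up to the function $c \mapsto 2^{c+1}-1$, to having bounded rank-width, so it suffices to decide isomorphism of two graphs $G_1,G_2$ of rank-width at most some fixed $k$. Rank-decompositions are the right vehicle here: a cut of rank $r$ partitions each side into at most $2^r$ classes of vertices having identical connection patterns to the other side (the pattern of a vertex being a vector in $\mathrm{GF}(2)^r$), so across any cut of the decomposition the interface between a part and the rest of the graph has constant size. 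Using known polynomial-time algorithms that, for fixed $k$, compute a rank-decomposition of width bounded by a function of $k$ (or report that the rank-width exceeds $k$), I would equip each input with such a decomposition.

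Next comes the core recursion. I would traverse the decomposition tree bottom-up, treating the subgraph below each node $t$ as a boundaried graph whose interface is the constant-size collapse of the rank-$r$ cut separating it from the rest. The invariant to maintain is the complete set of \emph{boundary-respecting} isomorphisms between the subgraphs hanging below corresponding nodes of $G_1$ and $G_2$, stored compactly as a coset of a permutation group acting on the bounded interface. Merging the two children at an internal node is then the computation of the consistent gluings of these cosets---a coset/string-isomorphism problem in the sense of Luks on objects of bounded size---and hence runs in time polynomial in $n$ with an exponent depending on $k$, matching the \XP\ bound claimed.

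The hard part, and essentially the whole content of the theorem, is that a graph of bounded rank-width has \emph{no} single canonical rank-decomposition: two isomorphic graphs can be handed structurally different decompositions, so one cannot merely compare decomposition trees, nor trust the arbitrary output of the approximation algorithm. I expect this non-canonicity to be the main obstacle, and I would address it by making every choice in the recursion isomorphism-invariant---either by distilling from the graph a bounded, intrinsically defined family of decompositions, or by quantifying within the group-theoretic framework over all admissible width-$k$ decompositions so that the computed isomorphism coset is provably independent of the particular decomposition used. Keeping the permutation groups of bounded size throughout, and verifying that the coset bookkeeping composes correctly up the tree despite these invariance requirements, is where the real work lies; by comparison, the reduction to rank-width and the local group computations are routine.
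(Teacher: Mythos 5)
First, a point of comparison that matters here: the paper does not prove Theorem~\ref{thm:gi-poly-bdd-cw} at all --- it is imported verbatim from Grohe and Schweitzer~\cite{GS15}, so the only ``paper's own proof'' to measure against is the one in that reference. Against that yardstick, your outline does track the genuine strategy: the reduction from clique-width to rank-width via $\mathrm{rw}(G)\le \cw(G)$, the observation that a cut of rank $r$ collapses each side into at most $2^r$ classes with identical connection patterns, the bottom-up maintenance of cosets of boundary-respecting isomorphisms in the style of Luks, and the identification of the non-canonicity of rank decompositions as the central obstacle are all faithful to~\cite{GS15}.

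The gap is that you stop exactly where that proof begins. Your plan to resolve non-canonicity ``either by distilling from the graph a bounded, intrinsically defined family of decompositions, or by quantifying \ldots{} over all admissible width-$k$ decompositions'' is a disjunction of wishes, not an argument: neither branch is shown to be realizable, and realizing one of them \emph{is} the technical content of~\cite{GS15}, where it is done by developing a theory of split pairs and an isomorphism-invariant treelike decomposition whose parts can then be handled group-theoretically. Without that construction the recursion you describe is not well defined, because the two inputs may be handed structurally incomparable decompositions. A second, related overclaim: the merge step is not ``a coset problem on objects of bounded size'' --- if everything relevant at a node had size bounded by a function of $k$ alone, the algorithm would be \FPT\ in $k$, which the surveyed paper explicitly notes is open. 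The interface is bounded, but the family of candidate splits one must quantify over at each node has size polynomial in $n$, and controlling that quantification (and proving the computed coset is independent of the choices made) is precisely where the $n^{f(k)}$ exponent and the correctness argument come from. As written, the proposal is an accurate map of the terrain, not a proof.
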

Grohe and Neuen~\cite{GN19} have since improved this result by showing that the more general {\sc Canonisation} problem is also in \XP\ when parameterized by clique-width.

\subsection*{Our Results}

By combining known results with Theorem~\ref{thm:gi-poly-bdd-cw} we narrow the list of open cases for {\sc Graph Isomorphism} on $(H_1,H_2)$-free graphs to~14.
Of these~14 cases, we prove that three of them are polynomial-time solvable (Section~\ref{sec:poly}) and five others are \GI-complete (Section~\ref{sec:GI-complete}).
Thus we reduce the number of open cases to six.

Besides Theorem~\ref{thm:gi-poly-bdd-cw}, there is another reason why results for clique-width are of importance for {\sc Graph Isomorphism}.
Namely, Schweitzer~\cite{Sc17} pointed out great similarities between proving unboundedness of clique-width of some graph class~${\cal G}$ and proving that {\sc Graph Isomorphism} stays \GI-complete for~${\cal G}$.
We will illustrate these similarities by showing that our construction demonstrating that {\sc Graph Isomorphism} is \GI-complete for $(\gem,P_1+\nobreak 2P_2)$-free graphs can also be used to show that this class has unbounded clique-width.
This reduces the number of pairs $(H_1,H_2)$ for which we do not know if the class of $(H_1,H_2)$-free graphs has bounded clique-width from six~\cite{DLP17} to five.
As such, our paper also continues a project~\cite{BDJLPZ17,BDHP17,DDP17,DHP0,DLP17,DP16} aiming to classify the boundedness of clique-width of $(H_1,H_2)$-free graphs for all pairs $(H_1,H_2)$; see Section~\ref{sec:cw} (or a recent survey on clique-width~\cite{DJP19}) for an overview of the known and open cases.

In Section~\ref{sec:gi-classification} we present our main theorem, which states exactly for which classes of $(H_1,H_2)$-free graphs {\sc Graph Isomorphism} is known to be polynomial-time solvable, for which it is \GI-complete and for which six cases the complexity remains open.

\section{Preliminaries}\label{sec:prelim}
We consider only finite, undirected graphs without multiple edges or self-loops.
An {\em isomorphism} from a graph~$G$ to a graph~$H$ is a bijection $f:V(G) \to V(H)$ such that $vw \in E(G)$ if and only if $f(v)f(w) \in E(H)$.
For a function $f:X \to Y$, if $X' \subseteq X$, we define $f(X'):=\{f(x) \in Y \;|\; x \in X'\}$.
The {\sc Graph Isomorphism} problem is defined as follows.
\problemdef{{\sc Graph Isomorphism}}{Graphs~$G$ and~$H$.}{Is there an isomorphism from~$G$ to~$H$?}

The {\em disjoint union} $(V(G)\cup V(H), E(G)\cup E(H))$ of two vertex-disjoint graphs~$G$ and~$H$ is denoted by~$G+\nobreak H$ and the disjoint union of~$r$ copies of a graph~$G$ is denoted by~$rG$.
For a subset $S\subseteq V(G)$, we let~$G[S]$ denote the subgraph of~$G$ {\it induced} by~$S$, which has vertex set~$S$ and edge set $\{uv\; |\; u,v\in S, uv\in E(G)\}$.
If $S=\{s_1,\ldots,s_r\}$, then we may write $G[s_1,\ldots,s_r]$ instead of $G[\{s_1,\ldots,s_r\}]$.
Recall that for two graphs~$G$ and~$G'$ we write $G'\ssi G$ to denote that~$G'$ is an induced subgraph of~$G$.
For a set of graphs $\{H_1,\ldots,H_p\}$, a graph~$G$ is {\em $(H_1,\ldots,H_p)$-free} if it has no induced subgraph isomorphic to a graph in $\{H_1,\ldots,H_p\}$;
recall that if~$p=1$, we may write $H_1$-free instead of $(H_1)$-free.

Let $G$ be a graph.
The set $N(u)=\{v\in V\; |\; uv\in E\}$ denotes the {\em (open) neighbourhood} of $u\in V(G)$ and $N[u]=N(u) \cup \{u\}$ denotes the {\em closed neighbourhood} of~$u$.
The {\em degree}~$d_G(v)$ of a vertex~$v$ in a graph~$G$ is the number of vertices in~$G$ that are adjacent to~$v$.
A vertex~$v\in V(G)$ is {\em dominating} if every vertex in $V(G) \setminus \{v\}$ is adjacent to~$v$.
If~$X$ is a set of vertices in~$G$, then~$X$ is {\em dominating} if every vertex in $V(G) \setminus X$ has a neighbour in~$X$.
A vertex and an edge are {\em incident} if the vertex is one of the two end-vertices of the edge.
A {\em (connected) component} of~$G$ is a maximal subset of vertices that induces a connected subgraph of~$G$; it  is {\em non-trivial} if it has at least two vertices, otherwise it is {\em trivial}.
The {\em complement}~$\overline{G}$ of a graph~$G$ has vertex set $V(\overline{G})=V(G)$ such that two vertices are adjacent in~$\overline{G}$ if and only if they are not adjacent in~$G$.

The graphs $C_t$, $K_t$, $K_{1,t-1}$ and~$P_r$ denote the cycle, complete graph, star and path on~$t$ vertices, respectively.
Let~$K_{1,t}^+$ and~$K_{1,t}^{++}$ be the graphs obtained from~$K_{1,t}$ by subdividing one edge once or twice, respectively.
The graphs $K_{1,3}$, $\overline{2P_1+P_2}$, $\overline{P_1+P_3}$, $\overline{P_1+P_4}$ and~$\overline{2P_1+P_3}$ are also called the {\em $\claw$}, {\em $\diamondgraph$}, {\em $\paw$}, {\em $\gem$} and {\em $\crossedhouse$}, respectively.
The graph~$S_{h,i,j}$, for $1\leq h\leq i\leq j$, denotes the {\em subdivided claw}, that is, the tree that has only one vertex~$x$ of degree~$3$ and exactly three leaves, which are at distance~$h$,~$i$ and~$j$ from~$x$, respectively.
Observe that $S_{1,1,1}=K_{1,3}$.
We use~${\cal S}$ to denote the set of graphs every component of which is either a subdivided claw or a path on at least one vertex.
A {\em subdivided star} is a graph obtained from a star by subdividing its edges an arbitrary number of times.
A graph is a {\em path star forest} if all of its connected components are subdivided stars.
A graph is a {\em linear forest} if every component of~$G$ is a path (on at least one vertex).

We will need the following results.
\begin{lemma}[\cite{Sc17}]\label{lem:K_t-K_1t-free-gi-poly}
For every fixed~$t$, {\sc Graph Isomorphism} is polynomial-time solvable on $(2K_{1,t},K_t)$-free graphs.
\end{lemma}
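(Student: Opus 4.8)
The plan is to combine a Ramsey-type degree bound with the individualisation of a single small star. Call a vertex $v$ \emph{branching} if $N(v)$ contains an independent set of size $t$, that is, if $v$ is the centre of an induced $K_{1,t}$. The first observation is that a non-branching vertex has bounded degree: if $\alpha(N(v))<t$ then, since $G$ is $K_t$-free, also $\omega(N(v))\le t-2$, so Ramsey's theorem gives $d_G(v)=|N(v)|<R(t,t-1)=:d_0$, a constant depending only on $t$. Consequently, if $G$ happens to be $K_{1,t}$-free, then $G$ has maximum degree at most $d_0$ and {\sc Graph Isomorphism} on $G$ is polynomial-time solvable by the classical bounded-degree algorithm~\cite{Lu82}.

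Otherwise I would fix an arbitrary induced $K_{1,t}$ on a vertex set $T$ (so $|T|=t+1$). The crucial consequence of $2K_{1,t}$-freeness is that $G-N[T]$ is itself $K_{1,t}$-free: an induced $K_{1,t}$ lying outside $N[T]$ would be non-adjacent to $T$ and hence create an induced $2K_{1,t}$. By the previous paragraph $G-N[T]$ therefore has maximum degree at most $d_0$; a single star $T$ already captures all of the branching. To decide $G\cong H$ I would then individualise $T$: enumerate the $n^{O(t)}$ candidate images of $T$ among the induced copies of $K_{1,t}$ in $H$ (polynomially many for fixed $t$), and for each candidate fix a bijection $T\to T_H$. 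This determines the partition of $V(G)$ into the \emph{core} $N[T]$ and the bounded-degree \emph{remainder} $V(G)\setminus N[T]$, and the analogous partition of $V(H)$; every isomorphism extending the fixed bijection must respect it.

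The main obstacle I expect is the \emph{core} $N[T]$, which may be large and dense and interacts with the bounded-degree remainder. The class genuinely contains arbitrarily large dense graphs, for instance $K_{m,m}$, whose core is all of $V(G)$, so no reduction purely to bounded degree is possible. To control the core I would use that $N[T]$ is the union of the $t+1$ closed neighbourhoods of the now-individualised vertices of $T$, and so splits into at most $2^{t+1}$ groups according to the adjacency pattern to $T$; within each group I would collapse false and true twins and pass to the modular-decomposition quotient, then recurse, at each level peeling off a bounded star and a bounded-degree remainder while the individualisation keeps the number of branches polynomial. The delicate part—arguing that the dense core really does admit such a twin/module reduction, and that the bounded-degree remainders can be glued to the individualised core by a Luks-style group computation so that the whole recursion terminates within a polynomial bound—is where the real work lies.
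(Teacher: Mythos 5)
First, note that the paper does not prove Lemma~\ref{lem:K_t-K_1t-free-gi-poly} at all: it is imported verbatim from Schweitzer~\cite{Sc17} (cited later as ``Corollary~3 with $s=t$''), so there is no in-paper argument to compare yours against. Judged on its own terms, your proposal is not a proof. Your two opening observations are correct and are the natural starting point: a vertex whose neighbourhood has independence number below $t$ has degree below $R(t,t-1)$ by Ramsey's theorem and $K_t$-freeness, and $G-N[T]$ is $K_{1,t}$-free (hence of bounded degree) because a star disjoint from and anticomplete to $T$ would yield an induced $2K_{1,t}$. But everything after that is a wish list. The entire difficulty of the lemma sits in the dense core $N[T]$, and you acknowledge in your final sentence that you have not dealt with it; a proof that ends with ``this is where the real work lies'' has not discharged the claim.

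Concretely, the sketched treatment of the core fails in three ways. (a) Twin-collapsing and modular decomposition need not shrink $N[T]$ at all: the join of a single vertex $x$ with any prime, twin-free $(2K_{1,t},K_{t-1})$-free graph $G'$ is $(2K_{1,t},K_t)$-free, and there $N[T]=V(G)$ for every star $T$ containing $x$, while testing isomorphism of such graphs is exactly as hard as the problem one level down --- so ``collapse and recurse'' has no base case you have established. (b) The recursion has no termination measure: each level individualises a fresh star, multiplying the number of branches by $n^{\Theta(t)}$, so unless you prove the recursion depth is $O(1)$ (you do not), the running time is not polynomial. (c) ``Glued \dots by a Luks-style group computation'' presupposes that the automorphism groups arising are in a class (such as $\Gamma_d$) for which Luks's machinery applies; establishing this for the dense, individualised core is precisely the content of the lemma, not a routine afterthought. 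Schweitzer's actual argument requires a genuinely finer analysis of how the $2^{t+1}$ adjacency types to $T$ interact (ultimately reducing to canonisation with bounded colour valence), and none of that analysis appears here.
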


\begin{lemma}[\cite{Sc17}]\label{lem:K_t-P_5-free-gi-poly}
For every fixed~$t$, {\sc Graph Isomorphism} is polynomial-time solvable on $(K_t,P_5)$-free graphs.
\end{lemma}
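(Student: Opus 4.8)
The plan is to exploit the fact that connected $P_5$-free graphs admit small dominating sets of restricted shape, and to combine this structural fact with an individualization (guess-and-colour) technique. First I would reduce to the connected case. Since both $P_5$ and $K_t$ are connected, the class of $(K_t,P_5)$-free graphs is closed under disjoint union, so two such graphs are isomorphic precisely when their connected components can be paired into isomorphic pairs. Computing an isomorphism-invariant for each component and comparing the resulting multisets (using a solver for the connected case as a subroutine) then reduces the whole problem to the connected case in polynomial time, so from now on I would assume $G$ and $H$ are connected.

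Next I would invoke the structure theorem of Bacs\'o and Tuza, which says that every connected $P_5$-free graph has a dominating set inducing either a clique or a $P_3$. Because our graphs are also $K_t$-free, a dominating clique has at most $t-1$ vertices, so in every case we obtain a dominating set $D$ with $|D|=k\le\max(t-1,3)$, which is a constant since $t$ is fixed. To test whether connected $(K_t,P_5)$-free graphs $G$ and $H$ are isomorphic, I would fix such a set $D_G$ in $G$ and guess its image in $H$: any isomorphism must send $D_G$ onto a $k$-vertex subset of $H$ that dominates $H$ and induces a subgraph isomorphic to $G[D_G]$, and there are only $O(n^k)=n^{O(t)}$ candidate ordered $k$-tuples to try. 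For each guess I would \emph{individualize} the vertices of $D_G$ and of the guessed target, assigning them matching distinct colours, and then attempt to extend to a full colour-preserving isomorphism.

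Once $D$ is individualized, every remaining vertex acquires a \emph{type} determined by its set of neighbours inside $D$; since $|D|$ is bounded, there are at most $2^k$ types, so $V(G)\setminus D$ splits into a constant number of colour classes, and the task becomes isomorphism of the smaller vertex-coloured graph $G-D$, which is again $(K_t,P_5)$-free. The hard part — and the step I expect to be the main obstacle — is to turn this into a genuinely polynomial procedure rather than a search tree whose $n^{O(t)}$ branching compounds with the recursion depth. To control the blow-up one must use $P_5$-freeness more tightly: since $D$ dominates and no induced $P_5$ exists, the components of $G-D$ can attach to $D$ only in a limited fashion, and I would aim to show that, relative to the fixed colouring, the reduced instance either decomposes into independently solvable pieces along its modular decomposition or collapses (after contracting twin classes within each type) to an instance of bounded size. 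If such a structural collapse can be established, then a single level of individualization already suffices, and the whole algorithm runs in time $n^{O(t)}$ times a polynomial, giving the claimed bound for every fixed $t$.
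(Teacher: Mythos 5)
First, a point of comparison: the paper does not prove this lemma at all --- it is imported verbatim from Schweitzer~\cite{Sc17} (his Theorem~14), so there is no in-paper argument to measure your proposal against. Your attempt must therefore stand on its own as a proof of Schweitzer's result, and as written it does not: it contains a genuine gap, and it is exactly the one you flag yourself. The setup is sound --- reduction to the connected case, the Bacs\'o--Tuza theorem giving a dominating set $D$ that induces a clique or a $P_3$ (hence $|D|\le\max(t-1,3)$ by $K_t$-freeness), guessing the image of $D$ in $n^{O(t)}$ ways and individualizing. But everything after that is conditional. You reduce to isomorphism of a vertex-coloured $(K_t,P_5)$-free graph with at most $2^{|D|}$ colour classes and then assert that the instance ``either decomposes into independently solvable pieces along its modular decomposition or collapses (after contracting twin classes within each type) to an instance of bounded size.'' This is not established, and it is not true as stated: a bounded number of colour classes buys you nothing by itself (uncoloured GI is the one-colour case), each class can still induce an arbitrarily large $(K_t,P_5)$-free graph with no nontrivial modules and no twins, and the bipartite graphs between classes remain unconstrained by anything you have proved. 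The claim that ``a single level of individualization already suffices'' is precisely the content of the theorem, not a consequence of the setup.

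If you want to push this through, the observation you are missing is that domination interacts with $K_t$-freeness: every vertex outside $D$ has a neighbour $d\in D$, so each colour class whose type contains $d$ induces a $K_{t-1}$-free (and still $P_5$-free) graph, which sets up an induction on $t$ rather than a one-shot collapse. Even then, the real work --- and the bulk of Schweitzer's proof --- lies in controlling the bipartite adjacencies between colour classes using $P_5$-freeness and in organising the recursion so that the individualization cost does not compound with its depth. None of that is present in your write-up, so the proposal is a plausible opening move plus an honest admission that the theorem remains to be proved.
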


Let~$G$ be a graph and let $X,Y \subseteq V(G)$ be disjoint sets.
The edges between~$X$ and~$Y$ form a {\em perfect matching} if every vertex in~$X$ is adjacent to exactly one vertex in~$Y$ and vice versa.
A vertex $x \in V(G) \setminus Y$ is {\em complete} (resp. {\em anti-complete}) to~$Y$ if it is adjacent (resp. non-adjacent) to every vertex in~$Y$.
Similarly, $X$ is complete (resp. anti-complete) to~$Y$ if every vertex in~$X$ is complete (resp. anti-complete) to~$Y$.
A graph is {\em bipartite} if its vertex set can be partitioned into two (possibly empty) independent sets.
A graph is {\em split} if its vertex set can be partitioned into a clique and an independent set.
A graph is {\em complete multipartite} if its vertex set can be partitioned into independent sets $V_1,\ldots,V_k$ such that~$V_i$ is complete to~$V_j$ whenever $i \neq j$; if $k=2$, then the graph is {\em complete bipartite}.
We will need the following result.

\begin{lemma}[\cite{Olariu88}]\label{lem:olariu}
Every connected $(\overline{P_1+P_3})$-free graph is either complete multipartite or $K_3$-free.
\end{lemma}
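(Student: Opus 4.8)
The plan is to use the standard reformulation that a graph is complete multipartite if and only if it contains no induced $P_1+P_2$ (an edge together with a vertex non-adjacent to both its endpoints); equivalently, $\overline{G}$ is a disjoint union of cliques, i.e.\ non-adjacency is transitive. Since a triangle-free graph trivially satisfies the conclusion, it suffices to prove the following: if $G$ is connected, $\overline{P_1+P_3}$-free (paw-free) and contains a triangle, then $G$ is complete multipartite. Throughout, the only tool I need is that a paw is a triangle with one extra pendant vertex, so whenever I find a triangle and a vertex adjacent to exactly one of its three vertices, I obtain a forbidden induced subgraph.

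Fix a triangle $T=\{a,b,c\}$. The first step is to pin down how an arbitrary vertex $v \notin T$ meets $T$. If $v$ has exactly one neighbour in $T$, then $v$ together with $T$ induces a paw, so this cannot happen. If $v$ has no neighbour in $T$, I claim $v$ cannot be adjacent to any vertex that does meet $T$: any vertex $w$ adjacent to two or three vertices of $T$ lies in a triangle with two vertices of $T$, and then $v$ (adjacent to $w$ but to neither of those two) again completes a paw. Hence the vertices avoiding $T$ entirely can only be adjacent to one another, and so they form a union of full connected components; as $G$ is connected and contains $T$, there are none. This establishes the key structural fact: every vertex of $G$ is adjacent to at least two of $a,b,c$.

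The second step assumes, for contradiction, an induced $P_1+P_2$, namely an edge $yz$ and a vertex $x$ with $xy,xz \notin E(G)$, and derives a paw. Since $x$ and $y$ are each adjacent to at least two of the three vertices of $T$ while $xy \notin E(G)$, they share a common neighbour in $T$, which I rename $a$; thus $a$ lies in $N(x) \cap N(y)$. Now I split on whether $za \in E(G)$. If $za \in E(G)$, then $\{a,y,z\}$ is a triangle and $x$ is adjacent to $a$ but to neither $y$ nor $z$, a paw. If $za \notin E(G)$, then $z$ must be adjacent to both $b$ and $c$, so $\{z,b,c\}$ is a triangle; as $x$ is non-adjacent to $z$, paw-freeness forces $x$ adjacent to both $b$ and $c$ (adjacency to exactly one would give a paw). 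Then $\{x,b,c\}$ is a triangle, and applying the same argument to $y$ (non-adjacent to $x$) forces $y$ adjacent to both $b$ and $c$. Finally $\{b,y,z\}$ is a triangle while $x$ is adjacent to $b$ but to neither $y$ nor $z$: one last paw, completing the contradiction and hence the proof.

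I expect the main obstacle to be this final case: once $x$ is adjacent to all of $a,b,c$ it sits inside several diamonds rather than paws, so no single triangle immediately exposes it, and the work is to propagate the adjacency constraints (first to $z$, then to $x$, then to $y$) until a triangle surfaces for which $x$ is a genuine pendant. The bookkeeping needed to guarantee that the renamed common neighbour is well defined and that $x,y,z$ are suitably distinct from $a,b,c$ is routine but must be checked, so that each asserted ``triangle plus pendant'' really is an induced paw.
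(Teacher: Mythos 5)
Lemma~\ref{lem:olariu} is stated in the paper as a known result cited to Olariu without proof, so there is no in-paper argument to compare against; your proof is correct and follows what is essentially the standard argument for this fact. Both halves check out: the reduction to ``no induced $P_1+P_2$'', the connectivity argument showing every vertex has at least two neighbours in the fixed triangle $T=\{a,b,c\}$, and the case analysis on $za$ each produce genuine induced paws. One small point worth making explicit: when you assert that ``adjacency to exactly one of $b,c$ would give a paw, hence $x$ is adjacent to both,'' you also need to exclude adjacency to \emph{neither}; this follows because $x$ is adjacent to at least two of $a,b,c$ and is already known to be adjacent to $a$, so it must hit at least one of $b,c$ (and likewise for $y$). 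The distinctness checks you defer (e.g.\ that the common neighbour $a$ lies outside $\{x,y,z\}$, and that $x,y\notin\{b,c\}$ in the final case) are indeed routine, since each degenerate identification contradicts one of the stated adjacencies or non-adjacencies.
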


\subsection{Clique-width}\label{sec:clique-width}
The {\em clique-width} of a graph~$G$, denoted by~$\cw(G)$, is the minimum number of labels needed to construct~$G$ using the following four operations:
\begin{enumerate}[(i)]
\item create a new graph consisting of a single vertex~$v$ with label~$i$;
\item take the disjoint union of two labelled graphs~$G_1$ and~$G_2$;
\item join each vertex with label~$i$ to each vertex with label~$j$ ($i\neq j$);
\item rename label~$i$ to~$j$.
\end{enumerate}

\noindent
A class of graphs~${\cal G}$ has bounded clique-width if there is a constant~$c$ such that the clique-width of every graph in~${\cal G}$ is at most~$c$; otherwise the clique-width of~${\cal G}$ is {\em unbounded}.

Let~$G$ be a graph.
We define the following operations.
For an induced subgraph $G'\ssi G$, the {\em subgraph complementation} operation (acting on~$G$ with respect to~$G'$) replaces every edge present in~$G'$ by a non-edge, and vice versa,
that is, the resulting graph has vertex set~$V(G)$ and edge set $(E(G) \setminus E(G')) \cup \{xy\;|\; x,y \in V(G'), x \neq y, xy \notin E(G')\}$.
Similarly, for two disjoint vertex subsets~$S$ and~$T$ in~$G$, the {\em bipartite complementation} operation with respect to~$S$ and~$T$ acts on~$G$ by replacing every edge with one end-vertex in~$S$ and the other in~$T$ by a non-edge and vice versa.

We now state some useful facts about how these two operations (and some others) influence the clique-width of a graph.
We will use these facts throughout the paper.
Let $k\geq 0$ be a constant and let~$\gamma$ be some graph operation.
We say that a graph class~${\cal G'}$ is {\em $(k,\gamma)$-obtained} from a graph class~${\cal G}$ if the following two conditions hold:
\begin{enumerate}[(i)]
\item every graph in~${\cal G'}$ is obtained from a graph in~${\cal G}$ by performing~$\gamma$ at most~$k$ times, and
\item for every $G\in {\cal G}$ there exists at least one graph in~${\cal G'}$ obtained from~$G$ by performing~$\gamma$ at most~$k$ times.
\end{enumerate}

\noindent
We say that~$\gamma$ {\em preserves} boundedness of clique-width if for any finite constant~$k$ and any graph class~${\cal G}$, any graph class~${\cal G}'$ that is $(k,\gamma)$-obtained from~${\cal G}$ has bounded clique-width if and only if~${\cal G}$ has bounded clique-width.

\begin{enumerate}[\bf F{a}ct 1.]
\item \label{fact:del-vert}Vertex deletion preserves boundedness of clique-width~\cite{LR04}.
\item \label{fact:comp}Subgraph complementation preserves boundedness of clique-width~\cite{KLM09}.
\item \label{fact:bip}Bipartite complementation preserves boundedness of clique-width~\cite{KLM09}.
\end{enumerate}

\noindent
We need the following two lemmas on clique-width.

\begin{lemma}[\cite{BDHP16}]\label{lem:2P_1P3-split-bdd-cw}
The class of $\overline{2P_1+P_3}$-free split graphs has bounded clique-width.
\end{lemma}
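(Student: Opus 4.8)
The plan is to use the split structure directly. Fix a partition of $V(G)$ into a clique~$K$ and an independent set~$I$; since $I$ is independent, every $v\in I$ has $N(v)\subseteq K$, so the whole graph is determined by the bipartite adjacency between~$K$ and~$I$. First I would extract the single consequence of $\overline{2P_1+P_3}$-freeness that the argument actually needs: a degree dichotomy for the independent vertices. Concretely, the claim is that every $v\in I$ satisfies $|N(v)\cap K|\le 1$ or $|K\setminus N(v)|\le 1$. To prove this, suppose some $v\in I$ had two neighbours $a,b$ and two non-neighbours $c,d$ in~$K$. Since $K$ is a clique, $\{a,b,c,d\}$ induces a $K_4$, and~$v$ is adjacent to exactly two of these four vertices; taking the complement of $G[a,b,c,d,v]$ yields $2P_1+P_3$ (the two universal-free vertices $a,b$ become isolated and $c\text{-}v\text{-}d$ becomes a $P_3$), so $G[a,b,c,d,v]=\overline{2P_1+P_3}$, a contradiction. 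I expect this claim to be the crux of the whole proof: once the dichotomy is available, the remaining steps are essentially mechanical.

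Given the claim, I would partition $I$ into $I_{\text{low}}=\{v\in I: |N(v)\cap K|\le 1\}$ and $I_{\text{high}}=I\setminus I_{\text{low}}$; by the claim, each vertex of $I_{\text{high}}$ has at most one non-neighbour in~$K$. Now apply a single bipartite complementation between $I_{\text{high}}$ and~$K$. By Fact~\ref{fact:bip}, bipartite complementation preserves boundedness of clique-width, so it suffices to bound the clique-width of the resulting graph~$G'$. In~$G'$ the set~$K$ is still a clique and~$I$ is still independent (neither edges inside~$K$ nor the empty edge set inside~$I$ are touched), but every vertex of $I_{\text{high}}$ now has at most one neighbour in~$K$, namely its former unique non-neighbour, while the vertices of $I_{\text{low}}$ are unchanged. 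Hence in~$G'$ every independent vertex has at most one neighbour in~$K$.

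It remains to bound the clique-width of~$G'$ directly. This graph is a clique~$K$ with a pendant forest attached: each vertex of~$I$ is either isolated or a leaf hanging off a single vertex of~$K$. Such a graph has clique-width at most~$4$, built one clique-vertex at a time. I would keep the already-placed clique vertices under one label and all finished leaves under a second label; to insert a new clique vertex, introduce it under a third label, create its leaves under a fourth label and join the third label to the fourth, then relabel the fourth label into the ``leaf'' label, join the third label to the ``clique'' label, and finally relabel the third label into the ``clique'' label. Isolated vertices of~$I$ are added with no join. This gives $\cw(G')\le 4$, and tracing back through Fact~\ref{fact:bip} shows that the class of $\overline{2P_1+P_3}$-free split graphs has bounded clique-width.

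The only routine point needing a little care is the small-clique case $|K|\le 2$, where $I_{\text{low}}$ and $I_{\text{high}}$ may overlap; there I would simply assign every vertex with at most one neighbour in~$K$ to $I_{\text{low}}$ (so that $I_{\text{high}}$ consists exactly of the vertices with at least two neighbours in~$K$, to which the dichotomy applies), or observe that a clique of bounded size already forces only boundedly many neighbourhood types among the independent vertices and hence bounded clique-width on its own.
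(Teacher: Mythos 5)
The paper does not actually prove this lemma: it imports it wholesale from~\cite{BDHP16}, where it arises as one case of a complete classification of the graphs~$H$ for which $H$-free split graphs have bounded clique-width. So there is no in-paper proof to compare against, and your proposal should be judged as a self-contained argument --- and as such it is correct. Your key claim (a vertex of~$I$ with two neighbours and two non-neighbours in~$K$ yields an induced $\overline{2P_1+P_3}$, since $\overline{2P_1+P_3}$ is exactly a~$K_4$ plus a vertex adjacent to precisely two of its vertices) is the same observation the paper itself exploits in the proof of Lemma~\ref{lem:K5-extension}, so it is certainly in the spirit of the surrounding text. The reduction via a single bipartite complementation between~$K$ and the high-degree part of~$I$ is a legitimate use of Fact~\ref{fact:bip} (the class of complemented graphs is $(1,\gamma)$-obtained from the original class), the resulting graph is indeed a clique with pendant and isolated vertices, and your four-label construction for that graph is valid. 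The worry you raise about $|K|\le 2$ is a non-issue given that you define $I_{\text{high}}=I\setminus I_{\text{low}}$: every vertex of $I_{\text{high}}$ has at least two neighbours in~$K$ and hence, by the dichotomy, at most one non-neighbour, which is all the complementation step needs. Your direct argument is shorter and more elementary than going through the general machinery of~\cite{BDHP16}, at the cost of covering only this one forbidden subgraph; it would make the paper self-contained on this point if that were desired.
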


\begin{lemma}[\cite{BKM06}]\label{lem:K3-P6-bdd-cw}
The class of $(K_3,P_6)$-free graphs has bounded clique-width.
\end{lemma}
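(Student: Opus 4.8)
The plan is to show that connected $(K_3,P_6)$-free graphs admit a structural decomposition realisable with boundedly many labels, and then to lift this to all graphs in the class. Since the disjoint union of two graphs of clique-width at most~$c$ again has clique-width at most~$c$, it suffices to bound the clique-width of the connected graphs in the class. Throughout I would lean on Facts~\ref{fact:del-vert}--\ref{fact:bip}: it is enough to produce, after deleting a bounded number of vertices and performing a bounded number of (bipartite) complementations, a graph that is a disjoint union of pieces each of bounded clique-width.

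First I would record a clean metric fact: a connected $P_6$-free graph has diameter at most~$4$, since a shortest path between two vertices is always induced and a shortest path on six vertices would be an induced~$P_6$. Fixing a vertex~$v$ and taking a BFS layering therefore yields at most five layers. More usefully, I would invoke a domination result for $P_6$-free graphs to obtain a dominating induced subgraph~$D$ of controlled type; combined with $K_3$-freeness, $D$ is triangle-free, and the only unbounded possibility to worry about is that~$D$ is a large complete bipartite graph with parts~$A$ and~$B$ (a bounded dominating structure such as a~$C_6$ can be absorbed by deleting its $O(1)$ vertices via Fact~\ref{fact:del-vert}). Note that triangle-freeness already forces each of~$A$ and~$B$ to be independent, so $D=K_{|A|,|B|}$.

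The decisive use of $K_3$-freeness comes next. Every vertex $u\notin D$ has a neighbour in~$D$ by domination, but~$u$ cannot be adjacent both to some $a\in A$ and some $b\in B$, since then $uab$ would be a triangle (as $ab\in E(D)$). Hence each external vertex attaches to exactly one side, splitting $V(G)\setminus D$ into $A$-type and $B$-type vertices; likewise two adjacent external vertices of the same type must have disjoint neighbourhoods in that side. The heart of the argument is then to show, using $P_6$-freeness, that the bipartite graph between the $A$-type vertices and~$A$ (and symmetrically for~$B$) is highly restricted: a rich ``crossing'' pattern, i.e.\ an induced matching, between external vertices and~$A$ can be completed through a vertex $b\in B$ into a long induced path, so the induced matching number of this bipartite graph is bounded and it is essentially a chain graph. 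Chain graphs have bounded clique-width, and a single bipartite complementation (Fact~\ref{fact:bip}) turns each such part into a disjoint union of bicliques.

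Finally I would assemble the pieces. After deleting the bounded part of~$D$ and performing boundedly many bipartite complementations to tidy the connections between the side-classes and between external vertices of opposite type, the graph should decompose into components whose structure is governed by a fixed small pattern (a blow-up of a bounded graph), which has bounded clique-width; Facts~\ref{fact:del-vert}--\ref{fact:bip} then transfer this bound back to~$G$. I expect the main obstacle to be precisely the third step: proving that $P_6$-freeness limits the diversity of the external-to-$D$ attachments and the adjacencies among the external vertices. The triangle-free constraint that adjacent vertices have disjoint neighbourhoods gives the needed leverage, but converting ``no induced~$P_6$'' into a clean bound on the induced matching number of the relevant bipartite graphs, while simultaneously controlling the edges inside the classes of external vertices, requires a careful and somewhat lengthy case analysis.
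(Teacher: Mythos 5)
This lemma is not proved in the paper at all: it is imported verbatim from~\cite{BKM06}, so there is no in-paper argument to compare yours against. For the record, the published proof in~\cite{BKM06} takes a different route from yours: it splits off the bipartite case (where boundedness of clique-width for $P_6$-free bipartite graphs is a previously known result) and, in the non-bipartite case, analyses the distance levels from an induced $C_5$, which every connected non-bipartite $(K_3,P_6)$-free graph must contain. Your template (reduce to connected graphs, take a dominating structure via the domination theorem for $P_6$-free graphs, use triangle-freeness to split the outside vertices according to which side of the dominating biclique $D=K_{|A|,|B|}$ they attach to) is sound up to and including the observation that each external vertex attaches to exactly one side and that adjacent externals of the same type have disjoint neighbourhoods in that side.

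The genuine gap is in what you yourself call the heart of the argument. You claim that an induced matching between the $A$-type external vertices and $A$ can be routed through a vertex $b\in B$ into a long induced path, hence that this attachment bipartite graph has bounded induced matching number and is ``essentially a chain graph''. This is false. Take $D=K_{2,n}$ with $B=\{b_1,b_2\}$ and $A=\{a_1,\ldots,a_n\}$, and attach a pendant vertex $u_i$ to each $a_i$. This graph is triangle-free, and it is $P_6$-free because the middle four vertices of any induced $P_6$ would have to lie in $A\cup B$ (each $u_i$ has degree one), while $K_{2,n}$ contains no induced $P_4$; the longest induced paths are the $P_5$'s of the form $u_i\,a_i\,b_k\,a_j\,u_j$. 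Yet the bipartite graph between $\{u_1,\ldots,u_n\}$ and $A$ is an induced matching of size $n$. The path you build through $b$ from two matched pairs stops at five vertices; an induced $P_6$ only arises when the matching interacts with edges among the external vertices (for instance a $u_3$ adjacent to $u_2$ but not to $u_1$ extends $u_1a_1ba_2u_2$), and your proposal does not control that interaction. Consequently the attachment graphs need not be chain graphs, the resulting pieces are not blow-ups of a bounded pattern, and the final assembly step has nothing to stand on. (The example itself still has bounded clique-width --- it is essentially a tree glued to a biclique --- so the lemma is not threatened, but your route to it breaks at exactly this step.)
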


Since complete multipartite graphs have clique-width at most~$2$, and the clique-width of a graph is equal to the maximum clique-width of its components, we can use Lemma~\ref{lem:olariu} to extend Lemma~\ref{lem:K3-P6-bdd-cw} into the following (previously-known) corollary.

\begin{corollary}\label{cor:pawP5-free-bdd-cw}
The class of $(\paw,P_6)$-free graphs has bounded clique-width.
\end{corollary}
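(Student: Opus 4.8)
The plan is to reduce to connected graphs and then split according to Olariu's structure theorem. First I would note that, since the clique-width of a graph equals the maximum clique-width of its connected components, it suffices to bound $\cw(G)$ for a \emph{connected} $(\paw,P_6)$-free graph~$G$. Recall that $\paw=\overline{P_1+P_3}$, so such a~$G$ is in particular a connected $(\overline{P_1+P_3})$-free graph, and Lemma~\ref{lem:olariu} applies: $G$ is either complete multipartite or $K_3$-free.

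In the first case, complete multipartite graphs have clique-width at most~$2$, so $\cw(G)\leq 2$ and we are done immediately. In the second case, $G$ is $K_3$-free and, being an induced subgraph of itself, remains $P_6$-free; hence $G$ is $(K_3,P_6)$-free, and Lemma~\ref{lem:K3-P6-bdd-cw} supplies a constant~$c$, depending only on the class and not on~$G$, with $\cw(G)\leq c$. Every connected $(\paw,P_6)$-free graph therefore has clique-width at most $\max\{2,c\}$, and passing back to arbitrary components shows the same bound holds for every $(\paw,P_6)$-free graph.

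I expect no real obstacle: the entire content is packaged in the two cited lemmas together with the fact that complete multipartite graphs have clique-width at most~$2$. The only things that need explicit checking are that restricting to a connected component preserves both forbidden induced subgraphs (automatic, since induced-subgraph-freeness is hereditary) and that the $K_3$-free branch of Olariu's dichotomy lands exactly in the hypothesis of Lemma~\ref{lem:K3-P6-bdd-cw}. The one point worth stating carefully is that the bound~$c$ from Lemma~\ref{lem:K3-P6-bdd-cw} is a single class-wide constant, so the two cases combine into a uniform bound for the whole class rather than a per-graph bound.
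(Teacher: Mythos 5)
Your proof is correct and follows exactly the paper's own argument: the paper derives this corollary in one sentence by combining the component-wise nature of clique-width, Lemma~\ref{lem:olariu}, the clique-width bound of~$2$ for complete multipartite graphs, and Lemma~\ref{lem:K3-P6-bdd-cw}. Your write-up is simply a more detailed rendering of that same reasoning, with all the checks (heredity of the forbidden subgraphs, uniformity of the constant) correctly handled.
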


We also need the special case of~\cite[Theorem~3]{DP16} when $V_{0,i}=V_{i,0}=\emptyset$ for $i \in \{1,\ldots,n\}$.

\begin{lemma}[\cite{DP16}]\label{lem:generalunbounded}
For $m\geq 1$ and $n >\nobreak m+\nobreak 1$ the clique-width of a graph~$G$ is at least 
$\lfloor\frac{n-1}{m+1}\rfloor+\nobreak 1$
if~$V(G)$ has a partition into sets $V_{i,j}\; (i,j \in \{1,\ldots,n\})$ with 
the following properties:
\begin{enumerate}
\item \label{prop:v_ij-nonempty} $|V_{i,j}|\geq 1$ for all $i,j\geq 1$.
\item \label{prop:row-connected} $G[\cup^n_{j=1}V_{i,j}]$ is connected for all $i\geq 1$.
\item \label{prop:column-connected} $G[\cup^n_{i=1}V_{i,j}]$ is connected for all $j\geq 1$.
\item \label{prop:v_ij-nbrs} For $i,j,k,\ell\geq 1$, if a vertex of~$V_{i,j}$ is adjacent to a vertex of~$V_{k,\ell}$, then $|k-i|\leq m$ and $|\ell-j| \leq m$.
\end{enumerate}
\end{lemma}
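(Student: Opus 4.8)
The plan is to assume we are given an optimal clique-width expression for $G$ using $k=\cw(G)$ labels and to use it to exhibit $p+1$ vertices that must all receive different labels at a single node of the syntax tree, where $p=\lfloor\frac{n-1}{m+1}\rfloor$; this forces $k\geq p+1$. First I would recall the standard labelling fact: fixing a $k$-expression with syntax tree~$T$ and writing $Y_t$ for the set of vertices introduced in the subtree rooted at a node~$t$, any two vertices of~$Y_t$ carrying the same label when $G[Y_t]$ is passed up will be treated identically by all later join and relabel operations, and hence have exactly the same neighbours in $V(G)\setminus Y_t$. Consequently, for every node~$t$ the number of distinct sets $N(v)\cap(V(G)\setminus Y_t)$ realised by vertices $v\in Y_t$ is at most~$k$. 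It therefore suffices to find a node~$t$ together with $p+1$ vertices of~$Y_t$ whose neighbourhoods outside~$Y_t$ are pairwise distinct.

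Next I would fix the ``probe lines''. Set $p=\lfloor\frac{n-1}{m+1}\rfloor$ and choose indices $\ell_0<\cdots<\ell_p$ with $\ell_s=1+s(m+1)$; since $\ell_p\leq n$ these are valid indices, there are $p+1$ of them, and consecutive ones differ by exactly $m+1$. I will use both the selected rows $R_s:=\bigcup_{j}V_{\ell_s,j}$ and the selected columns $C_s:=\bigcup_{i}V_{i,\ell_s}$. The crucial consequence of Property~\ref{prop:v_ij-nbrs} together with the spacing $m+1>m$ is that there is no edge between two distinct selected columns, and none between two distinct selected rows.

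The key gadget is the following distinctness observation. Fix a node~$t$, write $A=Y_t$ and $B=V(G)\setminus Y_t$, and suppose that a set of selected columns $C_s$ ($s\in I$) are each \emph{split}, meaning $C_s\cap A\neq\emptyset\neq C_s\cap B$. Each $C_s$ is connected by Property~\ref{prop:column-connected}, so it contains an edge $x_sy_s$ with $x_s\in A$ and $y_s\in B$. For $s\neq s'$ the vertices $x_{s'}$ and $y_s$ lie in distinct selected columns and so are non-adjacent, whereas $x_s$ and $y_s$ are adjacent; thus $y_s\in N(x_s)\cap B$ but $y_s\notin N(x_{s'})\cap B$. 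Hence the vertices $\{x_s:s\in I\}\subseteq A$ have pairwise distinct neighbourhoods outside~$A$, and the labelling fact gives $k\geq|I|$. The identical argument applied to selected rows (using Property~\ref{prop:row-connected}) shows that $k$ is at least the number of split selected rows at any node. So to finish it is enough to produce a single node of~$T$ at which either all $p+1$ selected columns are split, or all $p+1$ selected rows are split.

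Establishing the existence of such a node is the step I expect to be the main obstacle. The sets $Y_t$ form a binary laminar family ranging from singletons up to $V(G)$, and the difficulty is that an adversarial expression could try to assemble~$G$ so that at every node some selected column is already fully absorbed (in~$A$) or still untouched (in~$B$), and symmetrically for rows; moreover a naive balanced-descent argument that merely tracks how many selected columns meet~$Y_t$ only controls this count up to a factor of two, which is too lossy for the exact value $p+1$. The plan is to play the two directions against each other: I would argue that whenever a node fails to split a selected column, the connectivity of the rows crossing that column (Property~\ref{prop:row-connected}) forces a corresponding collection of selected rows to become split, and conversely, and then follow a single monotone quantity along a root-to-leaf path to locate a node where one of the two dimensions is split completely. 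Making this exchange precise, so that no slack is lost and exactly $p+1$ simultaneously split lines are guaranteed, is the technical heart of the proof.
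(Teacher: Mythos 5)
Your setup is sound as far as it goes: the choice of probe lines $\ell_s=1+s(m+1)$ correctly yields $p+1$ pairwise anti-adjacent rows (and columns) by Property~\ref{prop:v_ij-nbrs}, and your gadget showing that the number of \emph{externally} split selected lines at any node is a lower bound on $k$ is correct. But the proof is not complete: you explicitly leave open the existence of a node at which all $p+1$ selected rows, or all $p+1$ selected columns, are externally split, and this is not a routine detail to be filled in later --- it is the entire content of the lemma. Worse, the framework you have set up appears unable to deliver it. The natural candidate node (the lowest $t$ with $Y_t$ containing a full line, say column $C_{j_0}$) only guarantees that every selected row \emph{meets} $Y_{t_1}\cup Y_{t_2}$; a row wholly contained in $Y_t$ but divided between the two children is invisible to your notion of ``split'', and distributing such rows between $Y_{t_1}$ and $Y_{t_2}$ is exactly what produces the factor-of-two loss you acknowledge. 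Your proposed remedy --- ``playing rows against columns'' along a monotone quantity --- is only a hope; no argument is given, and my attempts to instantiate it run into the same pigeonhole obstruction.

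The missing idea is a \emph{stronger} labelling fact than the one you invoke. At a disjoint-union node $t$ with children $t_1,t_2$, two vertices of $Y_t$ with the same label in $G_t$ must agree not only on their neighbourhoods outside $Y_t$, but on \emph{every} edge of $G$ not yet present in $G_t$ --- in particular on edges joining $Y_{t_1}$ to $Y_{t_2}$, since all such edges are created by join operations above $t$, which act uniformly on labels. With this, take $t$ to be the lowest node whose $Y_t$ contains a full row or column, say $C_{j_0}\subseteq Y_t$. Every selected row $R_{\ell_s}$ meets $Y_t$ (via $V_{\ell_s,j_0}$) and, by minimality of $t$, is contained in neither $Y_{t_1}$ nor $Y_{t_2}$; since $R_{\ell_s}$ is connected (Property~\ref{prop:row-connected}), it contains an edge $u_sw_s$ with $u_s\in Y_t$ and either $w_s\notin Y_t$ or $u_s,w_s$ in different children --- in both cases $u_sw_s\in E(G)\setminus E(G_t)$, while $w_s$ is non-adjacent to every $u_{s'}$ with $s'\neq s$ by the spacing. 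This forces the $p+1$ vertices $u_0,\ldots,u_p$ to carry pairwise distinct labels at the single node $t$, with no slack. (Note also that the paper does not prove this lemma itself but imports it from~\cite{DP16}; the argument there is of this Golumbic--Rotics type, counting labels at one union node rather than chasing externally split lines through the tree.)
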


\section{New Polynomial-Time Results}\label{sec:poly}

\begin{figure}
\begin{center}
\begin{tabular}{ccc}
\begin{minipage}{0.25\textwidth}
\begin{center}
\scalebox{0.9}{
\begin{tikzpicture}[scale=1,rotate=90]
%co-2P_1+P_3
\GraphInit[vstyle=Simple]
\SetVertexSimple[MinSize=6pt]
\Vertices{circle}{a,b,c,d,e}
\Edges(e,a,b)
\Edges(b,c,d,e,b)
\Edges(b,d)
\Edges(c,e)
\end{tikzpicture}}
\end{center}
\end{minipage}
&
\begin{minipage}{0.25\textwidth}
\begin{center}
\scalebox{0.9}{
\begin{tikzpicture}[scale=1,rotate=90]
%P_5
\GraphInit[vstyle=Simple]
\SetVertexSimple[MinSize=6pt]
\Vertices{circle}{a,b,c,d,e}
\Edges(d,e,a,b,c)
\end{tikzpicture}}
\end{center}
\end{minipage}
&
\begin{minipage}{0.25\textwidth}
\begin{center}
\scalebox{0.9}{
\begin{tikzpicture}[scale=1,rotate=90]
%P_2+P_3
\GraphInit[vstyle=Simple]
\SetVertexSimple[MinSize=6pt]
\Vertices{circle}{a,b,c,d,e}
\Edges(e,a,b)
\Edges(c,d)
\end{tikzpicture}}
\end{center}
\end{minipage}
\\
\\
$\overline{2P_1+P_3}$ & $P_5$ & $P_2+\nobreak P_3$ 
\end{tabular}
\end{center}
\caption{\label{fig:forb-easy}Forbidden induced subgraphs from Theorems~\ref{thm:coP5-2P1+P3-easy} and~\ref{thm:co2P1+P3-2P1+P3-easy}.}
\end{figure}
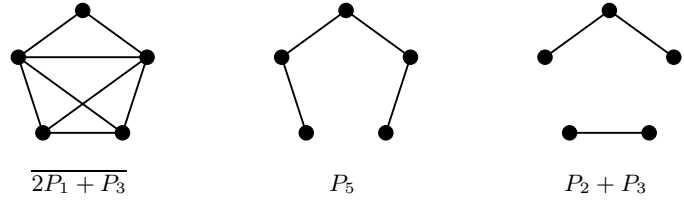

\begin{sloppypar}
In this section we prove Theorems~\ref{thm:coP5-2P1+P3-easy} and~\ref{thm:co2P1+P3-2P1+P3-easy}, which states that {\sc Graph Isomorphism} is polynomial-time solvable on $(\overline{2P_1+P_3},P_5)$-free graphs and $(\overline{2P_1+P_3},P_2+\nobreak P_3)$-free graphs, respectively (see also \figurename~\ref{fig:forb-easy}).
The complexity of {\sc Graph Isomorphism} on $(\overline{2P_1+P_3},\allowbreak 2P_2)$-free graphs was previously unknown, but since this class is contained in the classes of $(\overline{2P_1+P_3},P_5)$-free graphs and $(\overline{2P_1+P_3},P_2+\nobreak P_3)$-free graphs, Theorems~\ref{thm:coP5-2P1+P3-easy} and~\ref{thm:co2P1+P3-2P1+P3-easy} both imply that {\sc Graph Isomorphism} is also polynomial-time solvable on this class.
\end{sloppypar}

Before proving Theorems~\ref{thm:coP5-2P1+P3-easy} and~\ref{thm:co2P1+P3-2P1+P3-easy}, we first prove a useful lemma (see also \figurename~\ref{fig:special-lemma}).

\begin{figure}
\begin{center}
\begin{tikzpicture}[every node/.style={inner sep=1.4pt, outer sep=0pt, circle, draw,fill=black}]
    \node (A) at (90:2){};
    \node (B) at (18:2){};
    \node (C) at (306:2){};
    \node (D) at (234:2){};
    \node (E) at (162:2){};

    \draw[ultra thick] (A) -- (B) -- (C) -- (D) -- (E) -- (A) -- (C) -- (E) -- (B) -- (D) -- (A);
    
    \begin{scope}[rotate=90,shift={(2,0)}]  
      \draw[fill=white] (0,0) ellipse (0.25cm and 0.75cm);
      \node[fill=white] (A1) at (0,0) {};
      \node (A2) at (0,-.5) {};
      \node (A3) at (0,.5) {};

      \node[fill=none,draw=none] (A) at (0,-1) {$A_1^G$};
      \node (NA1) at (1,-.25) {};
      \node (NA2) at (1,.25) {};
      \draw (1,0) ellipse (0.25cm and 0.5cm);
      \node[fill=none,draw=none] at (1,-1) {$N_1^G$};
    \end{scope}

    \begin{scope}[rotate=18,shift={(2,0)}]
      \draw[fill=white] (0,0) ellipse (0.25cm and 0.5cm);
      \node[fill=white] (B1) at (0,.25) {};
      \node (B2) at (0,-.25) {};
      \node (NB1) at (1,-.5) {};
      \node (NB2) at (1,.5) {};
      \node (NB3) at (1,0) {};
      \draw (1,0) ellipse (0.25cm and 0.75cm);
      \node[fill=none,draw=none] (B) at (0,-0.75) {$A_2^G$};
      \node[fill=none,draw=none] at (1,-1) {$N_2^G$}; 
    \end{scope}

    \begin{scope}[rotate=306,shift={(2,0)}]
      \draw[fill=white] (0,0) ellipse (0.25cm and 0.75cm);
      \node[fill=white] (C1) at (0,0) {};
      \node (C2) at (0,-.5) {};
      \node (C3) at (0,.5) {};
      \node (NC1) at (1,-.5) {};
      \node (NC2) at (1,.5) {};
      \node (NC3) at (1,0) {};
      \draw (1,0) ellipse (0.25cm and 0.75cm);
      \node[fill=none,draw=none] (C) at (0,-1.1) {$A_3^G$};
      \node[fill=none,draw=none] at (1,-1.1) {$N_3^G$};
    \end{scope}

    \begin{scope}[rotate=234,shift={(2,0)}]
      \draw[fill=white] (0,0) ellipse (0.25cm and 0.5cm);
      \node[fill=white] (D1) at (0,.25) {};
      \node (D2) at (0,-.25) {};
      \node (ND) at (1,0) {};
      \draw (1,0) ellipse (0.25cm and 0.25cm);
      \node[fill=none,draw=none] (D) at (0,-.75) {$A_4^G$};
      \node[fill=none,draw=none] at (1,-.6) {$N_4^G$};
    \end{scope}

    \begin{scope}[rotate=162,shift={(2,0)}]
      \draw[fill=white] (0,0) ellipse (0.25cm and 0.5cm);
      \node[fill=white] (E1) at (0,.25) {};
      \node (E2) at (0,-.25) {};
      \node (NE1) at (1,-.25) {};
      \node (NE2) at (1,.25) {};
      \draw (1,0) ellipse (0.25cm and 0.5cm);
      \node[fill=none,draw=none] (E) at (0,-.75) {$A_5^G$};
      \node[fill=none,draw=none] at (1,-.75) {$N_5^G$};
    \end{scope}
    \draw (A2) -- (NA1) -- (A1) -- (NA2);
    \draw (NB3) -- (B1) -- (NB2);
    \draw (B2) -- (NB1);
    \draw (NC1) -- (C2) -- (NC2) -- (C1) -- (NC3);
    \draw (NC2) -- (C3);
    \draw (ND) -- (D1);
    \draw (NE1) -- (E1) -- (NE2) -- (E2) -- (NE1);

    \draw (NA1) -- (NA2);
    \draw (NB1) -- (NB3) -- (NB2);
    \draw (NC1) -- (NC3);

    \draw (NA1) -- (NB2);
    \draw (NA2) -- (NE1);
    \draw (ND) -- (NC1);

    \begin{scope}[shift={(7,-0.5)}]
      \draw (0,0) circle (1cm);
      \node (b1) at (90:0.75){};
      \node (b2) at (18:0.75){};
      \node (b3) at (306:0.75){};
      \node (b4) at (234:0.75){};
      \node (b5) at (162:0.75){};
      \node[draw=none, fill=none] at (0,1.5) {$B^G$};
    \end{scope}

    \draw (b1) -- (b2) -- (b3) -- (b1);
    \draw (b4) -- (b5);

    \draw (NA1) -- (b1);
    \draw (b5) -- (NA1);
    \draw (b5) -- (NB1);
    \draw (NC1) -- (b4) -- (NC2);
  \end{tikzpicture}
\end{center}
\caption{\label{fig:special-lemma}An example of Lemma~\ref{lem:K5-extension} applied to a $\overline{2P_1+P_3}$-free graph.
White vertices denote the vertices of~$K^G$ and thick edges between two sets of vertices indicate that these sets are complete to each other.}
\end{figure}
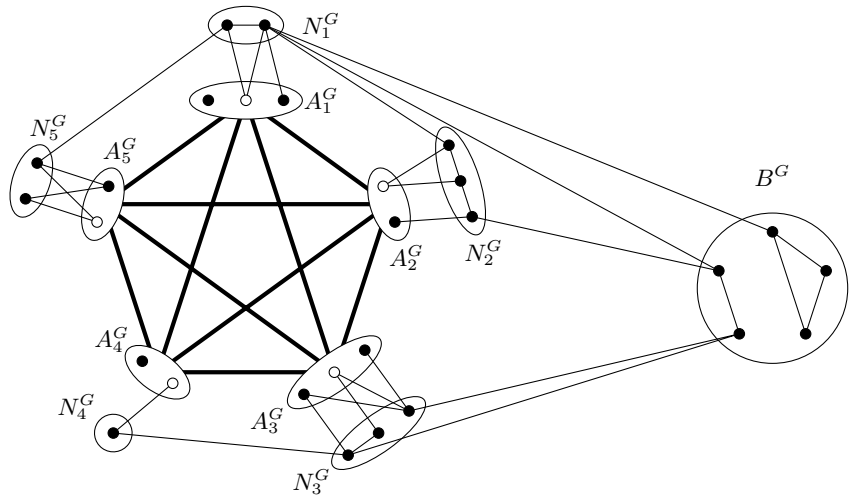

\begin{lemma}\label{lem:K5-extension}
Let~$G$ be a $\overline{2P_1+P_3}$-free graph containing an induced~$K_5$ with vertex set~$K^G$.
Then~$V(G)$ can be partitioned into sets $A^G_1,\ldots,A^G_p,N^G_1,\ldots,N^G_p,B^G$ for some $p \geq 5$ such that:
\begin{enumerate}[(i)]
\item $K^G \subseteq \bigcup A^G_i$;
\item $G[\bigcup A^G_i]$ is a complete multipartite graph, with partition $A^G_1,\ldots,A^G_p$;
\item For every $i \in \{1,\ldots,p\}$, every vertex of~$N^G_i$ has a neighbour in~$A^G_i$, but is anti-complete to~$A^G_j$ for every $j \in \{1,\ldots,p\} \setminus \{i\}$; and
\item $B^G$ is anti-complete to~$\bigcup A^G_i$.
\end{enumerate}
Furthermore, given~$K^G$, this partition is unique (up to permuting the indices on the~$A^G_i$s and corresponding~$N^G_i$s) and can be found in polynomial time.
\end{lemma}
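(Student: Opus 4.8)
The plan is to read off the entire structure from a single observation about how an arbitrary vertex attaches to the clique $K^G=\{k_1,\dots,k_5\}$. Writing $d(v)=|N(v)\cap K^G|$ for $v\in V(G)\setminus K^G$, I would first prove the degree dichotomy $d(v)\in\{0,1,4,5\}$. Indeed, if $v$ had at least two neighbours $k_a,k_b$ and at least two non-neighbours $k_c,k_d$ in $K^G$, then $G[v,k_a,k_b,k_c,k_d]$ is a $K_5$ with exactly the two edges $vk_c,vk_d$ deleted; since these two missing edges meet at $v$, this induced subgraph is a $\overline{2P_1+P_3}$ (a $K_5$ minus two adjacent edges), a contradiction. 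This ``two non-edges meeting at a common vertex of an induced $K_5$ yield a crossed house'' trick is the engine of the whole argument, and I expect essentially every step below to reduce to producing two such non-edges.

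Next I would build the complete multipartite core $A=K^G\cup\{v : d(v)\in\{4,5\}\}$ and identify its parts. For $i\in\{1,\dots,5\}$ let $A_i$ consist of $k_i$ together with every vertex whose unique non-neighbour in $K^G$ is $k_i$ (these are exactly the $d=4$ vertices), and group the remaining core vertices (those with $d=5$) into further parts by the relation of non-adjacency. Showing that $G[A]$ is complete multipartite with partition $A_1,\dots,A_p$ is then a short case analysis, each case closed by the trick above: two $d=4$ vertices missing the same $k_i$ must be non-adjacent (else $k_i$ sees two non-edges), two $d=4$ vertices missing different $k_i,k_j$ must be adjacent (else, with $u\not\sim v$ and $u$ missing $k_i$, the non-edges $uv$ and $uk_i$ meet at $u$), a $d=5$ vertex is adjacent to every $d=4$ vertex, and non-adjacency among $d=5$ vertices is transitive (if $u\not\sim v\not\sim w$ but $u\sim w$, then $v$ sees the two non-edges $uv,vw$). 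Since the five vertices of $K^G$ lie in five distinct parts, this yields $p\ge 5$ and establishes conditions (i) and (ii).

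It then remains to place every vertex $v$ with $d(v)\in\{0,1\}$, and the crux is that such a $v$ has neighbours in at most one part. For $d(v)=1$ with $N(v)\cap K^G=\{k_1\}$, I would show $v$ is anti-complete to each $A_j$ with $j\neq1$: if $v$ were adjacent to some $w\in A_j$, then $w$ (being in the core) is adjacent to all but at most one vertex of $K^G$, so I can pick $k_a,k_b\in N(w)\cap K^G$ with $a,b\neq 1$, both non-adjacent to $v$, and then $v$ sees the two non-edges $vk_a,vk_b$ in $G[v,w,k_1,k_a,k_b]$. For $d(v)=0$, the analogous argument, using two common $K^G$-neighbours of two core vertices lying in distinct parts, shows $v$ cannot have neighbours in two different parts. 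Hence each such $v$ either has all its core-neighbours inside a single $A_i$, in which case it goes into $N_i$, or has no core-neighbour at all, in which case it goes into $B$; conditions (iii) and (iv) then hold by construction, and the sets $N_i$ are pairwise disjoint because each non-core vertex with a core-neighbour is anti-complete to all but one part.

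Finally, for the ``furthermore'' clause I would observe that, given $K^G$, the quantity $d(v)$, the core $A$, its partition into non-adjacency classes, and the assignment of each remaining vertex to the unique $N_i$ containing it (or to $B$) are all forced and computable in polynomial time, the only freedom being the labelling of the parts, which gives uniqueness up to permuting the indices. I expect the main obstacle to be organising the case analysis of the second and third steps so that in every configuration one really does exhibit two non-edges of the induced $K_5$ sharing a vertex; once the degree dichotomy $d(v)\in\{0,1,4,5\}$ is secured, each remaining step is a routine instance of this one idea.
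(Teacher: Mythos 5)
Your proposal is correct and follows essentially the same route as the paper: the same crossed-house observation on the induced $K_5$ gives the degree dichotomy, your core $A$ is exactly the paper's $L^G$, and the placement of the remaining vertices into the $N_i$ and $B$ uses the same ``two non-neighbours in $K^G$'' argument. The only (stylistic) difference is that the paper establishes that $G[L^G]$ is complete multipartite in a single step, by showing it is $(P_1+\nobreak P_2)$-free (any induced $P_1+\nobreak P_2$ in $L^G$ has two common neighbours in $K^G$, yielding a $\overline{2P_1+P_3}$), rather than via your explicit case analysis on the parts.
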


\begin{proof}
Let~$G$ be a $\overline{2P_1+P_3}$-free graph containing an induced~$K_5$ with vertex set~$K^G$.
If a vertex $v \in V(G) \setminus K^G$ has two neighbours $x,x' \in K^G$ and two non-neighbours $y,y' \in K^G$, then $G[x,x',y,v,y']$ is a~$\overline{2P_1+P_3}$, a contradiction.
Therefore every vertex in $V(G) \setminus K^G$ has either at most one non-neighbour in~$K^G$ or at most one neighbour in~$K^G$.
Let~$L^G$ denote the set of vertices that are either in~$K^G$ or have at most one non-neighbour in~$K^G$ and note that~$L^G$ is uniquely defined by the choice of~$K^G$.

We claim that~$G[L^G]$ is a complete multipartite graph.
Suppose, for contradiction, that~$G[L^G]$ is not complete multipartite.
Then~$G[L^G]$ contains an induced $P_1+\nobreak P_2=\overline{P_3}$, say on vertices $v,v',v''$ (note that
some of these vertices may be in~$K^G$).
Now each of $v,v',v''$ has at most one non-neighbour in~$K^G$ and if a vertex $w \in \{v,v',v''\}$ is in~$K^G$, then it is adjacent to every vertex in~$K^G \setminus \{w\}$.
Therefore, since $|K^G|=5$, there must be vertices $u,u'\in K^G\setminus \{v,v',v''\}$ that are complete to $\{v,v',v''\}$.
Now $G[u,u',v',v,v'']$ is a~$\overline{2P_1+P_3}$.
This contradiction completes the proof that~$G[L^G]$ is complete multipartite.

We let $A^G_1,\ldots,A^G_p$ be the partition classes of the complete multipartite graph~$G[L^G]$. 
Note that $p \geq 5$, since each~$A^G_i$ contains at most one vertex of~$K^G$.
We claim that each vertex not in~$L^G$ has neighbours in at most one set~$A^G_i$.
Suppose, for contradiction, that there is a vertex $v \in V(G) \setminus L^G$ with neighbours in two distinct sets~$A^G_i$, say~$v$ is adjacent to $u \in A^G_1$ and $u' \in A^G_2$.
Since $v \notin L^G$, the vertex~$v$ has at most one neighbour in~$K^G$.
Since $|K^G|=5$, there must be two vertices $y,y' \in K^G \setminus (A^G_1 \cup A^G_2)$ that are non-adjacent to~$v$.
Now $G[u,u',y,v,y']$ is a~$\overline{2P_1+P_3}$, a contradiction.
Therefore every vertex not in~$L^G$ has neighbours in at most one set~$A^G_i$.
Let~$N^G_i$ be the set of vertices in $V(G) \setminus L^G$ that have neighbours in~$A^G_i$ and let~$B^G$ be the set of vertices in $V(G) \setminus L^G$ that are anti-complete to~$L^G$.
Finally, note that the partition of~$V(G)$ into sets $A^G_1,\ldots,A^G_p,N^G_1,\ldots,N^G_p,B^G$ can be found in polynomial time and is unique (up to permuting the indices on the~$A^G_i$s and corresponding~$N^G_i$s).\qed
\end{proof}

For the $(\overline{2P_1+P_3},P_5)$-free case, we will use the following observation.

\begin{sloppypar}
\begin{observation}\label{obs:add-false-twin-is-safe}
If~$G$ is a graph containing a vertex~$x$ and~$G^x$ is the graph obtained from~$G$ by adding a new vertex~$x'$ with the same neighbourhood as~$x$, then~$G^x$ is $(\overline{2P_1+P_3},P_5)$-free if and only if~$G$ is $(\overline{2P_1+P_3},P_5)$-free.
\end{observation}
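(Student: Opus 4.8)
The plan is to prove both directions via the contrapositive, exploiting the fact that adding a \emph{false twin} (a new vertex with exactly the same open neighbourhood, in particular non-adjacent to~$x$) creates a very controlled change in the induced subgraph structure. Since~$G$ is an induced subgraph of~$G^x$ (just delete~$x'$), the ``only if'' direction is immediate: any forbidden induced subgraph in~$G$ survives in~$G^x$. So the real content is the ``if'' direction, where I assume~$G$ is $(\overline{2P_1+P_3},P_5)$-free and must show~$G^x$ is too.

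For the ``if'' direction I would argue by contradiction: suppose~$G^x$ contains an induced~$\overline{2P_1+P_3}$ or an induced~$P_5$ on a vertex set~$S$. If $x' \notin S$, then $S \subseteq V(G)$ and we already have a forbidden subgraph in~$G$, a contradiction. So we may assume $x' \in S$. The key observation is that~$x$ and~$x'$ are false twins in~$G^x$: they have the same neighbourhood and are non-adjacent. Hence if $x \notin S$, I can replace~$x'$ by~$x$ in~$S$ to obtain a set $S' = (S \setminus \{x'\}) \cup \{x\} \subseteq V(G)$ inducing an isomorphic copy of the forbidden graph, again a contradiction. The only remaining case is $\{x,x'\} \subseteq S$, and here I must rule out each forbidden graph directly.

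The main obstacle, then, is the case $\{x,x'\} \subseteq S$. Here~$x$ and~$x'$ are non-adjacent twins inside the induced subgraph $G^x[S]$, so they form an independent pair whose neighbourhoods within~$S$ coincide. I would handle the two forbidden graphs separately. For~$P_5$, note that a path has no two vertices with the same neighbourhood along the path other than possibly the two endpoints, but even two endpoints of~$P_5$ are at distance~$4$ and have distinct (disjoint) neighbourhoods; more simply, any two non-adjacent twins in~$P_5$ would have to be the two degree-one endpoints, which are non-adjacent but have different neighbours, so~$P_5$ contains no pair of non-adjacent twins at all---contradiction. For~$\overline{2P_1+P_3}$ (the gem's complement; equivalently the \emph{crossed house}), I would inspect its structure: I need to check whether it contains two non-adjacent vertices with identical neighbourhoods among the remaining three. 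A direct check of its adjacency pattern shows that no such pair of false twins exists, which yields the final contradiction.

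Thus in every case the assumed forbidden subgraph in~$G^x$ forces one in~$G$, completing the contrapositive. The only genuinely case-specific verification is the combinatorial claim that neither~$P_5$ nor~$\overline{2P_1+P_3}$ admits a pair of non-adjacent false twins; everything else is the generic twin-replacement argument, which is robust and does not depend on the particular forbidden graphs. I expect the write-up to be short, with the $\{x,x'\}\subseteq S$ analysis being the only place requiring an explicit look at the two small graphs.
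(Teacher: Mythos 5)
Your proof is correct and takes essentially the same route as the paper's: both hinge on the observation that neither $P_5$ nor $\overline{2P_1+P_3}$ contains two (necessarily non-adjacent) vertices with the same open neighbourhood, so at most one of $x,x'$ lies in the offending vertex set, and the twin-replacement/symmetry step finishes the contrapositive. One minor slip worth fixing: $\overline{2P_1+P_3}$ is the crossed house but is \emph{not} the complement of the gem (that would be $P_1+P_4$); this does not affect the argument.
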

\end{sloppypar}
\begin{proof}
Since~$G$ is an induced subgraph of~$G^x$, if~$G^x$ is $(\overline{2P_1+P_3},P_5)$-free then~$G$ is $(\overline{2P_1+P_3},P_5)$-free.
Suppose, for contradiction, that~$G^x$ contains a set of vertices~$X$ that induce a~$\overline{2P_1+P_3}$ or a~$P_5$, but that~$G$ is $(\overline{2P_1+P_3},P_5)$-free.
Since neither~$\overline{2P_1+P_3}$ nor~$P_5$ has two vertices with the same neighbourhood, it follows that either $x \notin X$ or $x' \notin X$.
By symmetry, we may assume that $x' \notin X$, in which case $X \subseteq V(G)$, so~$G$ contains an induced~$\overline{2P_1+P_3}$ or~$P_5$, a contradiction.\qed
\end{proof}

\begin{theorem}\label{thm:coP5-2P1+P3-easy}
{\sc Graph Isomorphism} is polynomial-time solvable on $(\overline{2P_1+P_3},P_5)$-free graphs.
\end{theorem}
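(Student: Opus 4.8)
The plan is to branch on whether the input graphs contain an induced~$K_5$. Given two $(\overline{2P_1+P_3},P_5)$-free graphs~$G$ and~$H$, I first test in polynomial time whether each contains a~$K_5$. If neither does, then both are $(K_5,P_5)$-free and Lemma~\ref{lem:K_t-P_5-free-gi-poly} (with $t=5$) settles the instance; if exactly one contains a~$K_5$ the graphs are non-isomorphic. So the real work is the case where both~$G$ and~$H$ contain an induced~$K_5$, where I intend to exploit the structure provided by Lemma~\ref{lem:K5-extension}.

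In that case I would fix a single~$K_5$ with vertex set~$K^G$ in~$G$ and apply Lemma~\ref{lem:K5-extension} to obtain the (unique, given~$K^G$) partition into the complete multipartite core $\bigcup A^G_i$, the satellite sets~$N^G_i$, and the far set~$B^G$. The crucial point is that any isomorphism $f\colon G\to H$ sends~$K^G$ to some~$K_5$ in~$H$, and, because the whole partition is defined by an isomorphism-invariant property of~$K^G$, it must carry the decomposition of~$G$ to the decomposition of~$H$ induced by $f(K^G)$. Since~$H$ has only $O(n^5)$ induced copies of~$K_5$, I would enumerate them, compute the associated decomposition of~$H$ for each, and test for a \emph{decomposition-respecting} isomorphism; if $G\cong H$, at least one choice succeeds. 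This removes the difficulty that the core is not canonical across the two graphs.

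It then remains to test, for fixed decompositions, whether there is an isomorphism mapping core to core, matching the parts~$A^G_i$ bijectively to the parts~$A^H_j$ (so that corresponding parts have equal size and corresponding satellites~$N^G_i$, $N^H_j$), and mapping~$B^G$ to~$B^H$. Here I would use Observation~\ref{obs:add-false-twin-is-safe} to normalise false twins (in particular vertices of a common part with identical attachments), reducing the core to a multiset of part sizes decorated by satellite profiles. The forbidden subgraphs constrain the satellites sharply: forbidding~$P_5$ forces the attachment of~$N^G_i$ to~$A^G_i$ to be highly structured (two satellite vertices whose neighbourhoods in~$A^G_i$ properly cross must themselves be adjacent, as otherwise a $v$--$a$--$a''$--$a'$--$v'$ path is an induced~$P_5$), while forbidding~$\overline{2P_1+P_3}$ rules out a $\paw$ that sees a common core neighbour, which I would leverage (in the spirit of Corollary~\ref{cor:pawP5-free-bdd-cw}) to bound the clique-width of each satellite~$N^G_i$ and finish the satellite comparison via Theorem~\ref{thm:gi-poly-bdd-cw}.

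The step I expect to be the main obstacle is assembling these ingredients into a single test: the sets~$N^G_i$ and~$B^G$ may be adjacent to one another, so matching the parts~$A^G_i$ cannot be decoupled from matching the graph $G[\bigcup N^G_i\cup B^G]$ together with the labels recording to which part (and how) each satellite attaches. Making this joint matching tractable -- showing that the decorated satellite-plus-$B$ region admits a polynomial-time isomorphism test, either by a bounded-clique-width argument invoking Theorem~\ref{thm:gi-poly-bdd-cw} or by a direct combinatorial matching of part profiles -- and verifying that it interacts correctly with the enumeration over choices of~$K_5$ in~$H$, is where the delicate case analysis will lie.
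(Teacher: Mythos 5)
Your setup is sound and matches the paper's opening moves: reduce to the case where both graphs contain an induced~$K_5$ via Lemma~\ref{lem:K_t-P_5-free-gi-poly}, and invoke Lemma~\ref{lem:K5-extension} to decompose around a fixed~$K^G$ (enumerating the $O(n^5)$ candidate copies of~$K_5$ in~$H$ is also fine in principle). But the proof stops exactly where the real work begins, and you say so yourself: the ``joint matching'' of the parts~$A^G_i$, the satellites~$N^G_i$ and the set~$B^G$ is left as an acknowledged obstacle. Bounding the clique-width of each $G[N^G_i]$ separately does not suffice, because Theorem~\ref{thm:gi-poly-bdd-cw} tests isomorphism of whole graphs, not of decorated pieces; you would need a canonisation or coloured-isomorphism routine for bounded clique-width together with an argument that the attachments between distinct satellites, between satellites and~$B^G$, and between satellites and their parts can be encoded consistently. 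None of that is supplied, so as written this is a genuine gap rather than a routine verification.

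The paper closes the argument with two ideas that are absent from your proposal. First, it proves that if at least three of the sets~$N^G_i$ are non-empty, then the $P_5$-freeness forces the non-empty satellites to be pairwise complete or pairwise anti-complete, and in either case the \emph{entire graph}~$G$ has bounded clique-width (built up via Facts~\ref{fact:comp} and~\ref{fact:bip} and Corollary~\ref{cor:pawP5-free-bdd-cw}), so Theorem~\ref{thm:gi-poly-bdd-cw} applies directly --- no decorated matching is needed. Second, in the remaining case where every choice of~$K^G$ leaves at most two satellites non-empty, there are at least three parts~$A^G_i$ with $N^G_i=\emptyset$; the paper replaces the union of these ``unused'' parts by a complete bipartite gadget $K_{n+j,n+j}$ whose size encodes the multiset of part sizes (the \emph{type}), shows via Observation~\ref{obs:add-false-twin-is-safe} that this preserves $(\overline{2P_1+P_3},P_5)$-freeness while destroying every~$K_5$, and then applies Lemma~\ref{lem:K_t-P_5-free-gi-poly} to the resulting $(K_5,P_5)$-free graphs, recovering an isomorphism of the originals from one of the gadget graphs. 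Your plan to ``normalise false twins'' with Observation~\ref{obs:add-false-twin-is-safe} gestures in this direction but does not contain the reduction-to-$K_5$-freeness idea, which is what actually makes the case tractable. To complete your proof you would need to either supply both of these arguments or find a workable substitute for the joint matching step.
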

\begin{proof}
Since {\sc Graph Isomorphism} can be solved component-wise, we need only consider connected graphs.
Therefore, since {\sc Graph Isomorphism} is polynomial-time solvable on $(K_5,P_5)$-free graphs by Lemma~\ref{lem:K_t-P_5-free-gi-poly}, and we can test whether a graph is $K_5$-free in polynomial time, it only remains to consider the class of connected $(\overline{2P_1+P_3},P_5)$-free graphs~$G$ that contain an induced~$K_5$.
Let~$K^G$ be the vertices of such a~$K_5$ in~$G$.
Let $A^G_1,\ldots,A^G_p,N^G_1,\ldots,N^G_p,B^G$ be defined as in Lemma~\ref{lem:K5-extension} and let $L^G = \bigcup A^G_i$.
We start by proving the following claim.

\clm{\label{clm2:3Ni}If at least three~$N^G_i$'s are non-empty, then $G$ has bounded clique-width.}
Suppose, for contradiction, that we can find vertices $x,y,z$ with $x \in N^G_i, y \in N^G_j, z \in N^G_k$ with $i,j,k$ pairwise distinct such that~$z$ is adjacent to~$y$, but not to~$x$.
Let $x' \in A^G_i$ be a neighbour of~$x$, let $y' \in A^G_j$ be a neighbour of~$y$ 
and let $x'' \in K^G \setminus (A^G_i \cup A^G_j \cup A^G_k)$
(which exists since the sets~$A^G_i$, $A^G_j$ and~$A^G_k$ each contain at most one vertex of~$K^G$, while $|K^G|=5$).
Then $G[x'',x',x,y,z]$ or $G[x,x',y',y,z]$ is a~$P_5$ if~$x$ is adjacent or non-adjacent to~$y$, respectively.
It follows that the~$N^G_i$'s are either pairwise anti-complete or pairwise complete.
We consider these two cases separately.

\thmcase{\label{case2:2Ni-anti}At least three~$N^G_i$'s are non-empty and the~$N^G_i$'s are pairwise anti-complete.}
Suppose, for contradiction, that there is a vertex $x \in B^G$.
Since~$G$ is connected, $x$ must have a neighbour $y \in N^G_i$ for some~$i$.
Choose a vertex $z \in N^G_j$ for some $j \neq i$ and note that~$z$ is non-adjacent to~$y$.
Let $y' \in A^G_i$ and $z' \in A^G_j$ be neighbours of~$y$ and~$z$, respectively, and let $w \in K^G \setminus (A^G_i \cup A^G_j)$.
Then $G[w,y',y,x,z]$ or $G[x,y,y',z',z]$ is a~$P_5$ if~$x$ is adjacent or non-adjacent to~$z$, respectively.
This contradiction implies that $B^G = \emptyset$.

Suppose, for contradiction, that there are two adjacent vertices $y,y' \in N^G_i$ that have different neighbourhoods in~$A^G_i$, say~$y$ is adjacent to $z \in A^G_i$, but~$y'$ is not.
Let $x \in N^G_j$ for some $j \neq i$ and let $x' \in A^G_j$ be a neighbour of~$x$; note that~$x$ is non-adjacent to~$y$ and~$y'$.
Then $G[x,x',z,y,y']$ is a~$P_5$.
This contradiction implies that if two vertices in some set~$N^G_i$ are in the same component of~$G[N^G_i]$, then they must have the same neighbourhood in~$A^G_i$.

Suppose, for contradiction, that for some~$i$ there are vertices $x,y \in N^G_i$ with incomparable neighbourhoods in~$A^G_i$.
Note that in this case~$x$ and~$y$ must be in different components of~$G[N^G_i]$, so they must be non-adjacent to each other.
Let $x' \in A^G_i$ be a neighbour of~$x$ that is non-adjacent to~$y$, let $y' \in A^G_i$ be a neighbour of~$y$ that is non-adjacent to~$x$ and let $z \in K^G \setminus A^G_i$.
Then $G[x,x',z,y',y]$ is a~$P_5$.
This contradiction implies that the components of~$G[N^G_i]$ can be ordered by containment of their neighbourhoods in~$A^G_i$.

We will now show that $G_i:=G[A^G_i \cup N^G_i]$ has bounded clique-width.
We may order the vertices of~$A^G_i$, say $a_1,\ldots,a_r$, in decreasing order of neighbourhoods in~$N^G_i$ (breaking ties arbitrarily); note that every vertex of~$N^G_i$ is adjacent to~$a_1$.
We partition~$N^G_i$ into sets $X_1,\ldots,X_r$ such that the vertices of~$X_j$ are adjacent to~$a_k$ if and only if $k \leq j$.
Since~$a_1$ dominates~$N^G_i$, and~$G[N^G_i\cup\{a_1\}]$ is $\overline{2P_1+P_3}$-free, it follows that $G[N^G_i]$ is $\paw$-free (recall that the $\paw$ is $\overline{P_1+P_3}$).
By Corollary~\ref{cor:pawP5-free-bdd-cw}, it follows that~$G[N^G_i]$ has bounded clique-width.
Therefore, for some constant~$c$, we can construct each of $G[X_1],\ldots,G[X_r]$ using only labels from~$\{1,\ldots,c\}$.
We will now construct~$G_i$ using two new labels~$1'$ and~$2'$ in addition to the labels from~$\{1,\ldots,c\}$.
For $j \in \{1,\ldots,r\}$, suppose we have constructed $G_i[X_1,\ldots,X_{j-1} \cup \{a_1,\ldots,a_{j-1}\}]$ such that the vertices in $X_1,\ldots,X_{j-1}$ have label~$1'$ and the vertices in $\{a_1,\ldots,a_{j-1}\}$ have label~$2'$ (if $j=1$, this means we have constructed the empty graph).
We then construct~$G[X_j]$ using labels from~$\{1,\ldots,c\}$ and construct~$a_j$ with label~$2'$ and take the disjoint union of these and the graph constructed so far.
We join vertices with labels in~$\{1,\ldots,c\}$ to the vertices with label~$2'$ and then relabel the vertices with label~$\{1,\ldots,c\}$ to have label~$1'$.
We have now constructed $G_i[X_1,\ldots,X_j \cup \{a_1,\ldots,a_j\}]$ such that the vertices in $X_1,\ldots,X_j$ have label~$1'$ and the vertices in $\{a_1,\ldots,a_j\}$ have label~$2'$.
By induction, we can therefore construct~$G_i$ with~$c+2$ labels.
It follows that~$G_i$ has bounded clique-width.

Now, for every~$i$, let~$G_i^*$ be the graph obtained from~$G_i$ by complementing~$A^G_i$ and note that~$G_i^*$ has bounded clique-width by Fact~\ref{fact:comp}.
Let~$G^*$ be the disjoint union of the~$G_i^*$ graphs.
Note that~$G$ is the graph obtained from~$G^*$ by complementing~$L^G$.
By Fact~\ref{fact:comp}, it follows that~$G$ has bounded clique-width.
This completes Case~\ref{case2:2Ni-anti}.

\thmcase{\label{case2:3Ni-comp}At least three~$N^G_i$'s are non-empty and the~$N^G_i$'s are pairwise complete.}
We first claim that~$B^G$ is complete to~$\bigcup N^G_i$.
Suppose, for contradiction, that there is a vertex in $x \in B^G$ that has both a neighbour~$y$ and a non-neighbour~$z$ in $\bigcup N^G_i$.
Since there is more than one non-empty set~$N^G_i$, we may assume that $y \in N^G_i$ and $z \in N^G_j$ for some $i \neq j$; note that this means~$y$ is adjacent to~$z$.
Let $z' \in A^G_j$ be a neighbour of~$z$ and let $z'' \in K^G \setminus (A^G_i \cup A^G_j)$.
Then~$G[x,y,z,z',z'']$ is a~$P_5$, a contradiction, and so every vertex of~$B^G$ is either complete or anti-complete to~$\bigcup N^G_i$.
Since~$G$ is connected, if not every vertex of~$B^G$ is complete to~$\bigcup N^G_i$, then there must be adjacent vertices $x,x' \in B^G$ that are complete and anti-complete to~$\bigcup N^G_i$, respectively.
Let $y \in N^G_i$ for some~$i$, let $y' \in A^G_i$ be a neighbour of~$y$ and let $z \in K^G \setminus A^G_i$.
Then $G[x',x,y,y',z]$ is a~$P_5$.
This contradiction implies that~$B^G$ is indeed complete to~$\bigcup N^G_i$.

Now suppose, for contradiction, that for some~$i$ there is a vertex $z \in N^G_i$ that has a non-neighbour $x \in A^G_i$.
Let $x' \in A^G_i$ be a neighbour of~$z$, let $z' \in N^G_j$ for some $j \neq i$ and let $y \in K^G \setminus (A^G_i \cup A^G_j)$.
Then $G[x,y,x',z,z']$ is a~$P_5$, a contradiction.
It follows that for every~$i$, $N^G_i$ is complete to~$A^G_i$.

Now~$B^G$ is dominated by a vertex of~$N^G_i$ for some~$i$.
Moreover, for every~$i$ the set~$N^G_i$ is dominated by a vertex in~$A^G_i$.
Since~$G$ is a $\overline{2P_1+P_3}$-free graph, it follows that~$G[B^G]$ and, for every~$i$,~$G[N^G_i]$ are $\paw$-free graphs and thus have bounded clique-width by Corollary~\ref{cor:pawP5-free-bdd-cw}.
Since~$G[A^G_i]$ is an edgeless graph for every~$i$, it has clique-width~$1$.
The graph $G_i:=G[N^G_i \cup A^G_i]$ can be obtained from~$G[N^G_i]$ and~$G[A^G_i]$  by taking their disjoint union and applying a bipartite complementation between~$N^G_i$ and~$A^G_i$.
By Fact~\ref{fact:bip}, it follows that~$G_i$ has bounded clique-width.

Let~$G^*_i$ be the graph obtained from~$G_i$ by complementing~$A^G_i$ and~$N^G_i$.
Then~$G^*_i$ has bounded clique-width by Fact~\ref{fact:comp}.
We take the disjoint union of the~$G^*_i$'s and~$G[B^G]$.
Now, if we complement~$L^G$ and~$\bigcup N^G_i$ and apply a bipartite complementation between~$B^G$ and~$\bigcup N^G_i$ we obtain the graph~$G$.
By Facts~\ref{fact:comp} and~\ref{fact:bip}, it follows that~$G$ has bounded clique-width.
This completes Case~\ref{case2:3Ni-comp} and therefore completes the proof of Claim~\ref{clm2:3Ni}.\dia

\medskip
\noindent
We now describe an algorithm to prove Theorem~\ref{thm:coP5-2P1+P3-easy}.
Suppose~$G$ and~$H$ are $(\overline{2P_1+P_3},P_5)$-free graphs.
We can enumerate all sets~$K^G$ that induce a $K_5$ in~$G$ in polynomial time.
By Lemma~\ref{lem:K5-extension}, we can therefore test in polynomial time whether there is a~$K^G$ such that at least three~$N^G_i$ sets are non-empty; if so, then~$G$ has bounded clique-width by Claim~\ref{clm2:3Ni} and we apply Theorem~\ref{thm:gi-poly-bdd-cw}.

We may now assume that for every~$K^G$ at most two sets~$N^G_i$ are non-empty.
We may also assume that the same is true for every~$K^H$ in~$H$ (otherwise we immediately output that~$G$ and~$H$ are not isomorphic).
We will now explain how to transform~$G$ into a graph~$G'$ that is $K_5$-free.

First note that if $x \in A^G_i$ for some~$i$ such that $N^G_i=\emptyset$, then $L^G=A^G_i \cup N(x)$.
Since~$A^G_i$ is the set of vertices in~$G$ with the same neighbourhood as~$x$, every set~$L^G$ can be written as $N(x)\cup\{y \;|\; N(y)=N(x)\}$ for some vertex~$x$ of~$G$.
Moreover, for every choice of~$L^G$, there are at least three sets~$A^G_i$ such that~$N_i^G=\emptyset$.
Now $L^G=N(x)\cup\{y \;|\; N(y)=N(x)\}$ holds for every vertex~$x$ in such a set~$A_i^G$, so every~$L^G$ can be obtained in this way from at least three possible vertices~$x$.
We conclude that there are at most~$\frac{n}{3}$ possible sets~$L^G$.

Given a set~$L^G$, recall that the sets~$A_i^G$ are uniquely determined (up to reordering).
Let~$L'^G$ denote the set $\bigcup_{i \; | \; N^G_i=\emptyset}A^G_i$; we say that the multiset $\{|A^G_i| \; | \; N^G_i=\emptyset\}$ is the \emph{type} of~$L^G$.
We consider all possibilities for~$L^G$ in~$G$ and number the different types that occur $1,\ldots,t$; note that the possible sets~$L'^G$ are pairwise vertex-disjoint.
Suppose that for $j \in \{1,\ldots,t\}$ we replace the vertices of~$L'^G$ in each set~$L^G$ of type~$j$ by a copy of~$K_{n+j,n+j}$ that is complete to~$L^G \setminus L'^G$, where~$n$ denotes the number of vertices in the original graph~$G$.
Note that since~$L'^G$ is a complete multipartite graph with at least three parts, this would change the graph in the same way as deleting all but two parts of this multipartite graph and then expanding the remaining two parts by adding false twins of vertices already in the graph.
By Observation~\ref{obs:add-false-twin-is-safe}, the resulting graph~$G'$ is still $(\overline{2P_1+P_3},P_5)$-free.
Furthermore, applying this operation removes every~$K_5$ from the graph, so~$G'$ is a $(K_5,P_5)$-free graph.
We can apply the same transformation to~$H$ to obtain a $(K_5,P_5)$-free graph~$H'$.
For~$G$ and~$H$ we can enumerate all possible sets~$L'^G$ and~$L'^H$ (using Lemma~\ref{lem:K5-extension}), and, as observed above, there are at most~$\frac{n}{3}$ such sets in each graph.
For each type of an~$L'^G$ in~$G$, $H$ must have the same number of sets~$L'^H$ with this type as~$G$ does (and vice verse), otherwise we output that~$G$ and~$H$ are not isomorphic.
We therefore number the types of~$L'^G$ in~$G$ and the types~$L'^H$ in~$H$ in the same way.
Since~$G'$ and~$H'$ are $(K_5,P_5)$-free graphs, by Lemma~\ref{lem:K_t-P_5-free-gi-poly}, we can test whether they are isomorphic in polynomial time.

It therefore suffices to show that~$G'$ and~$H'$ are isomorphic if and only if~$G$ and~$H$ are isomorphic.
By construction, if~$G$ and~$H$ are isomorphic, then~$G'$ and~$H'$ are isomorphic.
Now suppose that there is an isomorphism~$f$ from~$G'$ to~$H'$.
For a vertex $x \in V(G')$, let $V^{G'}_x=\{v \in V(G') \; | \; N(v)=N(x)\}$.
Note that $|V^{H'}_{f(x)}|=|f(V^{G'}_x)|=|V^{G'}_x|$.
Now $x \in V(G') \setminus V(G)$ if and only if $|V^{G'}_x|>n$.
By construction, two sets of the form~$V^{G'}_x$ with $|V^{G'}_x|>n$ are either complete or anti-complete to each other and each such set is complete to exactly one other such set.
Therefore, for each $j \in \{1,\ldots,t\}$, the isomorphism~$f$ maps the copies of~$K_{n+j,n+j}$ from the construction of~$G'$ to copies of~$K_{n+j,n+j}$ from the construction of~$H'$ and~$f$ maps $V(G) \cap V(G')$ to $V(H) \cap V(H')$.
We may therefore replace each copy of~$K_{n+j,n+j}$ in~$G'$ and~$H'$ by an~$L'^G$ and~$L'^H$ of the corresponding type.
Since it is trivial to find an isomorphism from a set~$L'^G$ to a set~$L'^H$ of the same type, we can construct an isomorphism from~$G$ to~$H$.\qed
\end{proof}

We are now ready to prove Theorem~\ref{thm:co2P1+P3-2P1+P3-easy}.
Note that as $P_2+\nobreak P_3$ contains two vertices with the same neighbourhood, we do not have an analogue of Observation~\ref{obs:add-false-twin-is-safe} for the $(\overline{2P_1+P_3},\allowbreak P_2+\nobreak P_3)$-free case.
Because of this, the proof of Theorem~\ref{thm:co2P1+P3-2P1+P3-easy} is slightly more involved than that of Theorem~\ref{thm:coP5-2P1+P3-easy}.

\begin{sloppypar}
\begin{theorem}\label{thm:co2P1+P3-2P1+P3-easy}
{\sc Graph Isomorphism} is polynomial-time solvable on $(\overline{2P_1+P_3},\allowbreak P_2+\nobreak P_3)$-free graphs.
\end{theorem}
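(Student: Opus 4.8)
The plan is to follow the architecture of the proof of Theorem~\ref{thm:coP5-2P1+P3-easy}, using $P_2+\nobreak P_3$-freeness in place of $P_5$-freeness. First I would reduce to connected graphs, since {\sc Graph Isomorphism} can be solved component-wise. The key preliminary observation is that every $(\overline{2P_1+P_3},P_2+\nobreak P_3)$-free graph is automatically $2K_{1,2}$-free: indeed $P_2+\nobreak P_3 \ssi 2K_{1,2}$ (as $2K_{1,2}=2P_3$ contains an induced $P_2+\nobreak P_3$), so forbidding $P_2+\nobreak P_3$ forbids $2K_{1,2}$, and hence $2K_{1,t}$ for every $t \geq 2$. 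Consequently, if the connected input~$G$ is $K_5$-free, then it is $(2K_{1,5},K_5)$-free, and Lemma~\ref{lem:K_t-K_1t-free-gi-poly} (with $t=5$) already solves the problem in polynomial time. Since $K_5$-freeness is testable in polynomial time, it remains only to handle connected graphs containing an induced~$K_5$.

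For such a graph I would fix a set~$K^G$ inducing a~$K_5$ and apply Lemma~\ref{lem:K5-extension} to obtain the partition $A^G_1,\dots,A^G_p,N^G_1,\dots,N^G_p,B^G$ with complete-multipartite core $L^G=\bigcup A^G_i$ and $p\geq5$. The next step is a sequence of structural claims driven by $P_2+\nobreak P_3$-freeness. Because the core contains many pairwise-adjacent parts, no induced edge may be anti-complete to an induced~$P_3$, which is extremely restrictive. For example, an edge inside~$B^G$, or inside or between two of the~$N^G_i$, together with a~$P_3$ taken from the core (two vertices of one part plus a vertex of another) would create a forbidden $P_2+\nobreak P_3$, unless the core degenerates (all parts become singletons) or these outer edges are absent. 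I expect such claims to show that, apart from a tightly constrained configuration, the graph has bounded clique-width --- assembled from pieces of bounded clique-width via the same toolkit (Facts~\ref{fact:comp} and~\ref{fact:bip}, Corollary~\ref{cor:pawP5-free-bdd-cw}) used in Theorem~\ref{thm:coP5-2P1+P3-easy} --- so that Theorem~\ref{thm:gi-poly-bdd-cw} applies.

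In the remaining, tightly constrained configuration the plan is to reduce to a $K_5$-free (equivalently $(2K_{1,5},K_5)$-free) instance and finish with Lemma~\ref{lem:K_t-K_1t-free-gi-poly}, as before. This is where the main obstacle lies, and it is precisely the failure of an analogue of Observation~\ref{obs:add-false-twin-is-safe}: the false-twin blow-up that collapses the core into a $K_{n+j,n+j}$ gadget is unsafe here. Duplicating a vertex turns an induced~$2P_2$ into an induced $P_2+\nobreak P_3$ (the two twins becoming the degree-one ends of the~$P_3$), and a complete-bipartite gadget anti-complete to any surviving edge of $\bigcup N^G_i\cup B^G$ again yields a $P_2+\nobreak P_3$ from a gadget-$P_3$ plus that edge. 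To circumvent this I would first use the structural claims to force the region anti-complete to the gadget (essentially $\bigcup N^G_i\cup B^G$) to be edgeless, so that no external edge is available to complete a $P_2+\nobreak P_3$ and a \emph{size-encoding} padding becomes safe; when this cannot be arranged, I would instead replace the collapsible parts by gadgets that are themselves $P_3$-free and $K_5$-free (disjoint cliques of size at most~$4$) whose shape records the \emph{type}, i.e.\ the multiset of part sizes. The hard part is then the joint verification that (a)~the gadgeted graph remains $(\overline{2P_1+P_3},P_2+\nobreak P_3)$-free and $K_5$-free, and (b)~every isomorphism between two gadgeted graphs respects the recorded type data and hence lifts to an isomorphism of the originals. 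Because part sizes can no longer be normalised to a common large value, this bookkeeping must track exact multiplicities rather than merely separating gadget vertices from original vertices by the size of their false-twin class, which is exactly the respect in which the argument is genuinely more delicate than that of Theorem~\ref{thm:coP5-2P1+P3-easy}.
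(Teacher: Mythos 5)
Your frame is sound up to the point where the real work begins: reducing to connected graphs, dispatching the $K_5$-free case via Lemma~\ref{lem:K_t-K_1t-free-gi-poly} (your observation that $(P_2+\nobreak P_3)$-freeness forces $2K_{1,t}$-freeness is exactly why that lemma applies), and invoking Lemma~\ref{lem:K5-extension}. From there the proposal has a genuine gap on both fronts. The structural analysis is left at ``I expect such claims to show\dots'', and the one fact that actually drives the paper's proof is never stated: $D^G=V(G)\setminus L^G$ induces a $P_3$-free graph, hence a disjoint union of cliques (an induced $P_3$ in $D^G$ together with two vertices of $K^G$ non-adjacent to it -- which exist since every vertex of $D^G$ has at most one neighbour in $K^G$ -- yields a $P_2+\nobreak P_3$). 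The correct case split is then on the components of $G[D^G]$, and bounded clique-width is obtained \emph{only} when $G[D^G]$ has at most one non-trivial component. Your picture of ``bounded clique-width except for one tightly constrained configuration'' misplaces the difficulty: the case where $G[D^G]$ has at least two non-trivial components and contains a $K_4$ is not degenerate, and the paper resolves it not by any reduction to a $K_5$-free instance but by showing the isomorphism test factors into independent tests on pairwise disjoint vertex sets (the large cliques of $G[D^G]$ end up complete or anti-complete to the relevant part of the core, so they can be matched purely by their size and their intersection with $B^G$).

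The endgame gadget also fails its own verification step. In any non-trivial instance at least one $N^G_i$ is non-empty (otherwise $G=L^G$ is complete multipartite and the problem is trivial), so $L^G\setminus L'^G$ contains a vertex $c$ that is complete to your gadget; a clique of size four in the gadget together with $c$ is a $K_5$, so the gadgeted graph is not $K_5$-free and Lemma~\ref{lem:K_t-K_1t-free-gi-poly} cannot be applied to it. Moreover, two gadget vertices in different cliques plus $c$ form an induced $P_3$ that, together with any edge of $D^G$ avoiding $N(c)$, gives a $P_2+\nobreak P_3$. The paper needs no gadgets: when $G[D^G]$ has at least two non-trivial components, its Claim~4 forces every part $A^G_i$ with $N^G_i=\emptyset$ to be a singleton whose closed neighbourhood is exactly $L^G$; these vertices are mutually interchangeable, so all but a bounded number of them are simply deleted, leaving a $K_6$-free graph to which Lemma~\ref{lem:K_t-K_1t-free-gi-poly} applies with $t=6$. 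Any salvageable padding argument would have to make the added vertices avoid completing a clique with the surviving core and avoid forming a $P_3$ against a surviving edge of $D^G$, at which point you have essentially rediscovered that deletion argument.
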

\end{sloppypar}
\begin{proof}
Since {\sc Graph Isomorphism} can be solved component-wise, we need only consider connected graphs.
Therefore, as {\sc Graph Isomorphism} is polynomial-time solvable on $(K_5,P_2+\nobreak P_3)$-free graphs by Lemma~\ref{lem:K_t-K_1t-free-gi-poly}, and we can test whether a graph is $K_5$-free in polynomial time, it only remains to consider the class of connected $(\overline{2P_1+P_3},P_2+\nobreak P_3)$-free graphs~$G$ that contain an induced~$K_5$.
Let~$K^G$ be the vertices of an induced~$K_5$ in~$G$ (note that such a set~$K^G$ can be found in polynomial time, but it is not necessarily unique).
Let $A^G_1,\ldots,A^G_p,N^G_1,\ldots,N^G_p,B^G$ be defined as in Lemma~\ref{lem:K5-extension} and let $L^G = \bigcup A^G_i$ and $D^G = V(G) \setminus L^G$.

Now suppose that~$G$ and~$H$ are connected $(\overline{2P_1+P_3},P_2+\nobreak P_3)$-free graphs that each contain an induced~$K_5$.
If~$G$ and~$H$ have bounded clique-width (which happens in Case~\ref{case1:D-2P2-free} below), then by Theorem~\ref{thm:gi-poly-bdd-cw} we are done.
Otherwise, note that if~$K^G$ and~$K^H$ are vertex sets that induce a~$K_5$ in~$G$ and~$H$, respectively, then Lemma~\ref{lem:K5-extension} implies that $L^G, D^G, L^H$ and~$D^H$ are uniquely defined.
Therefore, we fix one choice of~$K^G$ and, for each choice of~$K^H$, test whether there is an isomorphism $f:G \to H$ such that $f(L^G)=L^H$ (we use this approach in Cases~\ref{case1:B-K4-free} and~\ref{case1:B-K4-non-free} below).
Clearly, we may assume that the vertex partitions given by Lemma~\ref{lem:K5-extension} for~$G$ and~$H$ have the same value of~$p$ and that $|A^G_i|=|A^H_i|$ and $|N^G_i|=|N^H_i|$ for all $i\in \{1,\ldots,p\}$ and $|B^G|=|B^H|$.
Furthermore, for any claims we prove about~$G$ and its vertex sets, we may assume that the same claims hold for~$H$ (otherwise such an isomorphism~$f$ does not exist).
We start by proving the following four claims.

\clm{\label{clm1:BUNi-P3-free}$G[D^G]$ is $P_3$-free.}
Indeed, suppose, for contradiction, that~$G[D^G]$ contains an induced~$P_3$, say on vertices $u,u',u''$.
Since $|K^G|=5$ and each vertex in~$D^G$ has at most one neighbour in~$K^G$, there must be vertices $v,v' \in K^G$ that are anti-complete to $\{u,u',u''\}$.
Then $G[v,v',u,u',u'']$ is a $P_2+\nobreak P_3$, a contradiction.\dia

\clm{\label{clm1:no-special-P1P2}If $v \in N^G_j$ for some $j \in \{1,\ldots,p\}$ and there are two adjacent vertices $u,u' \in D^G\setminus N^G_j$, then~$v$ is complete to~$\{u,u'\}$.}
Since~$G[D^G]$ is $P_3$-free by Claim~\ref{clm1:BUNi-P3-free}, the vertex~$v$ must be either complete or anti-complete to~$\{u,u'\}$.
Suppose, for contradiction, that~$v$ is anti-complete to~$\{u,u'\}$.
Since $v \in N^G_j$, $v$ has a neighbour $v' \in A^G_j$.
Since $|K^G\setminus A^G_j|\geq 4$ and each vertex in~$D^G$ has at most one neighbour in~$K^G$, there is a vertex $v'' \in K^G \setminus A^G_j$ that is non-adjacent to both~$u$ and~$u'$.
Since $v'' \notin A^G_j$, $v''$ is also non-adjacent to~$v$, but is adjacent to~$v'$.
Now $G[u,u',v,v',v'']$ is a~$P_2+\nobreak P_3$, a contradiction.\dia

\clm{\label{clm:P1+C3-structure}If~$G[D^G]$ has at least two components and one of these components~$C$ has at least three vertices, then there is an $i \in \{1,\ldots,p\}$ such that $D^G \setminus C \subset N^G_i \cup B^G$ and all but at most one vertex of~$C$ belongs to~$N^G_i$.}
By Claim~\ref{clm1:BUNi-P3-free}, $G[D^G]$ is a disjoint union of cliques.
Since~$G$ is connected, $D^G \setminus C$ cannot be a subset of~$B^G$.
Hence, for some $i \in \{1,\ldots,p\}$, there must be a vertex $x \in N^G_i \setminus C$.
Therefore, by Claim~\ref{clm1:no-special-P1P2}, at most one vertex of~$C$ can lie outside of~$N^G_i$.
Since $|C|\geq 3$, it follows that $C \cap N^G_i$ contains at least two vertices.
Since the vertices in~$C$ are pairwise adjacent, by Claim~\ref{clm1:no-special-P1P2} it follows that $D^G \setminus C \subset N^G_i \cup B^G$.\dia

\clm{\label{clm1:D-2P2-non-free-vertex-nonadj-to-P2-is-anticomplete}Let $i \in \{1,\ldots,p\}$.
If~$G[D^G]$ contains at least two non-trivial components and there is a vertex~$v$ in~$A^G_i$ with two non-neighbours in the same component of~$G[D^G]$, then~$v$ is anti-complete to~$D^G$.
Furthermore, there is at most one vertex in~$A^G_i$ with this property.}
Suppose $v \in A^G_i$ has two non-neighbours $x,x'$ in some component~$C$ of~$G[D^G]$.
By Claim~\ref{clm1:BUNi-P3-free}, $G[D^G]$ is a disjoint union of cliques, so~$x$ must be adjacent to~$x'$.
We claim that~$v$ is anti-complete to~$D^G \setminus C$.
Suppose, for contradiction, that~$v$ has a neighbour $y \in D^G \setminus C$.
Since every vertex of~$D^G$ has at most one neighbour in~$K^G$, there must be a vertex $z \in K^G \setminus A_i^G$ that is non-adjacent to~$x,x'$ and~$y$ and so $G[x,x',y,v,z]$ is a~$P_2+\nobreak P_3$.
This contradiction implies that~$v$ is indeed anti-complete to $D^G \setminus C$.
Now $G[D^G \setminus C]$ contains another non-trivial component~$C'$ and we have shown that~$v$ is anti-complete to~$C'$.
Repeating the same argument with~$C'$ taking the place of~$C$, we find that~$v$ is anti-complete to~$D^G \setminus C'$, and therefore~$v$ is anti-complete to~$D^G$.
Finally, suppose, for contradiction, that there are two vertices $v,v' \in A^G_i$ that are both anti-complete to~$D^G$.
Let $x,x'$ be adjacent vertices in~$D^G$ and let $z \in K^G\setminus A^G_i$ be a vertex non-adjacent to~$x$ and~$x'$.
Then $G[x,x',v,z,v']$ is a~$P_2+\nobreak P_3$, a contradiction.\dia

\medskip
\noindent
We now start a case distinction and first consider the following case.

\thmcase{\label{case1:D-2P2-free}$G[D^G]$ contains at most one non-trivial component.}
In this case we will show that~$G$ has bounded clique-width, and so we will be done by Theorem~\ref{thm:gi-poly-bdd-cw}.
By Claim~\ref{clm1:BUNi-P3-free}, every component of~$G[D^G]$ is a clique.
Since~$G[D^G]$ contains at most one non-trivial component, we may partition~$D^G$ into a clique~$C$ and an independent set~$I$ (note that~$C$ or~$I$ may be empty).
If~$|C| \geq 3$ and $|I| \geq 1$, then by Claim~\ref{clm:P1+C3-structure} there is an $i \in \{1,\ldots,p\}$ such that at most one vertex of $C \cup I$ is outside~$N^G_i$; if such a vertex exists, then by Fact~\ref{fact:del-vert} we may delete it.
Now if $|C| \leq 3$, then by Fact~\ref{fact:del-vert} we may delete the vertices of~$C$.
Thus we may assume that either $C=\emptyset$ or $|C|\geq 4$ and furthermore, if $|C|\geq 4$ and $|I|\geq 1$, then $C \cup I \subseteq N^G_i$ for some $i \in \{1,\ldots,p\}$.
Note that $I \cap B^G = \emptyset$ since~$G$ is connected, so $B^G \subset C$.
Therefore~$G[B^G]$ is a complete graph, so it has clique-width at most~$2$.
Applying a bipartite complementation between~$B^G$ and~$C \setminus B^G$ removes all edges between~$B^G$ and~$V(G) \setminus B^G$.
By Fact~\ref{fact:bip}, we may therefore assume that $B^G=\emptyset$.

Let~$M$ be the set of vertices in~$L^G$ that have neighbours in~$I$.
We claim that~$M$ is complete to all but at most one vertex of~$C$.
We may assume that $|C|\geq 4$ and $|I|\geq 1$, otherwise the claim follows trivially.
Therefore, as noted above, $C \cup I \subseteq N^G_i$ for some $i \in \{1,\ldots,p\}$.
Suppose $u \in M$ has a neighbour $u' \in I$ and note that this implies $u \in A^G_i$, $u' \in N^G_i$.
Suppose, for contradiction, that~$u$ has two non-neighbours $v,v' \in C$ and let $w \in K^G \setminus A^G_i$.
Then $G[v,v',u',u,w]$ is a~$P_2+\nobreak P_3$, a contradiction.
Therefore if $u \in M$, then~$u$ has at most one non-neighbour in~$C$.
Now suppose that there are two vertices $u,u' \in M$. 
It follows that $u,u'\in A^G_i$, so these vertices must be non-adjacent.
Furthermore, each of these vertices has at most one non-neighbour in~$C$.
If~$u$ and~$u'$ have different neighbourhoods in~$C$, then without loss of generality we may assume that there are vertices $x,y,y' \in C$ such that~$u$ is adjacent to~$x,y$ and~$y'$ and~$u'$ is adjacent to~$y$ and~$y'$, but not to~$x$.
Now $G[y,y',u,u',x]$ is a~$\overline{2P_1+P_3}$, a contradiction.
Therefore every vertex in~$M$ has the same neighbourhood in~$C$, which consists of all but at most one vertex of~$C$ and the claim holds.
If the vertices of~$M$ are not complete to~$C$, then we delete one vertex of~$C$ (we may do so by Fact~\ref{fact:del-vert}), after which~$M$ will be complete to~$C$.
We may therefore assume that~$M$ is complete to~$C$.

Now note that for all $i \in \{1,\ldots,p\}$, the graph $G_i=G[(A^G_i\setminus M) \cup (N^G_i \cap C)]$ is a $\overline{2P_1+P_3}$-free split graph, so it has bounded clique-width by Lemma~\ref{lem:2P_1P3-split-bdd-cw}.
Furthermore $G_i'=G[(A^G_i\cap M) \cup (N^G_i \cap I)]$ is a $(P_2+\nobreak P_3)$-free bipartite graph, so it has bounded clique-width by Lemma~\ref{lem:K3-P6-bdd-cw}.
Let~$G_i''$ be the graph obtained from the disjoint union $G_i+\nobreak G_i'$ by complementing~$A^G_i$ and $(N^G_i \cap C)$.
By Fact~\ref{fact:comp}, $G_i''$ also has bounded clique-width.
Therefore the disjoint union~$G^*$ of all the~$G_i''$s has bounded clique-width.
Now~$G$ can be constructed from~$G^*$ by complementing~$L^G$, complementing~$C$ and applying a bipartite complementation between~$C$ and~$M$.
Hence, by Facts~\ref{fact:comp} and~\ref{fact:bip}, $G$ has bounded clique-width.
This completes Case~\ref{case1:D-2P2-free}.

\medskip
\noindent
We may now assume that Case~\ref{case1:D-2P2-free} does not apply, that is, $G[D^G]$ has at least two non-trivial components.
This leads us to our second and third cases.

\thmcase{\label{case1:B-K4-free}$G[D^G]$ contains at least two non-trivial components, but is $K_4$-free.}
Recall that~$G[D^G]$ is $P_3$-free by Claim~\ref{clm1:BUNi-P3-free}, so every component of~$G[D^G]$ is a clique.
Let~$C$ be a non-trivial component of~$G[D^G]$ and let $x,y \in C$.
Then~$x$ is adjacent to~$y$ and $x,y \in N^G_i \cup N^G_j \cup B^G$ for some (not necessarily distinct) $i,j \in \{1,\ldots,p\}$.
By Claim~\ref{clm1:no-special-P1P2}, every vertex~$z$ in a component of~$G[D^G]$ other than~$C$ must also be in $N^G_i \cup N^G_j \cup B^G$.
Since~$G[D^G]$ contains at least two non-trivial components, repeating this argument with another non-trivial component implies that every vertex of~$D^G$ lies in $N^G_i \cup N^G_j \cup B^G$.
Without loss of generality, we may therefore assume that $N^G_k = \emptyset$ for $k \geq 3$.

Since~$G[D^G]$ is $K_4$-free, for each $i \in \{1,\ldots,p\}$ the graph $G[D^G \cup A^G_i]$ is $K_5$-free.
This means that every~$K_5$ in~$G$ is entirely contained in~$L^G$.
By Claim~\ref{clm1:D-2P2-non-free-vertex-nonadj-to-P2-is-anticomplete}, for $i \geq 3$, $|A^G_i|=1$ and so
 $L^G \setminus (A^G_1 \cup A^G_2)$ must be a clique.
The vertices of $L^G \setminus (A^G_1 \cup A^G_2)$ have no neighbours outside~$L^G$
and are adjacent to every other vertex of~$L^G$, so these vertices are in some sense interchangeable.
Indeed, $N[v]=L^G$ for every $v \in L^G \setminus (A^G_1 \cup A^G_2)$, and so every bijection that permutes the vertices of $L^G \setminus (A^G_1 \cup A^G_2)$ and leaves the other vertices of~$G$ unchanged is an isomorphism from~$G$ to itself.
Let~$G'$ be the graph obtained from~$G$ by deleting all vertices in~$A^G_i$ for $i \geq 6$ (if any such vertices are present).
Now~$G'$ is $K_6$-free, so it is a $(K_6,P_2+\nobreak P_3)$-free graph.
Therefore we can test isomorphism of such graphs~$G'$ in polynomial time by Lemma~\ref{lem:K_t-K_1t-free-gi-poly}.
If there is an isomorphism between two such graphs~$G'$ and~$H'$, then, because the vertices of $L^G \setminus (A^G_1 \cup A^G_2)$ are interchangeable, we can extend it to a full isomorphism of~$G$ and~$H$ by mapping the remaining vertices of $L^G \setminus (A^G_1 \cup A^G_2)$ to $L^H \setminus (A^H_1 \cup A^H_2)$ arbitrarily.
This completes Case~\ref{case1:B-K4-free}.

\thmcase{\label{case1:B-K4-non-free}$G[D^G]$ contains at least two non-trivial components and contains an induced~$K_4$.}
Recall that~$G[D^G]$ is $P_3$-free by Claim~\ref{clm1:BUNi-P3-free}, so every component of~$G[D^G]$ is a clique.
We claim that $D^G \subseteq N^G_i \cup B^G$ for some $i \in \{1,\ldots,p\}$.
Let~$C$ be a component of~$G[D^G]$ that contains at least four vertices, and let~$C'$ be a component of~$G[D^G]$ other than~$C$, and note that such components exist by assumption.
By Claim~\ref{clm:P1+C3-structure}, there is an $i \in \{1,\ldots,p\}$ such that $D^G \setminus C \subset N^G_i \cup B^G$ and all but at most one vertex of~$C$ belongs to~$N^G_i$.
In particular, this implies that $C' \subset N^G_i \cup B^G$.
By Claim~\ref{clm1:no-special-P1P2}, it follows that~$C$ cannot have a vertex in~$N^G_j$ for some $j \in \{1,\ldots,p\} \setminus \{i\}$, and so $C \subset N^G_i \cup B^G$.
Without loss of generality, we may therefore assume that $N^G_j=\emptyset$ for $j \in \{2,\ldots,p\}$ and so $D^G=N^G_1 \cup B^G$.
Now if $j \in \{2,\ldots,p\}$, then the vertices of~$A^G_j$ are anti-complete to~$D^G$, so Claim~\ref{clm1:D-2P2-non-free-vertex-nonadj-to-P2-is-anticomplete} implies that $|A^G_j|=1$.
This implies that $L^G\setminus A^G_1$ is a clique.

By Claim~\ref{clm1:D-2P2-non-free-vertex-nonadj-to-P2-is-anticomplete} there is at most one vertex $x^G \in A^G_1$ that has two non-neighbours in the same non-trivial component~$C$ of~$G[D^G]$ and if such a vertex exists, then it must be anti-complete to~$D^G$.
Let $A^{*G}_1=A^G_1 \setminus \{x^G\}$ if such a vertex~$x^G$ exists and $A^{*G}_1=A^G_1$ otherwise.
Then every vertex in~$A^{*G}_1$ has at most one non-neighbour in each component of~$G[D^G]$.
Note that~$A^{*G}$ is non-empty, since~$D^G$ is non-empty and~$G$ is connected.

Suppose~$C$ is a component of~$G[D^G]$ on at least four vertices.
Now suppose, for contradiction, that there are two vertices $y,y' \in A^{*G}_1$ with different neighbourhoods in~$C$.
Then without loss of generality there is a vertex $x \in C$ that is adjacent to~$y$, but not to~$y'$.
Since $|C| \geq 4$ and every vertex in~$A^{*G}_1$ has at most one non-neighbour in~$C$, there must be two vertices $z,z' \in C$ that are adjacent to both~$y$ and~$y'$.
Now $G[z,z',x,y',y]$ is a~$\overline{2P_1+P_3}$, a contradiction.
We conclude that every vertex in~$A^{*G}_1$ has the same neighbourhood in~$C$.
This implies that every vertex of~$C$ is either complete or anti-complete to~$A^{*G}_1$.
If a vertex of~$C$ is anti-complete to~$A^{*G}_1$, then it is anti-complete to~$A^G_1$, and so it lies in~$B^G$.

Let~$D^{*G}$ be the set of vertices in~$D^G$ that are in components of~$G[D^G]$ that have at most three vertices.
Then every vertex of $D^G \setminus D^{*G}$ is complete or anti-complete to~$A^{*G}_1$ and anti-complete to $A^G_1 \setminus A^{*G}_1$.

Now let $G'=G[D^{*G} \cup L^G \setminus (A^G_1 \setminus A^{*G}_1)]$ and note that this graph is uniquely defined by~$G$ and~$K^G$.
Then~$G'[D^{*G}]$ is $K_4$-free, so $G'[D^{*G}\cup A^{*G}_1]$ is $K_5$-free, so every induced~$K_5$ in~$G'$ is entirely contained in~$L^G\setminus (A^G_1 \setminus A^{*G}_1)$.
Furthermore, since $p \geq 5$, every vertex in~$L^G \setminus (A^G_1 \setminus A^{*G}_1)$ is contained in an induced~$K_5$ in~$G'$.
Therefore every isomorphism~$q$ from~$G'$ to~$H'$ satisfies $q(L^G \setminus (A^G_1 \setminus A^{*G}_1))=L^H \setminus (A^H_1 \setminus A^{*H}_1)$.
Therefore a bijection $f:V(G) \to V(H)$ is an isomorphism from~$G$ to~$H$ such that $f(L^G)=L^H$ if and only if all of the following hold:
\begin{enumerate}
\item\label{prop:1}The restriction of~$f$ to~$V(G')$ is an isomorphism from~$G'$ to~$H'$ such that $f(A^{*G}_1)=A^{*H}_1$.
\item\label{prop:2}$f(A^G_1 \setminus A^{*G}_1)=A^H_1 \setminus A^{*H}_1$.
\item\label{prop:3}For every component~$C$ of~$G[D^G]$ with at least four vertices, $f(C)$ is a component of~$H[D^H]$ on the same number of vertices and $|C \cap B^G|=|f(C) \cap B^H|$.
\end{enumerate}
It is therefore sufficient to test whether there is a bijection from~$G$ to~$H$ with the above properties.
Note that these properties are defined on pairwise disjoint vertex sets, and the edges in~$G$ and~$H$ between these sets are completely determined by the definition of the sets.
Thus it is sufficient to independently test whether there are bijections satisfying each of these properties.
If~$D^{*G}$ is empty, then~$G'$ is a complete multipartite graph, so we can easily test if Property~\ref{prop:1} holds in this case.
Otherwise, since~$A^G_j$ has no neighbours outside~$L^G$ for $j \in \{2,\ldots,p\}$, every isomorphism from~$G'$ to~$H'$ satisfies $f(A^{*G}_1)=A^{*H}_1$, 
so it is sufficient to test if~$G'$ and~$H'$ are isomorphic, and we can do this by applying Case~\ref{case1:D-2P2-free} or Case~\ref{case1:B-K4-free}.
The sets $A^G_1 \setminus A^{*G}_1$ and $A^H_1 \setminus A^{*H}_1$ consist of at most one vertex, so we can test if Property~\ref{prop:2} can be satisfied in polynomial time.
To satisfy Property~\ref{prop:3}, we only need to check whether there is a bijection~$q$ from the components of $G[D^{*G}\setminus D^G]$ to the components of $H[D^{*H}\setminus D^H]$ such that $|q(C)|=|C|$ and $|q(C) \cap B^H|=|C \cap B^G|$ for every component of $G[D^{*G}\setminus D^G]$ and this can clearly be done in polynomial time.
This completes the proof of Case~\ref{case1:B-K4-non-free}.\qed
\end{proof}

\section{New \GI-complete Results}\label{sec:GI-complete}

We state  Theorems~\ref{thm:diamond-2P3-hard},~\ref{thm:diamond-P6-hard} and~\ref{thm:gem-P1+2P2-hard}, which establish that {\sc Graph Isomorphism} is \GI-complete on $(\diamondgraph,2P_3)$-free, $(\diamondgraph,P_6)$-free and $(\gem,P_1+\nobreak 2P_2)$-free graphs, respectively (see \figurename~\ref{fig:forb-hard}).
The complexity of {\sc Graph Isomorphism} on $(\overline{2P_1+P_3},2P_3)$-free graphs and $(\gem,P_6)$-free graphs was previously unknown, but since these classes contain the classes of $(\diamondgraph,2P_3)$-free graphs and $(\diamondgraph,P_6)$-free graphs, respectively, Theorems~\ref{thm:diamond-2P3-hard} and~\ref{thm:diamond-P6-hard}, respectively, imply that {\sc Graph Isomorphism} is also \GI-complete on these classes.
In Theorems~\ref{thm:diamond-2P3-hard} and~\ref{thm:diamond-P6-hard}, \GI-completeness follows from the fact that the constructions used in our proofs fall into the framework of so-called simple path encodings (see~\cite{Sc17}).
For brevity, we do not explain this general notion here, but instead include direct proofs of \GI-completeness for both cases.
The construction used in the proof of Theorem~\ref{thm:gem-P1+2P2-hard} does not fall into this framework and we give a direct proof of \GI-completeness in this case.

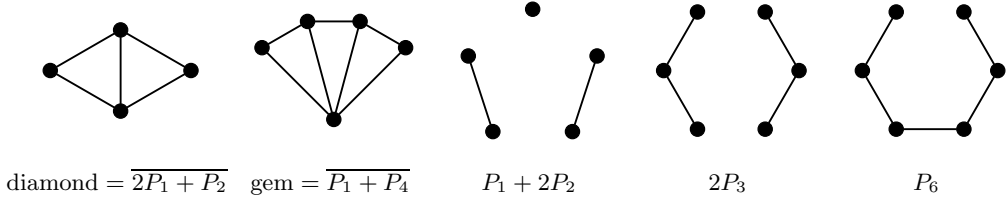
\begin{figure}
\begin{center}
\begin{tabular}{ccccc}
\begin{minipage}{0.187\textwidth}
\begin{center}
\scalebox{0.9}{
\begin{tikzpicture}[scale=1.2]
%diamond
\GraphInit[vstyle=Simple]
\SetVertexSimple[MinSize=6pt]
\Vertex[x=0,y=0.5]{a}%sqrt(2)/2
\Vertex[x=0.866025403784438646763723,y=0]{b}
\Vertex[x=0,y=-0.5]{c}%sqrt(2)/2
\Vertex[x=-0.866025403784438646763723,y=0]{d}
%-0.70710678118
\Edges(a,b,c,d,a,c)
\end{tikzpicture}}
\end{center}
\end{minipage}
&
\begin{minipage}{0.187\textwidth}
\begin{center}
\scalebox{0.9}{
\begin{tikzpicture}[scale=1.5,rotate=90]
%gem
\GraphInit[vstyle=Simple]
\SetVertexSimple[MinSize=6pt]
\Vertex[x=0,y=0]{x}
\Vertex[a=-45,d=1]{a}
\Vertex[a=-15,d=1]{b}
\Vertex[a=15,d=1]{c}
\Vertex[a=45,d=1]{d}
\Edges(a,x,b)
\Edges(c,x,d)
\Edges(a,b,c,d)
\end{tikzpicture}}
\end{center}
\end{minipage}
&
\begin{minipage}{0.187\textwidth}
\begin{center}
\scalebox{0.9}{
\begin{tikzpicture}[scale=1,rotate=90]
%P_2+2P_2
\GraphInit[vstyle=Simple]
\SetVertexSimple[MinSize=6pt]
\Vertices{circle}{a,b,c,d,e}
\Edges(b,c)
\Edges(d,e)
\end{tikzpicture}}
\end{center}
\end{minipage}
&
\begin{minipage}{0.187\textwidth}
\begin{center}
\scalebox{0.9}{
\begin{tikzpicture}[scale=1]
%2P_3
\GraphInit[vstyle=Simple]
\SetVertexSimple[MinSize=6pt]
\Vertices{circle}{a,b,c,d,e,f}
\Edges(c,d,e)
\Edges(f,a,b)
\end{tikzpicture}}
\end{center}
\end{minipage}
&
\begin{minipage}{0.187\textwidth}
\begin{center}
\scalebox{0.9}{
\begin{tikzpicture}[scale=1]
%P_6
\GraphInit[vstyle=Simple]
\SetVertexSimple[MinSize=6pt]
\Vertices{circle}{a,b,c,d,e,f}
%\Edges(f,a,b,c,d,e)
\Edges(c,d,e,f,a,b)
\end{tikzpicture}}
\end{center}
\end{minipage}
\\
\\
$\diamondgraph=\overline{2P_1+P_2}$ & $\gem=\overline{P_1+P_4}$ & $P_1+\nobreak 2P_2$ & $2P_3$ & $P_6$
\end{tabular}
\end{center}
\caption{\label{fig:forb-hard}Forbidden induced subgraphs from Theorems~\ref{thm:diamond-2P3-hard},~\ref{thm:diamond-P6-hard} and~\ref{thm:gem-P1+2P2-hard}.}
\end{figure}

\begin{sloppypar}
\begin{theorem}\label{thm:diamond-2P3-hard}
{\sc Graph Isomorphism} is \GI-complete on $(\diamondgraph,2P_3)$-free graphs.
\end{theorem}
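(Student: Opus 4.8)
The plan is to prove \GI-completeness by a polynomial-time reduction from {\sc Graph Isomorphism} on arbitrary graphs, producing from an input graph $F$ a graph $\Phi(F)$ that is guaranteed to be $(\diamondgraph,2P_3)$-free and such that $F_1\cong F_2$ if and only if $\Phi(F_1)\cong\Phi(F_2)$. As the excerpt notes, it would suffice to check that $\Phi$ is a \emph{simple path encoding} in the sense of Schweitzer~\cite{Sc17}, which hands us \GI-completeness for free; but since the class is so restrictive I would instead carry out a direct correctness argument, as this also clarifies which gadget the two forbidden subgraphs constrain.

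The design of $\Phi$ is driven by understanding the target class. Diamond-freeness is the local requirement that the neighbourhood of every vertex induce a disjoint union of cliques (an induced $P_3$ inside some $N(v)$ is exactly a diamond), so every gadget must be built from cliques meeting pairwise in at most one vertex. Being $2P_3$-free is a much stronger \emph{global} constraint, which rules out the usual sparse encodings: subdividing the edges of $F$, for instance, yields a triangle-free graph riddled with pairs of disjoint induced $P_3$'s. A useful starting point is the observation that if a graph has a dominating vertex $c$ whose deletion leaves a disjoint union of cliques, then it is automatically $(\diamondgraph,2P_3)$-free, because no diamond can use $c$ and every induced $P_3$ must use $c$; the friendship graphs are the simplest examples. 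Such ``apex over disjoint cliques'' graphs are, however, too rigid to encode anything, since their isomorphism type is just the multiset of clique sizes. I would therefore enrich this skeleton: keep a small, canonically identifiable core, attach one clique $R_i$ of a distinctive size per vertex of $F$ so that the $R_i$ are recoverable, and encode each edge of $F$ by a short clique-gadget joining the relevant cliques, taking care that the only induced $P_3$'s that survive are forced to meet a common bounded region, so that no two of them can be vertex-disjoint.

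For correctness, the forward direction is immediate: an isomorphism $F_1\to F_2$ propagates through the gadgets to an isomorphism $\Phi(F_1)\to\Phi(F_2)$. The reverse direction is where the work lies: I would show that the local clique structure together with the chosen gadget sizes make the ``vertex'' cliques $R_i$, the ``edge'' gadgets, and the core canonically distinguishable, so that any isomorphism $\Phi(F_1)\to\Phi(F_2)$ must respect this partition and hence restrict to an isomorphism $F_1\to F_2$. The main obstacle, I expect, is reconciling the two forbidden subgraphs: diamond-freeness forbids the concentrating structures (dominating vertices, dense joins) that would most easily destroy $2P_3$'s, while $2P_3$-freeness forbids the sparse, spread-out structures that would most easily avoid diamonds. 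The delicate step is thus to design the edge-gadgets so that they simultaneously (i) keep every neighbourhood a disjoint union of cliques and (ii) force every induced $P_3$ through a common small region, and then to verify (ii) by an exhaustive case analysis on where the three vertices of a hypothetical second, disjoint $P_3$ could lie.
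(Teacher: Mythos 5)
Your proposal correctly identifies the shape of the argument (a gadget reduction from {\sc Graph Isomorphism} plus a direct verification of both forbidden subgraphs and of the ``if and only if''), but it never actually exhibits the construction, and the construction is the entire content of the theorem. Every hard step is deferred: ``the delicate step is thus to design the edge-gadgets so that they simultaneously (i) \ldots and (ii) \ldots, and then to verify (ii) by an exhaustive case analysis.'' That is a statement of the problem, not a proof.

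Moreover, the architecture you sketch points in a direction that does not work. You propose pairwise-disjoint ``vertex'' cliques $R_i$ of distinctive sizes joined by edge-gadgets, with every induced $P_3$ forced through a ``common bounded region'' so that no two induced $P_3$'s are vertex-disjoint. First, that target is the wrong one: $2P_3$-freeness only requires that any two vertex-disjoint induced $P_3$'s have an edge between them, and forcing every induced $P_3$ through a bounded region does not prevent two of them from being disjoint (they can meet the region at different vertices), while forcing them all through a \emph{single} vertex is hopeless, since a vertex adjacent to all three vertices of an induced $P_3$ already creates a $\diamondgraph$. Second, with disjoint non-adjacent cliques $R_i$, two edges of $F$ with no common endpoint yield induced $P_3$'s in far-apart gadget regions with no edges between them, i.e.\ an induced $2P_3$; and tying the regions together with a dense core reintroduces diamonds for the same reason. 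The paper's resolution is to collapse all vertex-representatives into a \emph{single} clique $A^G=V(G)$ (no distinctive sizes needed) and to encode each edge $vw$ by an induced path $v\,v_w\,w_v\,w$ through two new degree-two vertices whose two neighbours are non-adjacent. Then no triangle uses a new vertex, so the graph is $\diamondgraph$-free because $A^G$ is a clique; and every induced $P_3$ contains a vertex of $A^G$ (the new vertices induce a disjoint union of $P_2$'s), so any two induced $P_3$'s are joined by an edge inside the clique $A^G$, giving $2P_3$-freeness. Your reverse-direction plan (recover the partition canonically, here via membership in a triangle, then read off adjacency from the length-three paths) is consistent with what the paper does, but without a concrete gadget there is nothing to verify.
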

\end{sloppypar}

\begin{proof}
Let~$G$ be a graph.
We construct a graph~$q(G)$ as follows:
\begin{enumerate}
\item Create a clique with vertex set $A^G=V(G)$.
\item For every edge $vw \in E(G)$, add vertices~$v_w$ and~$w_v$ and edges $vv_w,v_ww_v$ and~$w_vw$.
Let~$B^G$ be the set of vertices added in this step.
\end{enumerate}
Note that every vertex in~$B^G$ has exactly two neighbours in~$q(G)$ and that these neighbours are non-adjacent.
Therefore no induced~$K_3$ in~$q(G)$ contains a vertex of~$B^G$.
Also note that $|A^G|=|V(G)|$ and $|B^G|=2|E(G)|$.

We claim that~$q(G)$ is $(\diamondgraph,2P_3)$-free for every graph~$G$.
First suppose, for contradiction, that the $\diamondgraph$ is an induced subgraph of~$q(G)$.
Since no vertex in~$B^G$ is in an induced~$K_3$ in~$q(G)$, it follows that no vertex of this $\diamondgraph$ can be in~$B^G$.
This is a contradiction, since~$A^G$ is a clique.
Therefore~$q(G)$ is $\diamondgraph$-free.
Now suppose, for contradiction, that~$2P_3$ is an induced subgraph of~$q(G)$.
Since $q(G)[B^G]$ is a disjoint union of~$P_2$'s, every~$P_3$ in~$q(G)$ must contain at least one vertex in~$A^G$.
Therefore, the two components of the~$2P_3$ must each contain a vertex of~$A^G$ and so there must be two non-adjacent vertices in~$A^G$.
Since~$A^G$ is a clique, this is a contradiction.
Therefore~$q(G)$ is $2P_3$-free.

Given two graphs~$G$ and~$H$, we claim that~$G$ is isomorphic to~$H$ if and only if~$q(G)$ is isomorphic to~$q(H)$.
Clearly, if~$G$ is isomorphic to~$H$, then~$q(G)$ is isomorphic to~$q(H)$.
Now suppose that there is an isomorphism~$f$ from~$q(G)$ to~$q(H)$.
Let us show that this implies~$G$ is isomorphic to~$H$.
If~$G$ or~$H$ contains at most two vertices, then this can be verified by inspection, so we may assume $|V(G)|,|V(H)| \geq 3$.
It follows that every vertex of~$A^G$ (resp.~$A^H$) is in an induced~$K_3$ in~$q(G)$ (resp.~$q(H)$).
Since no vertex of~$B^G$ (resp.~$B^H$) is in an induced~$K_3$ in~$q(G)$ (resp.~$q(H)$), it follows that $f(A^G)=A^H$ and $f(B^G)=B^H$.
Now two vertices~$v$ and~$w$ in~$G$ are adjacent if and only if~$v$ and~$w$ are connected in~$q(G)$ via a path of vertices in~$B^G$ if and only if~$f(v)$ and~$f(w)$ are connected in~$q(H)$ via a path of vertices in~$B^H$ if and only if~$f(v)$ and~$f(w)$ are adjacent in~$H$.
Therefore~$G$ is isomorphic to~$H$.
This completes the proof.\qed
\end{proof}

\begin{theorem}\label{thm:diamond-P6-hard}
{\sc Graph Isomorphism} is \GI-complete on $(\diamondgraph,P_6)$-free graphs.
\end{theorem}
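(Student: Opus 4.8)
The plan is to establish \GI-completeness by a polynomial-time reduction from {\sc Graph Isomorphism} on arbitrary graphs, following the same \emph{simple path encoding} philosophy as in the proof of Theorem~\ref{thm:diamond-2P3-hard}: I would define a construction $q$ sending a graph $G$ to a $(\diamondgraph,P_6)$-free graph $q(G)$ and prove that $G$ and $H$ are isomorphic if and only if $q(G)$ and $q(H)$ are. One constraint fixes the shape of $q$ immediately. By Lemma~\ref{lem:K3-P6-bdd-cw} the class of $(K_3,P_6)$-free graphs has bounded clique-width, so by Theorem~\ref{thm:gi-poly-bdd-cw} {\sc Graph Isomorphism} is polynomial-time solvable there; hence any hard instances must contain triangles, and $q(G)$ must create them in a controlled, isomorphism-preserving way. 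As in Theorem~\ref{thm:diamond-2P3-hard}, I would therefore use a clique core $A^G=V(G)$ (this is what supplies the triangles and marks the vertices of $G$) together with an edge-encoding gadget attached to the core.

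The proof would then consist of three steps. First, \emph{diamond-freeness}: since the diamond is $\overline{2P_1+P_2}$, a graph is \diamondgraph-free exactly when the neighbourhood of every vertex induces a disjoint union of cliques. I would design the gadget so that each gadget vertex has at most one neighbour in the core $A^G$ (a pendant-like attachment), exactly as the vertices of~$B^G$ do in the proof of Theorem~\ref{thm:diamond-2P3-hard}; this keeps the neighbourhood of every core vertex a cluster and rules out diamonds through the core, while the gadget itself is kept cluster-like. Second, \emph{isomorphism-preservation}: as in Theorem~\ref{thm:diamond-2P3-hard}, I would recover $A^G$ intrinsically (for instance as the vertices lying in induced $K_3$'s, or in sufficiently large cliques, once $|V(G)|$ is large enough), so that every isomorphism $f\colon q(G)\to q(H)$ satisfies $f(A^G)=A^H$; I would then read the adjacency of $v,w\in V(G)$ off the gadget joining their copies in the core, and conclude $G\cong H \iff q(G)\cong q(H)$.

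The hard part is \emph{$P_6$-freeness}, and it is precisely where the construction must differ from that of Theorem~\ref{thm:diamond-2P3-hard}. The length-two subdivision used there is \emph{not} $P_6$-free: for two edges $ax$ and $by$ of~$G$, its output contains the induced path on $x_a,a_x,a,b,b_y,y_b$, which is a~$P_6$. This exposes the central tension of the statement: diamond-freeness forces the gadgets to hang off the core by sparse, pendant-like attachments, whereas pendant paths of exposed length two immediately yield an induced $P_6$ as soon as they sit on two adjacent core vertices. My plan is therefore to redesign the edge gadget so that it is short enough to avoid this configuration yet still records adjacency, adding intra- or inter-gadget edges that chord every potential $P_6$ without creating a vertex having two non-adjacent neighbours inside a common clique (which would reintroduce a diamond). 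With the gadget fixed, I would complete $P_6$-freeness by a case analysis on how the six vertices of a hypothetical induced $P_6$ distribute among $A^G$ and the gadgets: because $A^G$ is a clique, at most two of them lie in~$A^G$, so the remaining vertices lie in gadgets, reducing the argument to a bounded list of local patterns, each of which I would show contains a chord.
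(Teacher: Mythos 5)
There is a genuine gap: you never actually give the construction. The entire content of this theorem is the design of a gadget that simultaneously (a) creates triangles in a recoverable way, (b) avoids the $\diamondgraph$, and (c) avoids $P_6$; your proposal correctly identifies the tension between (b) and (c) for pendant-style attachments to a clique core, but then resolves it only by promising to ``redesign the edge gadget'' and ``add intra- or inter-gadget edges that chord every potential $P_6$ without creating a diamond.'' That promise is exactly the open design problem, and it is not clear it can be discharged within your framework. Concretely, with a clique core $A^G=V(G)$ and each gadget vertex having at most one core neighbour, the gadget for an edge $ax$ must contain a core-avoiding path $a\,s\,t\,x$ with $s,t\notin A^G$; taking a second edge $by$ with $\{a,x\}\cap\{b,y\}=\emptyset$ and the analogous path $b\,u\,z\,y$, the sequence $t,s,a,b,u,z$ is an induced $P_6$ unless you add edges between the two gadgets. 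Chording \emph{every} such pair (there are $\Theta(|E(G)|^2)$ of them) while keeping every neighbourhood a disjoint union of cliques and keeping $A^G$ recoverable is precisely what your case analysis would have to verify, and you have given no candidate edge set for which it could even be attempted.

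The paper escapes this by abandoning the clique core altogether, which is worth internalising because it shows your starting point is what creates the difficulty. It takes \emph{two independent sets} $A^G=V(G)$ and $C^G=E(G)$, joins them completely, and for each edge $e=vw$ attaches two degree-$2$ vertices $v_w$ (adjacent to $v$ and $e$) and $w_v$ (adjacent to $w$ and $e$). Triangles now come from the complete bipartite backbone plus a $B^G$-vertex, not from a clique on $V(G)$; diamond-freeness holds because every triangle meets $A^G$, $B^G$, $C^G$ once each and no two $B^G$-vertices share their neighbour pair; and $P_6$-freeness is immediate rather than a large case analysis, because each $B^G$-vertex has \emph{adjacent} neighbours and so can only be an endpoint of an induced path, forcing the four internal vertices of a putative $P_6$ into $A^G\cup C^G$, which induces a complete bipartite (hence $P_4$-free) graph. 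Note also that recovering the partition from an isomorphism requires first passing to $G^*$ (adding four universal vertices) so that $|E(G^*)|>|V(G^*)|$ distinguishes $C^{G^*}$ from $A^{G^*}$; your sketch of the recovery step would need an analogous normalisation.
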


\begin{proof}
Let~$G$ be a graph.
We construct a graph~$q(G)$ as follows:
\begin{enumerate}
\item Create an independent set with vertex set $A^G=V(G)$.
\item Create an independent set with vertex set $C^G=E(G)$.
\item Add every possible edge between~$A^G$ and~$C^G$.
\begin{sloppypar}
\item For every edge $e=vw \in E(G)$, add vertices~$v_w$ and~$w_v$ and edges $vv_w,v_we,ew_v$ and~$w_vw$ (note that $e \in C^G$).
Let~$B^G$ be the set of vertices added in this step.
\end{sloppypar}
\end{enumerate}
Note that every vertex in~$B^G$ has exactly two neighbours in~$q(G)$ and these neighbours are adjacent.
Furthermore, note that $A^G,B^G$ and~$C^G$ are independent sets with $|A^G|=|V(G)|$ and $|B^G|=2|E(G)|=2|C^G|$.

We claim that~$q(G)$ is $(\diamondgraph,P_6)$-free for every graph~$G$.
First suppose, for contradiction, that the $\diamondgraph$ is an induced subgraph of~$q(G)$.
Since $A^G,B^G$ and~$C^G$ are independent sets, every induced~$K_3$ in~$q(G)$ must have exactly one vertex from each of these sets.
Therefore, since the vertices of~$B^G$ have degree-$2$ in~$q(G)$,
one of the degree-$3$ vertices of the $\diamondgraph$ must be in~$A^G$ and the other in~$C^G$,
and so both degree-$2$ vertices of the $\diamondgraph$ must be in~$B^G$.
However no pair of vertices in~$B^G$ has the same neighbour in~$A^G$ and the same neighbour in~$C^G$, a contradiction.
We conclude that~$q(G)$ is $\diamondgraph$-free.
Now suppose, for contradiction, that~$P_6$ is an induced subgraph of~$q(G)$.
Since the two neighbours of every vertex in~$B^G$ are adjacent, the internal vertices of the~$P_6$ cannot lie in~$B^G$.
Therefore $q(G)[A^G \cup C^G]$ contains an induced~$P_4$.
Since $q(G)[A^G \cup C^G]$ is a complete bipartite graph, it is $P_4$-free.
This contradiction implies that~$q(G)$ is $P_6$-free.

Now let~$G$ and~$H$ be graphs.
Let~$G^*$ and~$H^*$ be the graphs obtained from~$G$ and~$H$, respectively, by adding four pairwise adjacent vertices that are adjacent to every vertex of~$G$ and~$H$, respectively.
Given two graphs~$G$ and~$H$, we claim that~$G$ is isomorphic to~$H$ if and only if~$q(G^*)$ is isomorphic to~$q(H^*)$.
Clearly if~$G$ is isomorphic to~$H$, then~$q(G^*)$ is isomorphic to~$q(H^*)$.
Furthermore, $G$ is isomorphic to~$H$ if and only if~$G^*$ is isomorphic to~$H^*$.
Now suppose that there is an isomorphism~$f$ from~$q(G^*)$ to~$q(H^*)$.
It suffices to show that~$G^*$ is isomorphic to~$H^*$.
Note that $|V(G^*)| \geq 4$ and $|E(G^*)| \geq 6$ by construction.
Thus every vertex in $A^{G^*} \cup C^{G^*}$ has degree greater than~$2$ in~$q(G^*)$.
Since every vertex in~$B^{G^*}$ has degree~$2$ in~$q(G^*)$, it follows that a vertex of~$q(G^*)$ has degree exactly~$2$ if and only if it is in~$B^{G^*}$.
Similarly, a vertex of~$q(H^*)$ has degree~$2$ if and only if it is in~$B^{H^*}$.
Therefore $f(B^{G^*})=B^{H^*}$, and so $|B^{G^*}|=|B^{H^*}|$.
Since $B^{G^*}=2|E(G^*)|$ and $|B^{H^*}|=2|E(H^*)|$, it follows that $|E(G^*)|=|E(H^*)|$.
Since~$q(G^*)$ has $|V(G^*)|+3|E(G^*)|$ vertices and~$q(H^*)$ has $|V(H^*)|+3|E(H^*)|$ vertices, it follows that $|V(G^*)|=|V(H^*)|$.
Now $q(G^*) \setminus B^{G^*}$ is a complete bipartite graph with parts of size~$|V(G^*)|$ and~$|E(G^*)|$, respectively.
Since we obtained~$G^*$ from~$G$ by adding four vertices that are complete to every other vertex of~$G^*$, it follows that $|E(G^*)| = |E(G)|+4(|V(G^*)|-4)+6\geq 3(|V(G^*)|-4) + |V(G^*)|+2 > |V(G^*)|$.
We conclude that $f(A^{G^*})=f(A^{H^*})$ and $f(C^{G^*})=C^{H^*}$.
Now two vertices~$v$ and~$w$ in~$G^*$ are adjacent if and only if~$v$ and~$w$ are connected in~$q(G^*)$ via a path of vertices in $B^{G^*},C^{G^*}$ and~$B^{G^*}$, respectively if and only if~$f(v)$ and~$f(w)$ are connected in~$q(H^*)$ via a path of vertices in~$B^{H^*},C^{H^*}$ and~$B^{H^*}$, respectively, if and only if~$f(v)$ and~$f(w)$ are adjacent in~$H^*$.
Therefore~$G^*$ is isomorphic to~$H^*$.

Combining the above with the fact that~$q(G^*)$ and~$q(H^*)$ are $(\diamondgraph,P_6)$-free shows that {\sc Graph Isomorphism} is \GI-complete on $(\diamondgraph,P_6)$-free graphs.\qed
\end{proof}

\begin{theorem}\label{thm:gem-P1+2P2-hard}
{\sc Graph Isomorphism} is \GI-complete on $(\gem,P_1+\nobreak 2P_2)$-free graphs.
Furthermore, $(\gem,P_1+\nobreak 2P_2)$-free graphs have unbounded clique-width.
\end{theorem}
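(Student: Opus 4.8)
The plan is to give a single polynomial-time construction $G \mapsto q(G)$, from arbitrary graphs~$G$ to $(\gem,P_1+\nobreak 2P_2)$-free graphs, that is faithful for isomorphism (so that {\sc Graph Isomorphism} on general graphs, which is \GI-complete, reduces to our class) and whose image, on a suitably chosen family $(G_n)$, simultaneously witnesses unbounded clique-width. The cheap half of class membership is $(P_1+\nobreak 2P_2)$-freeness: since $2P_2\ssi P_1+\nobreak 2P_2$ and $P_1+\nobreak 2P_2$ has an independent set of size three, any graph of independence number at most~$2$ is automatically $(P_1+\nobreak 2P_2)$-free. I therefore aim to make~$q(G)$ \emph{co-bipartite}, the union of two cliques~$A$ and~$C$ (with~$A$ encoding~$V(G)$ and~$C$ encoding~$E(G)$ through an incidence-type bipartite graph~$\Gamma$ between them). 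Being co-bipartite, $q(G)$ has independence number at most~$2$, so $(P_1+\nobreak 2P_2)$-freeness holds for free, exactly as in the spirit of Theorems~\ref{thm:diamond-2P3-hard} and~\ref{thm:diamond-P6-hard}.

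The hard half, and the main obstacle, is gem-freeness. Because~$A$ and~$C$ are cliques, every induced~$P_4$ of~$q(G)$ uses exactly two vertices of each clique and hence corresponds to an induced~$2K_2$ of~$\Gamma$; a~$\gem$ appears precisely when the two $\Gamma$-endpoints of such a~$P_4$ lying in one clique have a common neighbour in the other clique. For the naive vertex-edge incidence ($a_v\sim c_e\iff v\in e$) this occurs as soon as~$G$ has two adjacent edges and one further vertex, so the naive encoding is \emph{not} gem-free. The crux is therefore to engineer~$\Gamma$ (by subdividing, by replacing vertices and edges of~$G$ with rigid pendant cliques, or by a bounded bipartite complementation) so that no vertex is complete to an induced~$P_4$, while keeping~$\Gamma$ rich enough both to recover~$G$ and to force large clique-width. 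I would verify gem-freeness via the above dichotomy: for any candidate~$P_4$, either the two $A$-endpoints have disjoint $C$-neighbourhoods (so no common neighbour exists) or their $C$-neighbourhoods are nested (so no induced~$2K_2$, hence no~$P_4$, arises in the first place).

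For faithfulness I would isolate inside~$q(G)$ invariantly-defined vertex sets that every isomorphism must respect --- for instance by degree, or by membership in the maximum cliques or in induced~$K_5$'s --- recognising the gadget vertices exactly as the vertices of~$B^G$ are recognised by degree in the proofs of Theorems~\ref{thm:diamond-2P3-hard} and~\ref{thm:diamond-P6-hard}. Having pinned down the images of~$A$, of~$C$ and of the edge-gadgets under any isomorphism $q(G)\to q(H)$, two vertices $a_u,a_v$ are declared adjacent in~$G$ exactly when they are joined through the corresponding gadget in~$q(G)$, a condition preserved by the isomorphism; together with the trivial forward direction this gives $G\cong H\iff q(G)\cong q(H)$.

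Finally, for unbounded clique-width I would run the construction on a family~$G_n$ with a genuinely two-dimensional incidence structure, say~$G_n$ the $n\times n$ grid. Applying subgraph complementation to~$A$ and to~$C$ --- two operations, which by Fact~\ref{fact:comp} preserve boundedness of clique-width --- turns~$q(G_n)$ into the bipartite incidence graph~$\Gamma_n$ (up to a bounded gadget residue removable by Fact~\ref{fact:del-vert}). I would then exhibit on~$\Gamma_n$ a partition into sets~$V_{i,j}$ indexed by the rows and columns of the subdivided grid that satisfies the four hypotheses of Lemma~\ref{lem:generalunbounded} with a fixed bandwidth~$m$, giving $\cw(\Gamma_n)\geq\lfloor\frac{n-1}{m+1}\rfloor+\nobreak 1\to\infty$; by Facts~\ref{fact:del-vert} and~\ref{fact:comp} this forces $\cw(q(G_n))\to\infty$. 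The delicate tension throughout is that gem-avoidance pulls toward nested, sparse incidences (which tend to have \emph{bounded} clique-width), whereas the clique-width witness needs a~$2K_2$-rich, grid-like incidence pattern; reconciling these opposing requirements in a single gadget is where the real difficulty lies.
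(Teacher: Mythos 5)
Your plan has a fatal structural obstruction: a gem-free co-bipartite graph always has bounded clique-width, so the class you are aiming for cannot witness either half of the theorem. Indeed, if $q(G)=A\cup C$ with $A$ and $C$ cliques, then $\overline{q(G)}$ is bipartite, hence $K_3$-free, hence $\gem$-free; and $q(G)$ is $\gem$-free if and only if $\overline{q(G)}$ is $\overline{\gem}$-free, that is, $(P_1+\nobreak P_4)$-free. So $\overline{q(G)}$ is $(\gem,P_1+\nobreak P_4)$-free, a class of bounded clique-width by Theorem~\ref{thm:classification2}.\ref{thm:classification2:bdd:P_1+P_4}, and by Fact~\ref{fact:comp} the graphs $q(G)$ then also have bounded clique-width. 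Consequently no choice of gadget can make $\{q(G_n)\}$ have unbounded clique-width while keeping each $q(G_n)$ co-bipartite and gem-free; the ``delicate tension'' you flag at the end is not an engineering difficulty but an outright contradiction. The same obstruction sinks the \GI-completeness half: by Theorem~\ref{thm:gi-poly-bdd-cw}, {\sc Graph Isomorphism} is polynomial-time solvable on any class of bounded clique-width, so a faithful polynomial-time reduction from general {\sc Graph Isomorphism} into gem-free co-bipartite graphs would prove that {\sc Graph Isomorphism} is in P. There is also a local error in your analysis of induced $P_4$'s: in a co-bipartite graph an induced $P_4$ splits as two adjacent vertices in each clique with exactly \emph{one} crossing edge among the four cross pairs; it does not correspond to an induced $2K_2$ of $\Gamma$, so the ``disjoint or nested neighbourhoods'' test you propose does not certify gem-freeness.

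The fix is to give up independence number at most~$2$. The paper's construction makes both sides complete \emph{multipartite} rather than complete: $A^G$ has one independent part of size $d_G(v)$ per vertex $v$, $B^G$ has one part of size~$2$ per edge, and the cross edges form a perfect matching routing the two vertices of each edge-part to the parts of the edge's two endpoints. Gem-freeness then follows because every vertex has a unique cross-neighbour, and $(P_1+\nobreak 2P_2)$-freeness is proved directly rather than via independence number: each $P_2$ of a putative $P_1+\nobreak 2P_2$ must straddle the two sides (otherwise three of its five vertices would induce a $P_1+\nobreak P_2$ inside a complete multipartite graph), which forces both vertices of some part $B^G_j$ to be matched into the same part $A^G_i$, contradicting the construction. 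This multipartite structure is exactly what lets the grid survive: after complementing $A^{H_n}$ and $B^{H_n}$ (Fact~\ref{fact:comp}) one obtains a graph admitting the partition required by Lemma~\ref{lem:generalunbounded} with $m=1$, giving clique-width at least $\lfloor\frac{n-1}{2}\rfloor+1$.
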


\begin{proof}
Let~$G$ be a graph.
Let $v^G_1,\ldots,v^G_n$ be the vertices of~$G$ and let $e^G_1,\ldots,e^G_m$ be the edges of~$G$.
For the proof of both statements of the theorem, we construct a graph~$q(G)$ from~$G$ as follows:
\begin{enumerate}
\item Create a complete multipartite graph with partition $(A^G_1,\ldots,A^G_n)$, where $|A^G_i|=d_G(v^G_i)$ for $i \in \{1,\ldots,n\}$ and let $A^G=\bigcup A^G_i$.
\item Create a complete multipartite graph with partition $(B^G_1,\ldots,B^G_m)$, where $|B^G_i|=2$ for $i \in \{1,\ldots,m\}$ and let $B^G=\bigcup B^G_i$.
\item Take the disjoint union of the two graphs above, then for each edge $e^G_i=v^G_{i_1}v^G_{i_2}$ in~$G$ in turn, add an edge from one vertex of~$B^G_i$ to a vertex of~$A^G_{i_1}$ and an edge from the other vertex of~$B^G_i$ to a vertex of~$A^G_{i_2}$.
Do this in such a way that the edges added between~$A^G$ and~$B^G$ form a perfect matching.
\end{enumerate}

We claim that~$q(G)$ is $(\gem,P_1+\nobreak 2P_2)$-free.
Since~$q(G)[A^G]$ and~$q(G)[B^G]$ are complete multipartite graphs, they must both be $(P_1+\nobreak P_2)$-free, so every induced $P_1+\nobreak P_2$ in~$q(G)$ must contain at least one vertex in~$A^G$ and at least one vertex in~$B^G$.
Suppose, for contradiction, that the $\gem$ is an induced subgraph of~$q(G)$.
Let $X \in \{A^G,B^G\}$ be the set that contains the dominating vertex~$v$ of the~$\gem$ and let~$Y$ be the other set.
Since $\gem-v$ is isomorphic to~$P_4$, which contains an induced~$P_1+\nobreak P_2$, at least one vertex~$w$ of the~$\gem$ must be in~$Y$.
Since~$v$ has only one neighbour in~$Y$, all other vertices of the~$\gem$ must be in~$X$.
However, $w$ has only one neighbour in~$X$, but at least two neighbours in the~$\gem$.
This contradiction shows that~$q(G)$ is indeed $\gem$-free.
Now suppose, for contradiction, that~$P_1+\nobreak 2P_2$ is an induced subgraph of~$q(G)$.
First suppose that one of the~$P_2$'s in this~$P_1+\nobreak 2P_2$ either has both vertices in~$A^G$ or both vertices in~$B^G$; let $X\in \{A^G,B^G\}$ be the set that contains this~$P_2$ and let~$Y$ be the other set.
Then since~$q(G)[X]$ is $(P_1+\nobreak P_2)$-free, the remaining three vertices of the~$P_1+\nobreak 2P_2$ must be in~$Y$.
This means that~$q(G)[Y]$ contains~$P_1+\nobreak P_2$ as an induced subgraph.
This contradiction means that each of the~$P_2$'s in the~$P_1+\nobreak 2P_2$ must have exactly one vertex in~$A^G$ and exactly one vertex in~$B^G$.
Therefore there must be non-adjacent vertices $x_1,x_2 \in A^G$ and non-adjacent vertices $y_1,y_2 \in B^G$ such that~$x_i$ is adjacent to~$y_j$ if and only if $i=j$.
Therefore~$x_1$ and~$x_2$ must be in the same set~$A^G_i$ and~$y_1$ and~$y_2$ must be in the same set~$B^G_j$.
This is a contradiction as the two vertices in~$B^G_j$ cannot both have neighbours in the same set~$A^G_i$.
Therefore~$q(G)$ is indeed $(\gem,P_1+\nobreak 2P_2)$-free.

\medskip
\noindent
We are now ready to prove that {\sc Graph Isomorphism} is \GI-complete on $(\gem,P_1+\nobreak 2P_2)$-free graphs.
Now let~$G$ and~$H$ be graphs.
Let~$G^*$ and~$H^*$ be the graphs obtained from~$G$ and~$H$, respectively, by adding four pairwise adjacent vertices that are adjacent to every vertex of~$G$ and~$H$, respectively.
Note that every vertex of~$G^*$ and~$H^*$ has degree at least~$3$.
We claim that~$G$ is isomorphic to~$H$ if and only if~$q(G^*)$ is isomorphic to~$q(H^*)$.
Clearly if~$G$ is isomorphic to~$H$, then~$q(G^*)$ is isomorphic to~$q(H^*)$.
Furthermore, $G$ is isomorphic to~$H$ if and only if~$G^*$ is isomorphic to~$H^*$.
Now suppose that there is an isomorphism~$f$ from~$q(G^*)$ to~$q(H^*)$.
It suffices to show that~$G^*$ is isomorphic to~$H^*$.
Note that~$q(G^*)[A^{G^*}]$ and~$q(G^*)[B^{G^*}]$ each contain an induced~$K_3$ (but there is no~$K_3$ in~$q(G^*)$ with vertices in both~$A^{G^*}$ and~$B^{G^*}$).
Furthermore, given such a~$K_3$ in~$q(G^*)[A^{G^*}]$ (resp.~$q(G^*)[B^{G^*}]$), a vertex is in~$A^{G^*}$ (resp.~$B^{G^*}$) if and only if it has at least two neighbours in this~$K_3$, so either $f(A^{G^*})=A^{H^*}$ and $f(B^{G^*})=B^{H^*}$ or $f(A^{G^*})=B^{H^*}$ and $f(B^{G^*})=A^{H^*}$.
Since~$q(G^*)[A^{G^*}]$ contains an induced~$3P_1$, but~$q(G)[B^{G^*}]$ does not, it follows that $f(A^{G^*})=A^{H^*}$ and $f(B^{G^*})=B^{H^*}$.
Furthermore, this implies that for all $i \in \{1,\ldots,n\}$, $f(A^{G^*}_i)=A^{H^*}_j$ for some $j \in \{1,\ldots,n\}$ with $|A^{H^*}_j|=|A^{G^*}_i|$ and for all $i \in \{1,\ldots,m\}$, $f(B^{G^*}_i)=B^{H^*}_j$ for some $j \in \{1,\ldots,m\}$.
Now two vertices~$v^{G^*}_i$ and~$v^{G^*}_j$ in~$G^*$ are adjacent if and only if there is a $k \in \{1,\ldots,m\}$ such that there are edges in~$q(G^*)$ from~$B^{G^*}_k$ to both~$A^{G^*}_i$ and~$A^{G^*}_j$ if and only if there is a $k \in \{1,\ldots,m\}$ such that there are edges in~$q(H^*)$ from~$f(B^{G^*}_k)$ to both~$f(A^{G^*}_i)$ and~$f(A^{G^*}_j)$ if and only if~$v^{H^*}_{i'}$ and~$v^{H^*}_{j'}$ are adjacent where $f(A^{G^*}_i)=A^{H^*}_{i'}$ and $f(A^{G^*}_j)=A^{H^*}_{j'}$.
Therefore~$G^*$ is isomorphic to~$H^*$.

Combining the above with the fact that $q(G^*)$ and~$q(H^*)$ are $(\gem,P_1+\nobreak 2P_2)$-free shows that {\sc Graph Isomorphism} is \GI-complete on $(\gem,P_1+\nobreak 2P_2)$-free graphs.

\medskip
\noindent
We now prove that  the class of $(\gem,P_1+\nobreak 2P_2)$-free graphs has unbounded clique-width.
Let~$H_n$ be the $n \times n$ grid (see also \figurename~\ref{fig:grid}).
We claim that the set of graphs $\{q(H_n) \;|\; n \in \mathbb{N}\}$ has unbounded clique-width and note that we have shown that every graph in this set is $(\gem,P_1+\nobreak 2P_2)$-free.
Let~$H_n'$ be the graph obtained from~$q(H_n)$ by complementing~$A^{H_n}$ and complementing~$B^{H_n}$ (see also \figurename~\ref{fig:grid}).
By Fact~\ref{fact:comp}, it is sufficient to show that the set of graphs $\{H_n' \;|\; n \in \mathbb{N}\}$ has unbounded clique-width.
We now partition~$V(H_n')$ into sets~$V_{i,j}$ for $i,j \in \{1,\ldots,n\}$ as follows.
For $i,j \in \{1,\ldots,n\}$ let~$V_{i,j}$ consist of the vertices in the set~$A^{H_n}_k$ that correspond to the vertex in the $i$th row and $j$th column of~$H_n$, along with the vertices in~$B^{H_n}$ that have a neighbour in~$A^{H_n}_k$.
Note that every vertex of~$H_n'$ is in exactly one set~$V_{i,j}$, so these sets form a partition of~$V(H_n')$.
Furthermore $H_n'[\cup^n_{j=1}V_{i,j}]$ is connected for all $i\geq 1$, $H_n'[\cup^n_{i=1}V_{i,j}]$ is connected for all $j\geq 1$, and for $i,j,k,\ell\geq 1$, if a vertex of~$V_{i,j}$ is adjacent to a vertex of~$V_{k,\ell}$, then $|k-i|\leq 1$ and $|\ell-j| \leq 1$.
Applying Lemma~\ref{lem:generalunbounded} with $m=1$ we find that~$H_n'$ has clique-width at least $\lfloor\frac{n-1}{2}\rfloor+\nobreak 1$.
This completes the proof.\qed
\end{proof}

\begin{figure}
\begin{center}
\begin{tabular}{cc}
\begin{minipage}{0.49\textwidth}
\begin{center}
\scalebox{1.0}{
\begin{tikzpicture}[scale=2, every node/.style={scale=0.7}]
\GraphInit[vstyle=Simple]
\SetVertexSimple[MinSize=6pt]
\Vertex[x=1,y=1]{v11}
\Vertex[x=2,y=1]{v21}
\Vertex[x=3,y=1]{v31}
\Vertex[x=4,y=1]{v41}

\Vertex[x=1,y=2]{v12}
\Vertex[x=2,y=2]{v22}
\Vertex[x=3,y=2]{v32}
\Vertex[x=4,y=2]{v42}

\Vertex[x=1,y=3]{v13}
\Vertex[x=2,y=3]{v23}
\Vertex[x=3,y=3]{v33}
\Vertex[x=4,y=3]{v43}

\Vertex[x=1,y=4]{v14}
\Vertex[x=2,y=4]{v24}
\Vertex[x=3,y=4]{v34}
\Vertex[x=4,y=4]{v44}

\Edges(v11,v12,v13,v14)
\Edges(v21,v22,v23,v24)
\Edges(v31,v32,v33,v34)
\Edges(v41,v42,v43,v44)

\Edges(v11,v21,v31,v41)
\Edges(v12,v22,v32,v42)
\Edges(v13,v23,v33,v43)
\Edges(v14,v24,v34,v44)
\end{tikzpicture}}
\end{center}
\end{minipage}
&
\begin{minipage}{0.49\textwidth}
\begin{center}
\scalebox{1.0}{
\begin{tikzpicture}[scale=2, every node/.style={scale=0.7}]
\GraphInit[vstyle=Simple]
\SetVertexSimple[MinSize=6pt]
\Vertex[x=1,y=1+0.2]{v11n}
\Vertex[x=2,y=1+0.2]{v21n}
\Vertex[x=3,y=1+0.2]{v31n}
\Vertex[x=4,y=1+0.2]{v41n}

\Vertex[x=1,y=2+0.2]{v12n}
\Vertex[x=2,y=2+0.2]{v22n}
\Vertex[x=3,y=2+0.2]{v32n}
\Vertex[x=4,y=2+0.2]{v42n}

\Vertex[x=1,y=3+0.2]{v13n}
\Vertex[x=2,y=3+0.2]{v23n}
\Vertex[x=3,y=3+0.2]{v33n}
\Vertex[x=4,y=3+0.2]{v43n}

\Vertex[x=1,y=2-0.2]{v12s}
\Vertex[x=2,y=2-0.2]{v22s}
\Vertex[x=3,y=2-0.2]{v32s}
\Vertex[x=4,y=2-0.2]{v42s}

\Vertex[x=1,y=3-0.2]{v13s}
\Vertex[x=2,y=3-0.2]{v23s}
\Vertex[x=3,y=3-0.2]{v33s}
\Vertex[x=4,y=3-0.2]{v43s}

\Vertex[x=1,y=4-0.2]{v14s}
\Vertex[x=2,y=4-0.2]{v24s}
\Vertex[x=3,y=4-0.2]{v34s}
\Vertex[x=4,y=4-0.2]{v44s}

\Vertex[x=1+0.2,y=1]{v11e}
\Vertex[x=2+0.2,y=1]{v21e}
\Vertex[x=3+0.2,y=1]{v31e}

\Vertex[x=1+0.2,y=2]{v12e}
\Vertex[x=2+0.2,y=2]{v22e}
\Vertex[x=3+0.2,y=2]{v32e}

\Vertex[x=1+0.2,y=3]{v13e}
\Vertex[x=2+0.2,y=3]{v23e}
\Vertex[x=3+0.2,y=3]{v33e}

\Vertex[x=1+0.2,y=4]{v14e}
\Vertex[x=2+0.2,y=4]{v24e}
\Vertex[x=3+0.2,y=4]{v34e}

\Vertex[x=2-0.2,y=1]{v21w}
\Vertex[x=3-0.2,y=1]{v31w}
\Vertex[x=4-0.2,y=1]{v41w}

\Vertex[x=2-0.2,y=2]{v22w}
\Vertex[x=3-0.2,y=2]{v32w}
\Vertex[x=4-0.2,y=2]{v42w}

\Vertex[x=2-0.2,y=3]{v23w}
\Vertex[x=3-0.2,y=3]{v33w}
\Vertex[x=4-0.2,y=3]{v43w}

\Vertex[x=2-0.2,y=4]{v24w}
\Vertex[x=3-0.2,y=4]{v34w}
\Vertex[x=4-0.2,y=4]{v44w}

\SetVertexSimple[MinSize=6pt,FillColor=white]
\Vertex[x=1,y=1+0.4]{v11nn}
\Vertex[x=2,y=1+0.4]{v21nn}
\Vertex[x=3,y=1+0.4]{v31nn}
\Vertex[x=4,y=1+0.4]{v41nn}

\Vertex[x=1,y=2+0.4]{v12nn}
\Vertex[x=2,y=2+0.4]{v22nn}
\Vertex[x=3,y=2+0.4]{v32nn}
\Vertex[x=4,y=2+0.4]{v42nn}

\Vertex[x=1,y=3+0.4]{v13nn}
\Vertex[x=2,y=3+0.4]{v23nn}
\Vertex[x=3,y=3+0.4]{v33nn}
\Vertex[x=4,y=3+0.4]{v43nn}

\Vertex[x=1,y=2-0.4]{v12ss}
\Vertex[x=2,y=2-0.4]{v22ss}
\Vertex[x=3,y=2-0.4]{v32ss}
\Vertex[x=4,y=2-0.4]{v42ss}

\Vertex[x=1,y=3-0.4]{v13ss}
\Vertex[x=2,y=3-0.4]{v23ss}
\Vertex[x=3,y=3-0.4]{v33ss}
\Vertex[x=4,y=3-0.4]{v43ss}

\Vertex[x=1,y=4-0.4]{v14ss}
\Vertex[x=2,y=4-0.4]{v24ss}
\Vertex[x=3,y=4-0.4]{v34ss}
\Vertex[x=4,y=4-0.4]{v44ss}

\Vertex[x=1+0.4,y=1]{v11ee}
\Vertex[x=2+0.4,y=1]{v21ee}
\Vertex[x=3+0.4,y=1]{v31ee}

\Vertex[x=1+0.4,y=2]{v12ee}
\Vertex[x=2+0.4,y=2]{v22ee}
\Vertex[x=3+0.4,y=2]{v32ee}

\Vertex[x=1+0.4,y=3]{v13ee}
\Vertex[x=2+0.4,y=3]{v23ee}
\Vertex[x=3+0.4,y=3]{v33ee}

\Vertex[x=1+0.4,y=4]{v14ee}
\Vertex[x=2+0.4,y=4]{v24ee}
\Vertex[x=3+0.4,y=4]{v34ee}

\Vertex[x=2-0.4,y=1]{v21ww}
\Vertex[x=3-0.4,y=1]{v31ww}
\Vertex[x=4-0.4,y=1]{v41ww}

\Vertex[x=2-0.4,y=2]{v22ww}
\Vertex[x=3-0.4,y=2]{v32ww}
\Vertex[x=4-0.4,y=2]{v42ww}

\Vertex[x=2-0.4,y=3]{v23ww}
\Vertex[x=3-0.4,y=3]{v33ww}
\Vertex[x=4-0.4,y=3]{v43ww}

\Vertex[x=2-0.4,y=4]{v24ww}
\Vertex[x=3-0.4,y=4]{v34ww}
\Vertex[x=4-0.4,y=4]{v44ww}

\Edges(v11e,v11ee,v21ww,v21w,v21e,v21ee,v31ww,v31w,v31e,v31ee,v41ww,v41w)
\Edges(v12e,v12ee,v22ww,v22w,v22e,v22ee,v32ww,v32w,v32e,v32ee,v42ww,v42w)
\Edges(v13e,v13ee,v23ww,v23w,v23e,v23ee,v33ww,v33w,v33e,v33ee,v43ww,v43w)
\Edges(v14e,v14ee,v24ww,v24w,v24e,v24ee,v34ww,v34w,v34e,v34ee,v44ww,v44w)

\Edges(v11n,v11nn,v12ss,v12s,v12n,v12nn,v13ss,v13s,v13n,v13nn,v14ss,v14s)
\Edges(v21n,v21nn,v22ss,v22s,v22n,v22nn,v23ss,v23s,v23n,v23nn,v24ss,v24s)
\Edges(v31n,v31nn,v32ss,v32s,v32n,v32nn,v33ss,v33s,v33n,v33nn,v34ss,v34s)
\Edges(v41n,v41nn,v42ss,v42s,v42n,v42nn,v43ss,v43s,v43n,v43nn,v44ss,v44s)

\Edges(v11n,v11e)
\Edges(v12n,v12e,v12s)
\Edges(v13n,v13e,v13s)
\Edges(v14e,v14s)

\Edges(v21w,v21n,v21e)
\Edges(v22n,v22e,v22s,v22w,v22n)
\Edges(v23n,v23e,v23s,v23w,v23n)
\Edges(v24e,v24s,v24w)

\Edges(v31w,v31n,v31e)
\Edges(v32n,v32e,v32s,v32w,v32n)
\Edges(v33n,v33e,v33s,v33w,v33n)
\Edges(v34e,v34s,v34w)

\Edges(v41n,v41w)
\Edges(v42s,v42w,v42n)
\Edges(v43s,v43w,v43n)
\Edges(v44s,v44w)
\end{tikzpicture}}
\end{center}
\end{minipage}
\end{tabular}
\end{center}
\caption{\label{fig:grid}The $n\times n$ grid~$H_n$ and the graph~$H_n'$, defined in the proof of Theorem~\ref{thm:gem-P1+2P2-hard},
for $n=4$.
In the image of~$H_n'$, the vertices in~$A^{H_n}$ are coloured black and the vertices in~$B^{H_n}$ are coloured white.}
\end{figure}
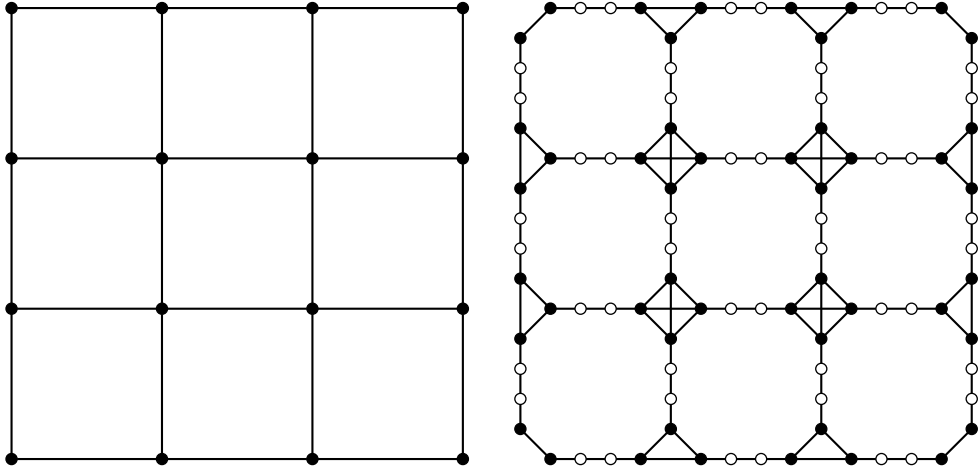

\section{Clique-Width for Hereditary Graph Classes}\label{sec:cw}

The following result (see~\cite{DP16} for a proof), combined with Theorem~\ref{thm:gi-one-graph}, shows that the classifications of the complexity of {\sc Graph Isomorphism} and boundedness of clique-width are analogous for $H$-free graphs.

\begin{theorem}\label{thm:P4-free-bdd-cw}
Let~$H$ be a graph.
The class of $H$-free graphs has bounded clique-width if and only~if $H \ssi P_4$.
\end{theorem}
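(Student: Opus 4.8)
The plan is to prove the two implications separately. For the ``if'' direction, note that if $H \ssi P_4$ then every induced $P_4$ contains an induced copy of $H$, so every $H$-free graph is automatically $P_4$-free: if an $H$-free graph $G$ had an induced $P_4$, that $P_4$ would contain an induced $H$, a contradiction. Since the $P_4$-free graphs are exactly the cographs, which have clique-width at most~$2$, the class of $H$-free graphs has clique-width at most~$2$ and hence bounded clique-width. For the ``only if'' direction I would prove the contrapositive: assuming $H \not\ssi P_4$, I would exhibit, for every such $H$, a subclass of the $H$-free graphs that has unbounded clique-width.

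The strategy for the contrapositive is to use only three witness families, all derived from the $n\times n$ grids. First, partitioning the grid into singletons (one vertex per cell) and applying Lemma~\ref{lem:generalunbounded} with $m=1$ shows (exactly as in the proof of Theorem~\ref{thm:gem-P1+2P2-hard}) that the grids have unbounded clique-width; being bipartite, they are $K_3$-free. Second, since taking the complement of a graph is a subgraph complementation with respect to the whole graph, Fact~\ref{fact:comp} shows the complements of the grids also have unbounded clique-width, and as the clique number of a grid is~$2$, each complement has independence number~$2$ and is therefore $3P_1$-free. Third, performing a subgraph complementation on one side of the grid's bipartition turns that side into a clique and leaves the other side independent, producing a \emph{split} graph; by Fact~\ref{fact:comp} this family again has unbounded clique-width, and every split graph is $(2P_2,C_4,C_5)$-free.

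It then remains to match each $H\not\ssi P_4$ to one of these families. If the clique number $\omega(H)\ge 3$, i.e.\ $K_3 \ssi H$, then the grids are $H$-free. If the independence number $\alpha(H)\ge 3$, i.e.\ $3P_1 \ssi H$, then the complements of the grids are $H$-free. Otherwise $H$ is $(K_3,3P_1)$-free, so by Ramsey's theorem ($R(3,3)=6$) it has at most five vertices; enumerating the triangle-free graphs with independence number at most~$2$ on at most five vertices shows that the only such graphs \emph{not} contained in $P_4$ are $C_4$, $2P_2$ and $C_5$, each of which is excluded by the split graphs. In every case we obtain an $H$-free class of unbounded clique-width, which completes the contrapositive and hence the theorem.

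The main obstacle is the finite case analysis in the last step: one must verify carefully that $C_4$, $2P_2$ and $C_5$ form the complete list of $(K_3,3P_1)$-free graphs not contained in $P_4$, and confirm that split graphs indeed avoid all three (the easy direction of the Földes--Hammer characterization). By contrast, establishing unboundedness of clique-width for the three witness families is not the difficult part, since it is handled uniformly by Lemma~\ref{lem:generalunbounded} for the grids and by Fact~\ref{fact:comp} for their complements and the associated split graphs, so no separate unboundedness argument is needed.
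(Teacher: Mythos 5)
Your proof is correct, and since the paper itself gives no proof of Theorem~\ref{thm:P4-free-bdd-cw} (it only cites~\cite{DP16}), the relevant comparison is with that reference, whose argument is essentially yours: cographs for the ``if'' direction, and for the ``only if'' direction the trichotomy $K_3 \ssi H$, $3P_1 \ssi H$, or (via Ramsey's theorem) $H \in \{C_4, 2P_2, C_5\}$, witnessed respectively by $K_3$-free, $3P_1$-free and split graphs of unbounded clique-width. The only minor difference is that you manufacture all three witness families from the grids using Lemma~\ref{lem:generalunbounded} and Fact~\ref{fact:comp}, whereas the literature typically invokes the known unboundedness of clique-width of split graphs directly; both routes are valid.
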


However, for $(H_1,H_2)$-free graphs, the classifications no longer coincide.
Below, we update the summary theorem and list of open cases from~\cite{DLP17}.
That is, we added the new case solved in Theorem~\ref{thm:gem-P1+2P2-hard} to Theorem~\ref{thm:classification2} (Statement~\ref{thm:classification2:unbdd:gem}) and removed it from Open Problem~\ref{oprob:twographs}.
Given four graphs $H_1,H_2,H_3,H_4$, the classes of $(H_1,H_2)$-free graphs and $(H_3,H_4)$-free graphs are {\em equivalent} if the unordered pair $H_3,H_4$ can be obtained from the unordered pair $H_1,H_2$ by some combination of the operations:
\begin{enumerate}[(i)]
\item complementing both graphs in the pair, and
\item if one of the graphs in the pair is~$K_3$, replacing it with the~$\paw$ or vice versa.
\end{enumerate}
If two classes are equivalent, then one of them has bounded clique-width if and only if the other one does~\cite{DP16}.

\begin{theorem}\label{thm:classification2}
For a class~${\cal G}$ of graphs defined by two forbidden induced subgraphs, the following holds:
\begin{enumerate}
\item ${\cal G}$ has bounded clique-width if it is equivalent to a class of $(H_1,H_2)$-free graphs such that one of the following holds:
\begin{enumerate}[(i)]
\renewcommand{\theenumii}{(\roman{enumii})}
\renewcommand{\labelenumii}{(\roman{enumii})}
\item \label{thm:classification2:bdd:P4} $H_1$ or $H_2 \ssi P_4$
\item \label{thm:classification2:bdd:ramsey} $H_1=K_s$ and $H_2=tP_1$ for some $s,t\geq 1$
\item \label{thm:classification2:bdd:P_1+P_3} $H_1 \ssi \paw$ and $H_2 \ssi K_{1,3}+\nobreak 3P_1,\; K_{1,3}+\nobreak P_2,\;\allowbreak P_1+\nobreak P_2+\nobreak P_3,\;\allowbreak P_1+\nobreak P_5,\;\allowbreak P_1+\nobreak S_{1,1,2},\;\allowbreak P_2+\nobreak P_4,\;\allowbreak P_6,\; \allowbreak S_{1,1,3}$ or~$S_{1,2,2}$
\item \label{thm:classification2:bdd:2P_1+P_2} $H_1 \ssi \diamondgraph$ and $H_2\ssi P_1+\nobreak 2P_2,\; 3P_1+\nobreak P_2$ or~$P_2+\nobreak P_3$
\item \label{thm:classification2:bdd:P_1+P_4} $H_1 \ssi \gem$ and $H_2 \ssi P_1+\nobreak P_4$ or~$P_5$
\item \label{thm:classification2:bdd:K_13} $H_1\ssi K_3+\nobreak P_1$ and $H_2 \ssi K_{1,3}$
\item \label{thm:classification2:bdd:2P1_P3} $H_1\ssi \overline{2P_1+\nobreak P_3}$ and $H_2\ssi 2P_1+\nobreak P_3$.
\end{enumerate}
\item ${\cal G}$ has unbounded clique-width if it is equivalent to a class of $(H_1,H_2)$-free graphs such that one of the following holds:
\begin{enumerate}[(i)]
\renewcommand{\theenumii}{(\roman{enumii})}
\renewcommand{\labelenumii}{(\roman{enumii})}
\item \label{thm:classification2:unbdd:not-in-S} $H_1\not\in {\cal S}$ and $H_2 \not \in {\cal S}$
\item \label{thm:classification2:unbdd:not-in-co-S} $H_1\notin \overline{{\cal S}}$ and $H_2 \not \in \overline{{\cal S}}$
\item \label{thm:classification2:unbdd:K_13or2P_2} $H_1 \si K_3+\nobreak P_1$ or~$C_4$ and $H_2 \si 4P_1$ or~$2P_2$
\item \label{thm:classification2:unbdd:2P_1+P_2} $H_1 \si \diamondgraph$ and $H_2 \si K_{1,3},\; 5P_1,\; P_2+\nobreak P_4$ or~$P_6$
\item \label{thm:classification2:unbdd:3P_1} $H_1 \si K_3$ and $H_2 \si 2P_1+\nobreak 2P_2,\; 2P_1+\nobreak P_4,\; 4P_1+\nobreak P_2,\; 3P_2$ or~$2P_3$
\item \label{thm:classification2:unbdd:4P_1} $H_1 \si K_4$ and $H_2 \si P_1 +\nobreak P_4$ or~$3P_1+\nobreak P_2$
\item \label{thm:classification2:unbdd:gem} $H_1 \si \gem$ and $H_2 \si P_1+\nobreak 2P_2$.
\end{enumerate}
\end{enumerate}
\end{theorem}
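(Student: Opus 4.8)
The plan is to read Theorem~\ref{thm:classification2} as an assembly result rather than a single argument: almost every listed case is a reformulation of an (un)boundedness result already in the literature, reorganised under the stated equivalence relation, and the only genuinely new ingredient is the $(\gem,P_1+2P_2)$ case supplied by Theorem~\ref{thm:gem-P1+2P2-hard}. First I would record the two monotonicity principles that drive the statement. If $H\ssi H'$ then every $H$-free graph is $H'$-free, so forbidding the smaller graph produces a \emph{subclass}; dually, if $H\si H'$ then the $H$-free graphs form a \emph{superclass} of the $H'$-free graphs. Since a subclass of a class of bounded clique-width is bounded, and a superclass of a class of unbounded clique-width is unbounded, each hypothesis in Part~1 (written with $\ssi$) only has to place $(H_1,H_2)$-free graphs inside one of finitely many extremal classes known to be bounded, and each hypothesis in Part~2 (written with $\si$) only has to place them outside one known to be unbounded. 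It then remains to quotient by the equivalence relation: invoking~\cite{DP16} that complementing both forbidden graphs and swapping $K_3\leftrightarrow\paw$ preserve boundedness, it suffices to treat one representative pair per class. The $K_3\leftrightarrow\paw$ swap is exactly Lemma~\ref{lem:olariu}: a connected $\paw$-free graph is complete multipartite (clique-width at most~$2$) or $K_3$-free, so boundedness for a $(\paw,H_2)$-free class matches that of the $(K_3,H_2)$-free class.

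For Part~1 I would verify the seven families in turn. Case~\ref{thm:classification2:bdd:P4} is immediate from Theorem~\ref{thm:P4-free-bdd-cw}. Case~\ref{thm:classification2:bdd:ramsey} is Ramsey finiteness, since a $(K_s,tP_1)$-free graph has fewer than $R(s,t)$ vertices. Case~\ref{thm:classification2:bdd:P_1+P_3} passes through the $\paw/K_3$ equivalence to bounded $(K_3,H_2)$-free classes: the $P_6$ endpoint is Lemma~\ref{lem:K3-P6-bdd-cw} and Corollary~\ref{cor:pawP5-free-bdd-cw}, and the remaining endpoints (such as $P_2+P_4$, $S_{1,1,3}$ and $S_{1,2,2}$) are quoted from~\cite{DHP0,DLP17}. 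Cases~\ref{thm:classification2:bdd:2P_1+P_2}--\ref{thm:classification2:bdd:K_13} are quoted from the companion papers~\cite{BDHP17,DDP17,DLP17}, and Case~\ref{thm:classification2:bdd:2P1_P3} is quoted from~\cite{BDHP16}, where boundedness is obtained from Lemma~\ref{lem:2P_1P3-split-bdd-cw} after reducing the extremal class to $\overline{2P_1+P_3}$-free split graphs.

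Part~2 is symmetric. Cases~\ref{thm:classification2:unbdd:not-in-S} and~\ref{thm:classification2:unbdd:not-in-co-S} are the standard ``both forbidden graphs lie outside~${\cal S}$'' statement and its complement, obtained from the wall/grid constructions of~\cite{DP16} (of which Lemma~\ref{lem:generalunbounded} is the technical engine) together with Fact~\ref{fact:comp}. Cases~\ref{thm:classification2:unbdd:K_13or2P_2}--\ref{thm:classification2:unbdd:4P_1} each exhibit a known unbounded subclass and are quoted from~\cite{BDJLPZ17,BDHP17,DDP17,DP16}. Finally, Case~\ref{thm:classification2:unbdd:gem} is the new contribution: from $\gem\ssi H_1$ and $P_1+2P_2\ssi H_2$ we get $(\gem,P_1+2P_2)$-free $\subseteq(H_1,H_2)$-free, so the class is a superclass of the unbounded class built in Theorem~\ref{thm:gem-P1+2P2-hard} and is therefore unbounded.

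The hard part is not any individual case but the bookkeeping that makes the two lists valid. I would have to confirm that the finitely many representatives are genuinely extremal, i.e.\ that every pair satisfying a stated hypothesis really does embed, via $\ssi$ or $\si$, into one of them, and that no pair is captured simultaneously by a bounded and an unbounded condition (otherwise the theorem would be inconsistent). The only place where new mathematics is required rather than a citation is Case~\ref{thm:classification2:unbdd:gem}, and that is discharged in full by Theorem~\ref{thm:gem-P1+2P2-hard}; removing this pair from the previous list of open cases is precisely the clique-width contribution that the introduction advertises.
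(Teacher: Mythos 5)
Your proposal matches the paper's treatment: Theorem~\ref{thm:classification2} is presented there without a standalone proof, being explicitly described as the summary theorem from~\cite{DLP17} updated by inserting the single new case (Statement~\ref{thm:classification2:unbdd:gem}), which is exactly the superclass argument you give via Theorem~\ref{thm:gem-P1+2P2-hard}. Your additional remarks on the $\ssi/\si$ monotonicity, the equivalence relation via~\cite{DP16} and Lemma~\ref{lem:olariu}, and the consistency bookkeeping are correct but are all inherited from the cited prior work rather than re-proved in this paper.
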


\begin{oproblem}\label{oprob:twographs}
Does the class of $(H_1,H_2)$-free graphs have bounded or unbounded clique-width when:
\begin{enumerate}[(i)]
\renewcommand{\theenumii}{(\roman{enumii})}
\renewcommand{\labelenumii}{(\roman{enumii})}
\item \label{oprob:twographs:3P_1} $H_1=K_3$ and $H_2 \in \{P_1+\nobreak S_{1,1,3},
\allowbreak S_{1,2,3}\}$
\item\label{oprob:twographs:2P_1+P_2} $H_1=\diamondgraph$ and $H_2 \in \{P_1+\nobreak P_2+\nobreak P_3,\allowbreak P_1+\nobreak P_5\}$
\item \label{oprob:twographs:P_1+P_4} $H_1=\gem$ and $H_2=P_2+\nobreak P_3$.
\end{enumerate}
\end{oproblem}

\section{Classifying the Complexity of {\sc Graph Isomorphism} for $(H_1,H_2)$-free Graphs}\label{sec:gi-classification}

\begin{sloppypar}
Recall that given four graphs $H_1,H_2,H_3,H_4$, the classes of $(H_1,H_2)$-free graphs and $(H_3,H_4)$-free graphs are {\em equivalent} if the unordered pair $H_3,H_4$ can be obtained from the unordered pair $H_1,H_2$ by some combination of the operations:
\begin{enumerate}[(i)]
\item complementing both graphs in the pair, and
\item if one of the graphs in the pair is~$K_3$, replacing it with the~$\paw$ or vice versa.
\end{enumerate}
Note that two graphs~$G$ and~$H$ are isomorphic if and only if their complements~$\overline{G}$ and~$\overline{H}$ are isomorphic.
Therefore, for every pair of graphs $H_1,H_2$, the {\sc Graph Isomorphism} problem is polynomial-time solvable or \GI-complete for $(H_1,H_2)$-free graphs if and only if the same is true for $(\overline{H_1},\overline{H_2})$-free graphs.
Since {\sc Graph Isomorphism} can be solved component-wise, and it can easily be solved on complete multipartite graphs in polynomial time, Lemma~\ref{lem:olariu} implies that for every graph~$H_1$, the {\sc Graph Isomorphism} problem is polynomial-time solvable or \GI-complete for $(H_1,K_3)$-free graphs if and only if the same is true for $(H_1,\paw)$-free graphs.
Thus if two classes are equivalent, then the complexity of {\sc Graph Isomorphism} is the same on both of them.
\end{sloppypar}

Here is the summary of known results for the complexity of {\sc Graph Isomorphism} on $(H_1,H_2)$-free graphs (see Section~\ref{sec:prelim} for notation).

\begin{theorem}\label{thm:gi-classification2}
For a class~${\cal G}$ of graphs defined by two forbidden induced subgraphs, the following holds:
\begin{enumerate}
\item {\sc Graph Isomorphism} is solvable in polynomial time on~${\cal G}$ if~${\cal G}$ is equivalent to a class of $(H_1,H_2)$-free graphs such that one of the following holds:
\begin{enumerate}[(i)]
\renewcommand{\theenumii}{(\roman{enumii})}
\renewcommand{\labelenumii}{(\roman{enumii})}
\item \label{thm:gi-classification2:poly:P4}$H_1$ or $H_2 \ssi P_4$
\item \label{thm:gi-classification2:poly:K1t+P1-co-K1t+P1}$\overline{H_1}$ and $H_2 \ssi K_{1,t}+\nobreak P_1$ for some~$t \geq 1$
\item \label{thm:gi-classification2:poly:P3+tP1-coP3+tP1}$\overline{H_1}$ and $H_2 \ssi tP_1+\nobreak P_3$ for some~$t \geq 1$ 
\item \label{thm:gi-classification2:poly:Kt}$H_1 \ssi K_t$ and $H_2 \ssi 2K_{1,t}, K_{1,t}^+$ or~$P_5$ for some $t \geq 1$
\item \label{thm:gi-classification2:poly:paw}$H_1 \ssi \paw$ and $H_2 \ssi P_2+\nobreak P_4, P_6, S_{1,2,2}$ or~$K_{1,t}^{++}+\nobreak P_1$ for some~$t \geq 1$
\item \label{thm:gi-classification2:poly:diamond}$H_1 \ssi \diamondgraph$ and $H_2 \ssi P_1+\nobreak 2P_2$ 
\item \label{thm:gi-classification2:poly:gem}$H_1 \ssi \gem$ and $H_2 \ssi P_1+\nobreak P_4$ or~$P_5$ 
\item \label{thm:gi-classification2:poly:crossed-house}$H_1 \ssi \overline{2P_1+P_3}$ and $H_2 \ssi P_2+\nobreak P_3$ or~$P_5$.
\end{enumerate}
\item {\sc Graph Isomorphism} is \GI-complete on~${\cal G}$ if~${\cal G}$ is equivalent to a class of $(H_1,H_2)$-free graphs such that one of the following holds:
\begin{enumerate}[(i)]
\renewcommand{\theenumii}{(\roman{enumii})}
\renewcommand{\labelenumii}{(\roman{enumii})}
\item \label{thm:gi-classification2:hard:not-path-star-forest}neither~$H_1$ nor~$H_2$ is a path star forest 
\item \label{thm:gi-classification2:hard:not-co-path-star-forest}neither~$\overline{H_1}$ nor~$\overline{H_2}$ is a path star forest 
\item \label{thm:gi-classification2:hard:K3}$H_1\si K_3$ and $H_2\si 2P_1+\nobreak 2P_2,P_1+\nobreak 2P_3,2P_1+\nobreak P_4$ or~$3P_2$
\item \label{thm:gi-classification2:hard:K4}$H_1\si K_4$ and $H_2\si K_{1,4}^{++}, P_1+\nobreak 2P_2$ or~$P_1+\nobreak P_4$
\item \label{thm:gi-classification2:hard:K5}$H_1\si K_5$ and $H_2\si K_{1,3}^{++}$
\item \label{thm:gi-classification2:hard:2P2}$H_1 \si C_4$ and $H_2 \si K_{1,3}, 3P_1+\nobreak P_2$ or~$2P_2$
\item \label{thm:gi-classification2:hard:2P1+P2}$H_1\si \diamondgraph$ and $H_2 \si K_{1,3}, P_2+\nobreak P_4, 2P_3$ or~$P_6$
\item \label{thm:gi-classification2:hard:gem}$H_1\si \gem$ and $H_2\si P_1+\nobreak 2P_2$.
\end{enumerate}
\end{enumerate}
\end{theorem}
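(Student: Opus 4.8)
The plan is to prove this classification as a compilation: I would reduce each of the sixteen listed subcases to a single already-established result — either one cited from the literature or one of the theorems proved earlier in this paper — and glue the reductions together using two structural principles that I would state and justify at the outset. The first is that the equivalence relation preserves the complexity of {\sc Graph Isomorphism}; this is exactly what the paragraph preceding the statement establishes, using that $G\cong H$ if and only if $\overline{G}\cong\overline{H}$ for the complementation operation, and Lemma~\ref{lem:olariu} (together with the facts that {\sc Graph Isomorphism} is solvable component-wise and trivial on complete multipartite graphs) for the $K_3 \leftrightarrow \paw$ swap. Hence throughout I need only treat one representative of each equivalence class. The second principle is monotonicity under induced-subgraph containment: if $H_1\ssi H_1'$ and $H_2\ssi H_2'$, then every $(H_1,H_2)$-free graph is $(H_1',H_2')$-free, so the class of $(H_1,H_2)$-free graphs is contained in that of $(H_1',H_2')$-free graphs. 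A polynomial-time algorithm on the larger class restricts to the smaller one, so for the polynomial part it suffices to establish polynomiality for the \emph{maximal} graphs listed in each subcase (those on the right of each ``$\ssi$''); dually, \GI-completeness transfers to every superclass, so for the \GI-complete part it suffices to establish \GI-completeness for the \emph{minimal} graphs listed (those on the right of each ``$\si$'').

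For the polynomial part I would then dispatch the subcases as follows. Case~(i) is Theorem~\ref{thm:gi-one-graph}. Several of the remaining cases are obtained by first showing bounded clique-width via Theorem~\ref{thm:classification2} and then invoking Theorem~\ref{thm:gi-poly-bdd-cw}: this route handles case~(v) (through the $K_3\leftrightarrow\paw$ equivalence, Lemma~\ref{lem:K3-P6-bdd-cw} and Corollary~\ref{cor:pawP5-free-bdd-cw}), case~(vi) ($H_1\ssi\diamondgraph$, $H_2\ssi P_1+\nobreak 2P_2$) and case~(vii) ($H_1\ssi\gem$). Case~(iv) follows from Lemmas~\ref{lem:K_t-K_1t-free-gi-poly} and~\ref{lem:K_t-P_5-free-gi-poly}, and cases~(ii)--(iii) from the prior work of Kratsch and Schweitzer~\cite{KS12} and Schweitzer~\cite{Sc17}, in each instance after verifying the induced-subgraph relation needed to apply monotonicity. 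The genuinely new contribution here is case~(viii), $H_1\ssi\overline{2P_1+P_3}$ with $H_2\ssi P_2+\nobreak P_3$ or $P_5$, which is supplied directly by Theorems~\ref{thm:coP5-2P1+P3-easy} and~\ref{thm:co2P1+P3-2P1+P3-easy}.

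For the \GI-complete part, the generic hardness cases~(i)--(ii) — neither forbidden graph, respectively neither complement, is a path star forest — come from the general constructions of~\cite{KS12,Sc17}, as do cases~(iii)--(vi) ($H_1\si K_3,K_4,K_5,C_4$ with the various $H_2$) and the $K_{1,3}$ and $P_2+\nobreak P_4$ subcases of~(vii); I would stress that these rest on explicit reductions rather than on unboundedness of clique-width, which is only a heuristic guide. The new contributions are precisely the $2P_3$ and $P_6$ subcases of~(vii), furnished by Theorems~\ref{thm:diamond-2P3-hard} and~\ref{thm:diamond-P6-hard}, and case~(viii), $H_1\si\gem$ with $H_2\si P_1+\nobreak 2P_2$, furnished by Theorem~\ref{thm:gem-P1+2P2-hard}. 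For each subcase I would record which base result is invoked and the containment that activates monotonicity.

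The main obstacle is not any single deep argument — the substantive mathematics lives in the individual theorems already proved — but the completeness and correctness of the bookkeeping: for every listed graph one must pin down the exact induced-subgraph relation to a base case and confirm that the chosen equivalence-class representative matches the hypotheses of the cited result. The most delicate verifications are the containments concealed inside the ``$H_2\ssi\cdots$'' lists of the polynomial cases — for instance handling $K_{1,t}^+$ in case~(iv) and $S_{1,2,2}$ and $K_{1,t}^{++}+\nobreak P_1$ in case~(v) by exhibiting them inside, or otherwise reducing them to, the classes covered by Lemmas~\ref{lem:K_t-K_1t-free-gi-poly} and~\ref{lem:K_t-P_5-free-gi-poly} — since an overlooked containment here is exactly where the dichotomy could silently fail.
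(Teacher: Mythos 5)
Your architecture---compiling the classification from prior results and from the paper's new theorems, glued together by the equivalence relation and by monotonicity under induced-subgraph containment---is the same as the paper's, and most of your subcase assignments are correct. There is, however, a concrete gap in precisely the two subcases you flag as delicate: $K_{1,t}^+$ in Theorem~\ref{thm:gi-classification2}.\ref{thm:gi-classification2:poly:Kt} and $K_{1,t}^{++}+\nobreak P_1$ in Theorem~\ref{thm:gi-classification2}.\ref{thm:gi-classification2:poly:paw}. You propose to handle these by ``exhibiting them inside, or otherwise reducing them to, the classes covered by Lemmas~\ref{lem:K_t-K_1t-free-gi-poly} and~\ref{lem:K_t-P_5-free-gi-poly}'', but no such reduction exists: for $t\geq 3$ both $K_{1,t}^+$ and $K_{1,t}^{++}$ are connected, contain an induced~$P_4$ and have a vertex of degree~$t$, so neither is an induced subgraph of~$2K_{1,s}$ for any~$s$ (a disjoint union of stars is $P_4$-free) nor of~$P_5$ (maximum degree~$2$); monotonicity therefore gives you nothing here. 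The clique-width route is equally unavailable: after the $\paw\leftrightarrow K_3$ swap, the relevant class for the last subcase of Theorem~\ref{thm:gi-classification2}.\ref{thm:gi-classification2:poly:paw} is that of $(K_3,K_{1,t}^{++}+\nobreak P_1)$-free graphs, and already for $t=3$ this is the class of $(K_3,P_1+\nobreak S_{1,1,3})$-free graphs, whose boundedness of clique-width is explicitly \emph{open} (Open Problem~\ref{oprob:twographs}.\ref{oprob:twographs:3P_1}); likewise $(K_t,K_{1,t}^+)$-free graphs do not appear in the bounded list of Theorem~\ref{thm:classification2}. The paper closes these subcases by citing two further results of Schweitzer~\cite{Sc17} (Theorems~15 and~16 there), which give direct polynomial-time algorithms for $(K_3,K_{1,t}^{++}+\nobreak P_1)$-free and $(K_t,K_{1,t}^+)$-free graphs; without invoking these (or proving substitutes) your compilation does not cover these two statements in full.

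Two smaller inaccuracies are worth fixing. First, Lemma~\ref{lem:K3-P6-bdd-cw} and Corollary~\ref{cor:pawP5-free-bdd-cw} only yield the~$P_6$ subcase of Theorem~\ref{thm:gi-classification2}.\ref{thm:gi-classification2:poly:paw}; the $P_2+\nobreak P_4$ and~$S_{1,2,2}$ subcases need the fuller clique-width classification of Theorem~\ref{thm:classification2}.\ref{thm:classification2:bdd:P_1+P_3}, which you do also invoke in your general framing, so this is repairable. Second, the \GI-completeness of $(\diamondgraph,P_2+\nobreak P_4)$-free graphs in Theorem~\ref{thm:gi-classification2}.\ref{thm:gi-classification2:hard:2P1+P2} comes from neither~\cite{KS12} nor~\cite{Sc17} but from~\cite{DHP0}; since your whole proof is a compilation, each cited source must actually contain the result you charge it with, and this one does not.
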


\begin{proof}
In the proof of this theorem we will refer to theorems in a number of other papers, in some cases indicating the value some parameter given therein must take.
Restating and fully explaining all these various theorems in detail is beyond the scope of this paper, but to aid the reader who refers to~\cite{KS12} or~\cite{Sc17}, we note that there $H(a,b,c)$ denotes $K_{1,b}+\nobreak cP_1$ if $a=0$, $H(a,b,c)$ denotes $K_{1,b+1}^{+}+\nobreak cP_1$ if $a=1$, and $H(1,0,b,1)$ denotes $K_{1,b+1}^{++}+\nobreak P_1$.

We first consider the polynomial-time cases.
Statement~\ref{thm:gi-classification2:poly:P4} follows from Theorem~\ref{thm:gi-one-graph}.
Statement~\ref{thm:gi-classification2:poly:K1t+P1-co-K1t+P1} follows from the fact that for every $t \geq 1$, {\sc Graph Isomorphism} is solvable in polynomial-time on $(\overline{K_{1,t}+P_1},K_{1,t}+\nobreak P_1)$-free graphs~\cite[Theorem~4.2 with $b=b'=t$ and $c=c'=1$]{KS12}.
Statement~\ref{thm:gi-classification2:poly:P3+tP1-coP3+tP1} follows from the fact that for every $t \geq 1$, {\sc Graph Isomorphism} is solvable in polynomial-time on $(\overline{tP_1+P_3},tP_1+\nobreak P_3)$-free graphs~\cite[Theorem~4.2 and~4.3 with $b=b'=2$ and $c=c'=t$]{KS12}.
Statement~\ref{thm:gi-classification2:poly:Kt} follows from the fact that for every $t \geq 1$, {\sc Graph Isomorphism} is solvable in polynomial-time on
$(K_t,2K_{1,t})$-free graphs~\cite[Corollary~3 with $s=t$]{Sc17} (see also Lemma~\ref{lem:K_t-K_1t-free-gi-poly}),
$(K_t,K_{1,t}^+)$-free graphs~\cite[Theorem~16 with $b=t-1$ and $s=t$]{Sc17}
and $(K_t,P_5)$-free graphs~\cite[Theorem~14]{Sc17}.
Statement~\ref{thm:gi-classification2:poly:paw} follows from the fact that $(\paw,H)$-free graphs have bounded clique-width if $H \in \{P_2+\nobreak P_4, P_6, S_{1,2,2}\}$ (Theorem~\ref{thm:classification2}.\ref{thm:classification2:bdd:P_1+P_3}) combined with Theorem~\ref{thm:gi-poly-bdd-cw}, along with the fact that for every $t \geq 1$ {\sc Graph Isomorphism} is solvable in polynomial-time on $(K_3,K_{1,t}^{++}+\nobreak P_1)$-free graphs~\cite[Theorem~15 with $b=t-1$]{Sc17} and this class is equivalent to the class of $(\paw,K_{1,t}^{++}+\nobreak P_1)$-free graphs.
Statement~\ref{thm:gi-classification2:poly:diamond} follows from the fact that $(\diamondgraph,P_1+\nobreak 2P_2)$-free graphs have bounded clique-width (Theorem~\ref{thm:classification2}.\ref{thm:classification2:bdd:2P_1+P_2}) combined with Theorem~\ref{thm:gi-poly-bdd-cw}.
Similarly, Statement~\ref{thm:gi-classification2:poly:gem} follows from the fact that $(\gem,H)$-free graphs have bounded clique-width if $H \in \{P_1+\nobreak P_4,P_5\}$ (Theorem~\ref{thm:classification2}.\ref{thm:classification2:bdd:P_1+P_4}) combined with Theorem~\ref{thm:gi-poly-bdd-cw}.
Statement~\ref{thm:gi-classification2:poly:crossed-house} follows from the fact that {\sc Graph Isomorphism} is solvable in polynomial-time on $(\overline{2P_1+P_3},P_2+\nobreak P_3)$-free graphs (Theorem~\ref{thm:co2P1+P3-2P1+P3-easy})
and $(\overline{2P_1+P_3},P_5)$-free graphs (Theorem~\ref{thm:coP5-2P1+P3-easy}).

\begin{sloppypar}
Next, we consider the \GI-complete cases.
Statement~\ref{thm:gi-classification2:hard:not-path-star-forest} is~\cite[Lemma~2]{KS12}.
Statement~\ref{thm:gi-classification2:hard:not-co-path-star-forest} follows from Statement~\ref{thm:gi-classification2:hard:not-path-star-forest} since the class of $(\overline{H_1},\overline{H_2})$-free graphs is equivalent to the class of $(H_1,H_2)$-free graphs.
Statement~\ref{thm:gi-classification2:hard:K3} follows from the fact that {\sc Graph Isomorphism} is \GI-complete on $H$-free bipartite graphs if $H \in \{2P_1+\nobreak 2P_2,2P_1+\nobreak P_4,3P_2\}$~\cite[Lemma~5]{KS12} or $H=P_1+\nobreak 2P_3$~\cite[Theorem~6]{Sc17}.
Statement~\ref{thm:gi-classification2:hard:K4} follows from the fact that {\sc Graph Isomorphism} is \GI-complete on $(K_4,H)$-free graphs if $H \in \{K_{1,4}^{++}, P_1+\nobreak 2P_2\}$~\cite[Theorem~5]{Sc17} or $H=2P_1+\nobreak P_4$~\cite[Theorem~3]{KS12}.
Statement~\ref{thm:gi-classification2:hard:K5} follows from the fact that {\sc Graph Isomorphism} is \GI-complete on $(K_5,K_{1,3}^{++})$-free graphs~\cite[Theorem~7]{Sc17}.
Statement~\ref{thm:gi-classification2:hard:2P2} follows from the fact that {\sc Graph Isomorphism} is \GI-complete on $(C_4,C_5,3P_1+\nobreak P_2,2P_2)$-free graphs~\cite[Lemma~6 with $i=2$]{KS12} and for $(C_4,\diamondgraph,K_{1,3})$-free graphs~\cite[Lemma~9]{KS12}.
Statement~\ref{thm:gi-classification2:hard:2P1+P2} follows from the fact that {\sc Graph Isomorphism} is \GI-complete on $(C_4,\diamondgraph,K_{1,3})$-free graphs~\cite[Lemma~9]{KS12} and on $(\diamondgraph,H)$-free graphs if~$H$ is $P_2+\nobreak P_4$~\cite[Theorem~3]{DHP0}, $2P_3$ (Theorem~\ref{thm:diamond-2P3-hard}) or~$P_6$ (Theorem~\ref{thm:diamond-P6-hard}).
Statement~\ref{thm:gi-classification2:hard:gem} follows from the fact that {\sc Graph Isomorphism} is \GI-complete on $(\overline{P_1+P_4},P_1+\nobreak 2P_2)$-free graphs (Theorem~\ref{thm:gem-P1+2P2-hard}).\qed
\end{sloppypar}
\end{proof}

\begin{sloppypar}
\begin{oproblem}\label{oprob:gi}
What is the complexity of {\sc Graph Isomorphism} on $(H_1,H_2)$-free graphs in the following cases?
\begin{enumerate}[(i)]
\renewcommand{\theenumi}{(\roman{enumi})}
\renewcommand{\labelenumi}{(\roman{enumi})}
\item \label{oprob:gi:3P_1}$H_1=K_3$ and $H_2 \in \{P_7,S_{1,2,3}\}$
\item \label{oprob:gi:4P_1}$H_1=K_4$ and $H_2=S_{1,1,3}$
\item \label{oprob:gi:diamond}$H_1=\diamondgraph$ and $H_2 \in \{P_1+\nobreak P_2+\nobreak P_3,P_1+\nobreak P_5\}$
\item \label{oprob:gi:gem}$H_1=\gem$ and $H_2=P_2+\nobreak P_3$
\end{enumerate}
\end{oproblem}
\end{sloppypar}

Note that all of the classes of $(H_1,H_2)$-free graphs in Open Problem~\ref{oprob:gi} are incomparable.
The following theorem states that Open Problem~\ref{oprob:gi} lists all open cases.

\begin{theorem}\label{thm:opencases}
Let~${\cal G}$ be a class of graphs defined by two forbidden induced subgraphs.
Then~${\cal G}$ is not equivalent to any of the classes listed in Theorem~\ref{thm:gi-classification2} if and only if it is equivalent to one of the six cases listed in Open Problem~\ref{oprob:gi}.
\end{theorem}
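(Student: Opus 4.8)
The statement is a completeness (exhaustiveness) result, so the natural strategy is a systematic case analysis over all unordered pairs $(H_1,H_2)$, organised around the equivalence relation and the classification already assembled in Theorem~\ref{thm:gi-classification2}. The plan is to prove both implications. The reverse implication (``if'') is a finite verification: one checks that each of the six pairs in Open Problem~\ref{oprob:gi} is genuinely distinct, up to equivalence, from every pair appearing in Theorem~\ref{thm:gi-classification2}. Since the classified families are given by finitely many templates (some carrying a parameter $t$), this reduces to confirming that none of $(K_3,P_7)$, $(K_3,S_{1,2,3})$, $(K_4,S_{1,1,3})$, $(\diamondgraph,P_1+P_2+P_3)$, $(\diamondgraph,P_1+P_5)$, $(\gem,P_2+P_3)$ can be produced from a classified pair by complementing both graphs or by swapping $K_3$ with the $\paw$.

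The forward implication (``only if'') is the substantial part. Suppose the class of $(H_1,H_2)$-free graphs is not equivalent to any class in Theorem~\ref{thm:gi-classification2}; in particular it is covered by neither of the two Ramsey-type \GI-complete statements. The first of these (neither $H_1$ nor $H_2$ a path star forest) forces, after possibly swapping the two graphs, that $H_2$ is a path star forest; the second (neither complement a path star forest), together with complementing the pair, forces that $\overline{H_1}$ is a path star forest. The boundary configurations in which a single graph and its complement are \emph{both} path star forests are handled separately: such a graph has both clique number and independence number at most two, hence fewer than six vertices by Ramsey's theorem, and the resulting short list is absorbed by the already-classified cases (most members satisfy $H\ssi P_4$). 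Thus I would reduce to the normal form in which $H_2$ is a path star forest and $\overline{H_1}$ is a path star forest---precisely the shape shared by all six open pairs.

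Next I would bound $H_1$. Since $\overline{H_1}$ is a path star forest, $\omega(H_1)=\alpha(\overline{H_1})$, and the \GI-complete results parameterised by clique size (the statements for $K_3$, $K_4$ and $K_5$, together with those for $C_4$, the $\diamondgraph$ and the $\gem$, and their complementary versions) rule out all but finitely many possibilities for $H_1$, leaving $H_1\ssi K_3$, $K_4$, $\diamondgraph$ or $\gem$---matching the four shapes in Open Problem~\ref{oprob:gi}. For each such fixed $H_1$ I would then run a sub-enumeration over path star forests $H_2$, using the maximal polynomial-time templates of Statement~1 to absorb the ``small'' $H_2$ and the minimal \GI-complete templates of Statement~2 to absorb the ``large'' $H_2$, checking in each case that the only path star forests left uncovered are exactly those listed for that $H_1$.

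The main obstacle is exhaustiveness in this last step. Path star forests form an infinite, parameterised family (unbounded numbers of components, leaves, and subdivisions), so the argument must show that the minimal hard configurations and the maximal easy configurations meet with only a finite gap, and that this gap consists of exactly the six stated pairs and nothing else. Getting this right requires careful bookkeeping of the induced-subgraph order on path star forests---for instance verifying, in the case $H_1=K_3$, that every path star forest containing no relevant polynomial template as an induced subgraph and none of the minimal hard graphs $2P_1+2P_2$, $P_1+2P_3$, $2P_1+P_4$, $3P_2$ is induced in, or equal to, $P_7$ or $S_{1,2,3}$, with the analogous verifications for $K_4$, $\diamondgraph$ and $\gem$---all while tracking the equivalence operations so that no pair is double-counted or overlooked.
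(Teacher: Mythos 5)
Your overall architecture matches the paper's. The reverse direction is, as you say, a finite inspection. For the forward direction you reduce, via Theorem~\ref{thm:gi-classification2}.\ref{thm:gi-classification2:hard:not-path-star-forest} and Theorem~\ref{thm:gi-classification2}.\ref{thm:gi-classification2:hard:not-co-path-star-forest}, to the normal form in which $H_2$ and $\overline{H_1}$ are path star forests, exactly as the paper does; your Ramsey argument ($R(3,3)=6$) for the degenerate case where some $H_i$ and its complement are both path star forests is a perfectly serviceable substitute for the paper's edge-counting bound $n\leq 4$ (you just have one extra candidate, $C_5$, to discard, and it is not a forest).

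The gap is in the step where you ``bound $H_1$''. You claim that the \GI-completeness results parameterised by clique size rule out all but finitely many possibilities for $H_1$, leaving $H_1\ssi K_3$, $K_4$, $\diamondgraph$ or $\gem$. This is false: those results constrain \emph{pairs}, not individual graphs, and infinitely many $H_1$ outside these four survive every hardness filter. For instance, $H_1=K_s$ for any $s\geq 5$ is admissible whenever $H_2$ avoids $K_{1,3}^{++}$ and the $K_3$/$K_4$ lists, and such pairs are then absorbed by the polynomial-time statement Theorem~\ref{thm:gi-classification2}.\ref{thm:gi-classification2:poly:Kt}; similarly $H_1=\overline{K_{1,t}+P_1}$ or $H_1=\overline{tP_1+P_3}$ for large~$t$ pass all the hardness constraints (e.g.\ $\overline{K_{1,5}+P_1}$ contains none of $C_4$, $\diamondgraph$ or $\gem$) and the corresponding pairs are classified only by the polynomial results Theorem~\ref{thm:gi-classification2}.\ref{thm:gi-classification2:poly:K1t+P1-co-K1t+P1} and Theorem~\ref{thm:gi-classification2}.\ref{thm:gi-classification2:poly:P3+tP1-coP3+tP1}. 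So the finiteness of the residual $H_1$ cannot be extracted from the hardness results alone; it has to be obtained by interleaving the polynomial-time templates throughout. That is what the paper's proof does: after dispatching $H_1=K_s$ for every $s\geq 3$ in two dedicated cases, it cases on the structure of $\overline{H_1}$ as a path star forest (not a linear forest; contains $P_5$; contains $P_4$; contains $2P_2$; else $\overline{H_1}\ssi tP_1+P_3$), and in each branch pins down $H_2$ and matches the pair against a polynomial template, a hard template, or an open case. Your concluding sub-enumeration over path star forests $H_2$ for each fixed small $H_1$ is essentially the paper's, but it only becomes available after this structural analysis of $\overline{H_1}$, not before it.
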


\begin{proof}
It is easy to verify that none of the classes in Open Problem~\ref{oprob:gi} are equivalent to any of the classes in Theorem~\ref{thm:gi-classification2}.

Let $H_1,H_2$ be graphs and let~${\cal G}$ be the class of $(H_1,H_2)$-free graphs.
Suppose~${\cal G}$ is not equivalent to any class for which the complexity of {\sc Graph Isomorphism} is implied by Theorem~\ref{thm:gi-classification2}.
We will show that ${\cal G}$ is equivalent one of the classes in Open Problem~\ref{oprob:gi}.
By Theorem~\ref{thm:gi-classification2}.\ref{thm:gi-classification2:poly:P4}, we may assume that $H_1,H_2 \not\ssi P_4$.
Since $P_4=\overline{P_4}$, this means that none of $H_1,\overline{H_1},H_2,\overline{H_2}$ are induced subgraphs of~$P_4$.

By Theorem~\ref{thm:gi-classification2}.\ref{thm:gi-classification2:hard:not-path-star-forest}, at least one of~$H_1$ and~$H_2$ must be a path star forest.
By Theorem~\ref{thm:gi-classification2}.\ref{thm:gi-classification2:hard:not-co-path-star-forest}, at least one of~$\overline{H_1}$ and~$\overline{H_2}$ must be a path star forest.
Suppose, for contradiction, that both~$H_1$ and~$\overline{H_1}$ are path star forests.
Let~$n$ be the number of vertices in~$H_1$.
Then~$H_1$ and~$\overline{H_1}$ each contain at most $n-1$ edges.
Since~$H_1$ and~$\overline{H_1}$ together have~$\binom{n}{2}$ edges, it follows that $\binom{n}{2}\leq 2(n-1)$ and so $n \leq 4$.
It is easy to verify that if~$F$ is a forest on at most four vertices and~$\overline{F}$ is also a forest, then~$F$ is an induced subgraph of~$P_4$.
Therefore~$H_1$ is an induced subgraph of~$P_4$, a contradiction.
By symmetry, we may therefore assume that~$\overline{H_1}$ and~$H_2$ are path star forests, but~$H_1$ and~$\overline{H_2}$ are not.

Also note that by definition of equivalence, the theorem is symmetric in~$\overline{H_1}$ and~$H_2$.
We will consider a number of cases, depending on the possibilities for~$H_1$.
First, we consider the cases when $H_1=K_s$ for some $s \geq 1$.
Since $H \not\ssi P_4$, we may assume that $s \geq 3$.

\thmcase{\label{case:K3}$H_1=K_3$.}
By Theorem~\ref{thm:gi-classification2}.\ref{thm:gi-classification2:hard:K3}, we may assume that~$H_2$ is $(2P_1+\nobreak 2P_2,P_1+\nobreak 2P_3,2P_1+\nobreak P_4,3P_2)$-free.

First consider the case when~$H_2$ is a $P_4$-free path star forest, or equivalently when~$H_2$ is a disjoint union of stars.
Since~$H_2$ is $3P_2$-free, it has at most two non-trivial components.
If~$H_2$ has at most one non-trivial component, then it is an induced subgraph of $K_{1,t} +\nobreak  tP_1 \ssi 2K_{1,t}$ for some $t \geq 1$ and so Theorem~\ref{thm:gi-classification2}.\ref{thm:gi-classification2:poly:Kt} applies.
If~$H_2$ has two non-trivial components, at least one of which is isomorphic to~$P_2$, then~$H_2$ has at most three components since it is $(2P_1+\nobreak 2P_2)$-free, and so~$H_2$ is an induced subgraph of $K_{1,t}+\nobreak P_2 +\nobreak P_1 \ssi K_{1,t+1}^{++} +\nobreak P_1$ for some $t \geq 1$ and so Theorem~\ref{thm:gi-classification2}.\ref{thm:gi-classification2:poly:paw} applies.
If~$H_2$ has two non-trivial components, neither of which is isomorphic to~$P_2$, then both of these components contain an induced~$P_3$.
In this case, since~$H_2$ is $(P_1+\nobreak 2P_3)$-free, $H_2$ contains exactly two components, so it is an induced subgraph of~$2K_{1,t}$ for some $t \geq 1$ and thus Theorem~\ref{thm:gi-classification2}.\ref{thm:gi-classification2:poly:Kt} applies.
Therefore we may assume that~$H_2$ contains an induced~$P_4$.

Let~$C$ be the component of~$H_2$ that contains this induced~$P_4$.
Since~$H_2$ is $(2P_1+\nobreak P_4)$-free, $H_2$ contains at most one component apart from~$C$.
Furthermore, if it does contain a second component, then that component must isomorphic to~$P_1$ or~$P_2$.
In other words, $H_2$ is isomorphic to~$C$, $C+\nobreak P_1$ or~$C+\nobreak P_2$.

If $H_2=C+\nobreak P_2$, then since~$H_2$ is $(2P_1+\nobreak 2P_2,3P_2)$-free, it follows that~$C$ is a $(2P_1+\nobreak P_2,2P_2)$-free tree that contains an induced~$P_4$.
Since~$C$ is $2P_2$-free, the end-vertices of the induced~$P_4$ cannot have a neighbour outside the~$P_4$ and since it is $(2P_1+\nobreak P_2)$-free, the two internal vertices of the~$P_4$ cannot have a neighbour outside the~$P_4$.
Therefore $H_2=P_2+\nobreak P_4$ and so Theorem~\ref{thm:gi-classification2}.\ref{thm:gi-classification2:poly:paw} applies.
We may therefore assume that $H_2 \neq C+\nobreak P_2$.

Suppose that $H_2=C+\nobreak P_1$.
Since~$H_2$ is $(2P_1+\nobreak 2P_2,2P_1+\nobreak P_4)$-free, it follows that~$C$ is $(P_1+\nobreak 2P_2,P_1+\nobreak P_4)$-free.
Since~$C$ is $(P_1+\nobreak P_4)$-free, the~$P_4$ dominates~$C$ and at most one of the end-vertices of the~$P_4$ has a neighbour outside this~$P_4$.
Since~$H_2$ is a path star forest, it has at most one vertex of degree greater than~$2$.
Therefore, since the~$P_4$ dominates~$C$, it follows that~$C$ is obtained from~$P_4$ or~$P_5$ by attaching a (possibly empty) set of pendant edges to one of its internal vertices.
Since~$C$ is $(P_1+\nobreak 2P_2)$-free, it cannot be obtained from~$P_5$ by adding a non-zero number of pendant vertices adjacent to the central vertex.
Therefore~$C$ is obtained from~$P_4$ or~$P_5$ by adding~$t$ pendant vertices to a vertex adjacent to an end-vertex of this path for some $t \geq 0$.
It follows that $H_2=K_{1,t+2}^++\nobreak P_1$ or $H_2=K_{1,t+2}^{++}+\nobreak P_1$, respectively and so Theorem~\ref{thm:gi-classification2}.\ref{thm:gi-classification2:poly:paw} applies.
We may therefore assume that $H_2 \neq C+\nobreak P_1$.

Finally, suppose that $H_2=C$, in which case~$H_2$ is connected.
Then it is obtained from~$K_{1,t}$ for some $t \geq 2$ by subdividing edges.
If~$t=2$, then~$H_2$ is isomorphic to~$P_k$ for some~$k \geq 4$, and $k \leq 7$ since~$H_2$ is $3P_2$-free.
If $k=7$, then Open Problem~\ref{oprob:gi}.\ref{oprob:gi:3P_1} applies, and if $k \leq 6$, then Theorem~\ref{thm:gi-classification2}.\ref{thm:gi-classification2:poly:paw} applies.
We may therefore assume that $t \geq 3$.
Since~$H_2$ is $(2P_1+\nobreak P_4)$-free, each edge of this~$K_{1,t}$ can be subdivided at most twice.
If $t \geq 4$, then at most one of the edges of the~$K_{1,t}$ can be subdivided since~$H_2$ is $(2P_1+\nobreak 2P_2)$-free and so in this case $H_2 \ssi K_{1,t}^{++}$ and Theorem~\ref{thm:gi-classification2}.\ref{thm:gi-classification2:poly:paw} applies.
We may therefore assume that~$t=3$, so $H_2=S_{i,j,k}$ for some $1 \leq i \leq j \leq k$.
Now $k \geq 2$ since~$H_2$ contains an induced~$P_4$ and $k \leq 3$ since each edge of the~$K_{1,t}$ is subdivided at most twice.
If $k=2$ or $j=1$, then $H_2 \ssi S_{1,2,2}$ or $H_2 \ssi S_{1,1,3}=K_{1,t}^{++}$ for $t=3$, respectively and Theorem~\ref{thm:gi-classification2}.\ref{thm:gi-classification2:poly:paw} applies, so we may assume $j=2$ and $k=3$.
Therefore $H_2=S_{1,2,3}$ and Open Problem~\ref{oprob:gi}.\ref{oprob:gi:3P_1} applies.
This completes the proof for Case~\ref{case:K3}.

\thmcase{\label{case:K4}$H_1=K_s$ for some $s \geq 4$.}
By Theorem~\ref{thm:gi-classification2}.\ref{thm:gi-classification2:hard:K4}, we may assume that~$H_2$ is $(K_{1,4}^{++},P_1+\nobreak 2P_2,P_1+\nobreak P_4)$-free.
Since~$H_2$ is $(P_1+\nobreak 2P_2)$-free, if it contains two non-trivial components, then it contains no other components.
Thus if every component of~$H_2$ is a star, then either~$H_2$ contains only two components, or~$H_2$ contains at most one non-trivial component and all other components are trivial.
In the first case $H_2 \ssi 2K_{1,t}$ for some $t \geq 1$ and in the second case $H_2 \ssi K_{1,t}+\nobreak tP_1 \ssi 2K_{1,t}$ for some $t \geq 1$.
Therefore, if every component of~$H_2$ is a star, then Theorem~\ref{thm:gi-classification2}.\ref{thm:gi-classification2:poly:Kt} applies.
We may therefore assume that~$H_2$ is not a disjoint union of stars.
Since~$H_2$ is a forest, this implies that~$P_4$ is an induced subgraph of~$H_2$.
Since~$H_2$ is $(P_1+\nobreak P_4)$-free, this~$P_4$ must dominate~$H_2$ and~$H_2$ cannot be isomorphic to~$P_k$ for $k \geq 6$.
In particular, note that this implies that~$H_2$ is connected.
If~$H_2$ has maximum degree at most~$2$, then $H_2 \ssi P_5$ and Theorem~\ref{thm:gi-classification2}.\ref{thm:gi-classification2:poly:Kt} applies.
We may therefore assume that~$H_2$ is obtained by subdividing edges of~$K_{1,t}$ for some $t \geq 3$.
Since~$H_2$ is $(P_1+\nobreak 2P_2)$-free, at most one edge of the~$K_{1,t}$ can be subdivided.
Since~$H_2$ is $(P_1+\nobreak P_4)$-free, any edge of the~$K_{1,t}$ can be subdivided at most twice, and so $H_2 \ssi K_{1,t}^{++}$.
If~$H_2$ is obtained from~$K_{1,t}$ by subdividing an edge at most once, then $H_2 \ssi K_{1,t}^+$ and Theorem~\ref{thm:gi-classification2}.\ref{thm:gi-classification2:poly:Kt} applies, so we may assume that $H_2=K_{1,t}^{++}$.
Since~$H_2$ is $K_{1,4}^{++}$-free, it follows that $t=3$ and so $H_2=K_{1,3}^{++}=S_{1,1,3}$.
Now Open Problem~\ref{oprob:gi}.\ref{oprob:gi:4P_1} or Theorem~\ref{thm:gi-classification2}.\ref{thm:gi-classification2:hard:K5} applies if $s=4$ or $s \geq 5$, respectively.
This completes the proof for Case~\ref{case:K4}.

\medskip
\noindent
For the remainder of the proof we may therefore assume that Cases~\ref{case:K3} and~\ref{case:K4} do not apply, so~$H_1$ is not a complete graph.
By symmetry between~$\overline{H_1}$ and~$H_2$, we may thus assume that both these graphs contain an edge.
Furthermore, by definition of equivalence, if~$\overline{H_1}$ or~$H_2$ is isomorphic to $P_1+\nobreak P_3=\overline{\paw}$, then we can replace the graph in question by~$3P_1=\overline{K_3}$.
Thus Case~\ref{case:K3} completes the proof if~$\overline{H_1}$ or~$H_2$ is either~$3P_1$ or~$P_1+\nobreak P_3$.
Every induced subgraph of~$P_1+\nobreak P_3$, other than~$3P_1$ and~$P_1+\nobreak P_3$, is an induced subgraph of~$P_4$, and we assumed that neither~$\overline{H_1}$ nor~$H_2$ is an induced subgraph of~$P_4$.
In the remainder of the proof we may therefore assume that neither~$\overline{H_1}$ nor~$H_2$ is an induced subgraph of~$P_1+\nobreak P_3$ or of~$P_4$.

\thmcase{\label{case:not-linear-forest}$\overline{H_1}$ not a linear forest.}
In this case~$\overline{H_1}$ contains a vertex of degree at least~$3$, so it contains an induced~$K_{1,3}$.
Note that $\overline{C_4}=2P_2$ and $\overline{\diamondgraph}=2P_1+\nobreak P_2$.
Therefore, by Theorems~\ref{thm:gi-classification2}.\ref{thm:gi-classification2:hard:2P2} and~\ref{thm:gi-classification2}.\ref{thm:gi-classification2:hard:2P1+P2}, respectively, we may assume that~$H_2$ is $2P_2$-free and $(2P_1+\nobreak P_2)$-free.
Since~$H_2$ is $2P_2$-free, it has at most one non-trivial component.
Furthermore, every non-trivial component of~$H_2$ must be a $2P_2$-free path star, so it must be isomorphic to~$K_{1,k}$ or~$K_{1,k}^+$ for some~$k \geq 1$.
Recall that we may assume that~$H_2$ contains at least one non-trivial component, otherwise we reduce to Case~\ref{case:K3} or~\ref{case:K4}.
Therefore, since~$H_2$ is $(2P_1+\nobreak P_2)$-free, it can have at most one trivial component and we conclude that $H_2 \in \{K_{1,k}, K_{1,k}^+,K_{1,k}+\nobreak P_1,K_{1,k}^++\nobreak P_1\}$ for some $k \geq 1$.
If~$k \leq 2$, then either~$H_2$ is an induced subgraph of~$P_1+\nobreak P_3$ or~$P_4$, or $H_2 = K_{1,2}^++\nobreak P_1=P_1+\nobreak P_4$, in which case~$H_2$ contains an induced~$2P_1+\nobreak P_2$, a contradiction.
We may therefore assume that $k\geq 3$, in which case $H_2 \notin \{K_{1,k}^+,K_{1,k}^++\nobreak P_1\}$ since~$H_2$ is $(2P_1+\nobreak P_2)$-free.
Thus $H_2\in \{K_{1,k},K_{1,k}+\nobreak P_1\}$ for some $k \geq 3$.
In particular, this implies  $K_{1,3} \ssi H_2$, so by the same argument with~$H_2$ taking the part of~$\overline{H_1}$, we may assume that $\overline{H_1} \in \{K_{1,t}, K_{1,t}+\nobreak P_1\}$ for some $t \geq 3$.
Therefore Theorem~\ref{thm:gi-classification2}.\ref{thm:gi-classification2:poly:K1t+P1-co-K1t+P1} applies.
This completes the proof for Case~\ref{case:not-linear-forest}.

\medskip
\noindent
For the remainder of the proof we may therefore assume that Case~\ref{case:not-linear-forest} does not apply.
By symmetry between~$\overline{H_1}$ and~$H_2$, we may thus assume that both these graphs are linear forests.

\thmcase{\label{case:contains-P5}$\overline{H_1}$ contains~$P_5$ as an induced subgraph.}
Recall that we may assume~$H_2$ contains a non-trivial component, otherwise we reduce to Case~\ref{case:K3} or~\ref{case:K4}.
Note that $P_5 \si 2P_2=\overline{C_4}$.
Therefore, by Theorem~\ref{thm:gi-classification2}.\ref{thm:gi-classification2:hard:2P2}, we may assume that~$H_2$ is $(3P_1+\nobreak P_2,2P_2)$-free.
Since~$H_2$ is $2P_2$-free, it has exactly one non-trivial component, which must be isomorphic to~$P_t$, for some $2 \leq t \leq 4$.
Since~$H_2$ is not an induced subgraph of~$P_4$, it follows that~$H_2$ is isomorphic to $sP_1+\nobreak P_t$ for some $s \geq 1$ and $t \in \{2,3,4\}$.
Since~$H_2$ is $(3P_1+\nobreak P_2)$-free, it follows that $s \leq 2$ and if $t=4$, then $s=1$.
Since~$H_2$ is not an induced subgraph of~$P_1+\nobreak P_3$, if $s=1$, then $t=4$.
Therefore $H_2 \in \{2P_1+\nobreak P_2,2P_1+\nobreak P_3,P_1+\nobreak P_4\}$.
First consider the case when $H_2=P_1+\nobreak P_4$.
By Theorems~\ref{thm:gi-classification2}.\ref{thm:gi-classification2:hard:2P1+P2} and~\ref{thm:gi-classification2}.\ref{thm:gi-classification2:hard:gem}, respectively, we may assume that~$\overline{H_1}$ is $P_6$-free and $(P_1+\nobreak 2P_2)$-free.
Since~$\overline{H_1}$ contains~$P_5$ as an induced subgraph, but is $(P_1+\nobreak 2P_2)$-free, it follows that~$\overline{H_1}$ is connected.
Since~$\overline{H_1}$ is $P_6$-free, it follows that $\overline{H_1}=P_5$, and so Theorem~\ref{thm:gi-classification2}.\ref{thm:gi-classification2:poly:gem} applies.
This completes the case when $H_2=P_1+\nobreak P_4$ and so we may assume that $H_2 \in \{2P_1+\nobreak P_2,2P_1+\nobreak P_3\}$.
By Theorems~\ref{thm:gi-classification2}.\ref{thm:gi-classification2:hard:K3} and~\ref{thm:gi-classification2}.\ref{thm:gi-classification2:hard:2P1+P2}, respectively, we may assume that~$\overline{H_1}$ is $(2P_1+\nobreak 2P_2)$-free and $(P_2+\nobreak P_4,P_6)$-free.
Since~$\overline{H_1}$ is a $P_6$-free linear forest that contains~$P_5$ as an induced subgraph, it follows that~$\overline{H_1}$ contains a component isomorphic to~$P_5$.
Since~$\overline{H_1}$ is $(2P_1+\nobreak 2P_2,P_2+\nobreak P_4)$-free, it follows that~$\overline{H_1}$ contains at most one vertex outside this component, so $\overline{H_1} \in \{P_5,P_1+\nobreak P_5\}$.
If $\overline{H_1}=P_5$, then Theorem~\ref{thm:gi-classification2}.\ref{thm:gi-classification2:poly:gem} applies if $H_2=2P_1+\nobreak P_2 \ssi P_1+\nobreak P_4$ and Theorem~\ref{thm:gi-classification2}.\ref{thm:gi-classification2:poly:crossed-house} applies of $H_2=2P_1+\nobreak P_3$.
If $\overline{H_1}=P_1+\nobreak P_5$, then Open Problem~\ref{oprob:gi}.\ref{oprob:gi:diamond} applies if $H_2=2P_1+\nobreak P_2$ and Theorem~\ref{thm:gi-classification2}.\ref{thm:gi-classification2:hard:K4} applies if $H_2=2P_1+\nobreak P_3 \si 4P_1$.
This completes the proof for Case~\ref{case:contains-P5}.

\thmcase{\label{case:contains-P4}$\overline{H_1}$ contains~$P_4$ as an induced subgraph.}
By Case~\ref{case:not-linear-forest} we may assume that~$\overline{H_1}$ and~$H_2$ are linear forests and by Cases~\ref{case:K3} and~\ref{case:K4}, we may assume they each contain at least one non-trivial component.
By Case~\ref{case:contains-P5}, we may assume that~$\overline{H_1}$ is $P_5$-free, so it contains a component isomorphic to~$P_4$.
Since~$\overline{H_1}$ is not an induced subgraph of~$P_4$, it follows that~$\overline{H_1}$ contains at least one other component.
First consider the case when~$\overline{H_1}$ contains a non-trivial component apart from this~$P_4$, so $P_2+\nobreak P_4 \ssi \overline{H_1}$.
In this case Theorems~\ref{thm:gi-classification2}.\ref{thm:gi-classification2:hard:2P2} and~\ref{thm:gi-classification2}.\ref{thm:gi-classification2:hard:2P1+P2}, respectively, imply that~$H_2$ is $2P_2$-free and $(2P_1+\nobreak P_2)$-free.
Since~$H_2$ is $2P_2$-free, it has one non-trivial component, which must be isomorphic to~$P_t$ for some $2 \leq t \leq 4$.
Since~$H_2$ is $(2P_1+\nobreak P_2)$-free, it follows that~$H_2$ is an induced subgraph of~$P_1+\nobreak P_3$ or~$P_4$, a contradiction.
We conclude that~$\overline{H_1}$ cannot contain any non-trivial components apart from the~$P_4$ and so $\overline{H_1}=tP_1+\nobreak P_4$ for some $t \geq 1$.
If $t \geq 2$, then by Theorem~\ref{thm:gi-classification2}.\ref{thm:gi-classification2:hard:K3} we may assume that~$H_2$ is $3P_1$-free.
Since~$H_2$ is a linear forest that is not an induced subgraph of~$P_4$, this implies that $H_2=2P_2$, in which case Theorem~\ref{thm:gi-classification2}.\ref{thm:gi-classification2:hard:2P2} applies.
We may therefore assume that $t=1$ and so $\overline{H_1}=P_1+\nobreak P_4$.
By symmetry, if~$H_2$ contains a~$P_4$ as an induced subgraph, then we may assume $H_2=P_1+\nobreak P_4$, in which case Theorem~\ref{thm:gi-classification2}.\ref{thm:gi-classification2:poly:gem} applies.
We may therefore assume that~$H_2$ is $P_4$-free, so every component of~$H_2$ is isomorphic to~$P_1$, $P_2$ or~$P_3$.
By Theorems~\ref{thm:gi-classification2}.\ref{thm:gi-classification2:hard:K4}, \ref{thm:gi-classification2}.\ref{thm:gi-classification2:hard:2P1+P2} and~\ref{thm:gi-classification2}.\ref{thm:gi-classification2:hard:gem}, respectively, we may assume that~$H_2$ is $4P_1$-free, $2P_3$-free and $(P_1+\nobreak 2P_2)$-free.
Since~$H_2$ is $2P_3$-free, it contains at most one component isomorphic to~$P_3$.
Since~$H_2$ is $(P_1+\nobreak 2P_2)$-free, if it contains two non-trivial components, then it contains no other components.
In this case $H_2=2P_2\ssi P_5$ or $H_2=P_2+\nobreak P_3$, in which case Theorem~\ref{thm:gi-classification2}.\ref{thm:gi-classification2:poly:gem} or Open Problem~\ref{oprob:gi}.\ref{oprob:gi:gem}, respectively, applies.
We may therefore assume that~$H_2$ contains exactly one non-trivial component.
Since~$H_2$ is $4P_1$-free, but not an induced subgraph of $P_1+\nobreak P_3$, it follows that $H_2=2P_1+\nobreak P_2\ssi P_1+\nobreak P_4$ and so Theorem~\ref{thm:gi-classification2}.\ref{thm:gi-classification2:poly:gem} applies.
This completes the proof for Case~\ref{case:contains-P4}.

\thmcase{\label{case:contains-2P2}$\overline{H_1}$ contains~$2P_2$ as an induced subgraph.}
We may assume that~$H_2$ contains a non-trivial component, otherwise we reduce to Case~\ref{case:K3} or~\ref{case:K4}.
Furthermore, we may assume that~$H_2$ is a $P_4$-free linear forest, otherwise we reduce to Case~\ref{case:not-linear-forest} or~\ref{case:contains-P4}.
By Theorem~\ref{thm:gi-classification2}.\ref{thm:gi-classification2:hard:2P2}, we may assume that~$H_2$ is $(3P_1+\nobreak P_2,\allowbreak 2P_2)$-free.
Since~$H_2$ is $2P_2$-free, but contains at least one non-trivial component, it follows that~$H_2$ contains exactly one non-trivial component.
Furthermore, since~$H_2$ is $P_4$-free, this non-trivial component is isomorphic to either~$P_2$ or~$P_3$.
Since~$H_2$ is $(3P_1+\nobreak P_2)$-free, but not an induced subgraph of~$P_1+\nobreak P_3$, it follows that $H_2 \in \{2P_1+\nobreak P_2,2P_1+\nobreak P_3\}$.
By Theorems~\ref{thm:gi-classification2}.\ref{thm:gi-classification2:hard:K3} and~\ref{thm:gi-classification2}.\ref{thm:gi-classification2:hard:2P1+P2}, respectively, we may assume that~$\overline{H_1}$ is $(2P_1+\nobreak 2P_2,3P_2)$-free and $2P_3$-free.
Since~$\overline{H_1}$ is $3P_2$-free, it has at most two non-trivial components and since it contains~$2P_2$ as an induced subgraph, it must contain at least two non-trivial components.
Since~$\overline{H_1}$ is $2P_3$-free, its non-trivial components must either both be isomorphic to~$P_2$, or one of these components is isomorphic to~$P_2$ and the other to~$P_3$.
We may assume that~$\overline{H_1}$ has another component, otherwise $\overline{H_1} \ssi P_2+\nobreak P_3$, in which case Theorem~\ref{thm:gi-classification2}.\ref{thm:gi-classification2:poly:crossed-house} applies.
Since~$\overline{H_1}$ is $(2P_1+\nobreak 2P_2)$-free, it has at most one other component, which must be trivial.
We conclude that $\overline{H_1} \in \{P_1+\nobreak 2P_2, P_1+\nobreak P_2 + \nobreak P_3\}$.
By Theorem~\ref{thm:gi-classification2}.\ref{thm:gi-classification2:hard:K4}, we may assume that~$H_2$ is $4P_1$-free, so $H_2=2P_1+\nobreak P_2$.
Theorem~\ref{thm:gi-classification2}.\ref{thm:gi-classification2:poly:diamond} or Open Problem~\ref{oprob:gi}.\ref{oprob:gi:diamond} applies if $\overline{H_1}=2P_1+\nobreak P_2$ or $P_1+\nobreak P_2 + \nobreak P_3$, respectively.
This completes the proof for Case~\ref{case:contains-2P2}.

\medskip
\noindent
By Case~\ref{case:not-linear-forest} we may assume that~$\overline{H_1}$ and~$H_2$ are both linear forests.
By Cases~\ref{case:contains-P4} and~\ref{case:contains-2P2}, we may assume that they are both $(2P_2,P_4)$-free.
Since, $\overline{H_1}$ and~$H_2$ are $2P_2$-free, they each contain at most one non-trivial component.
Since they are $P_4$-free, any such non-trivial component must be isomorphic to~$P_2$ or~$P_3$.
Therefore~$\overline{H_1}$ and~$H_2$ must both be induced subgraphs of~$tP_1+\nobreak P_3$ for some $t \geq 1$.
In this case Theorem~\ref{thm:gi-classification2}.\ref{thm:gi-classification2:poly:P3+tP1-coP3+tP1} applies.
This completes the proof.\qed
\end{proof}

\section{Conclusions}\label{s-con}

By combining known and new results, we determined the complexity of {\sc Graph Isomorphism} in terms of polynomial-time solvability and \GI-completeness for $(H_1,H_2)$-free graphs for all but six pairs $(H_1,H_2)$.
This also led to a new class of $(H_1,H_2)$-free graphs whose clique-width is unbounded.
In particular, we developed a technique for showing polynomial-time solvability of {\sc Graph Isomorphism} for $(\overline{2P_1+P_3},H)$-free graphs, which we illustrated for the $H=P_2+\nobreak P_3$ and $H=P_5$ cases, thus completing the classification for $(\overline{2P_1+P_3},H)$-free graphs.
To obtain full dichotomies for the complexity of {\sc Graph Isomorphism} and the (un)boundedness of clique-width on $(H_1,H_2)$-free graphs, we need to solve the six remaining open cases for {\sc Graph Isomorphism} (see Open Problem~\ref{oprob:gi}) and five open cases for boundedness of clique-width (see Open Problem~\ref{oprob:twographs}).
We leave this as future work, but note that new techniques will be required to deal with these cases.

\bibliography{mybib}

\end{document}